\documentclass[11pt]{amsart}
\usepackage[margin=1in]{geometry}
\usepackage{amssymb,xcolor,colortbl,diagbox,comment,graphicx,mathpazo,cite}
\usepackage{MnSymbol}
\usepackage{color,float,fullpage,mathrsfs,mathtools}
\usepackage{enumerate,enumitem}
\usepackage[colorlinks=true, pdfstartview=FitV, linkcolor=blue, citecolor=blue, urlcolor=blue]{hyperref}
%Use the backref package to check if all references are cited
%\usepackage[pageref]{backref}
\usepackage{multirow,array}
\usepackage{makecell}
%\usepackage{showkeys}

%\usepackage{lineno}
%\linenumbers

%\pdfsuppresswarningpagegroup=1

%\def\bse{\begin{subequations}}
%\def\ese{\end{subequations}}
%\def\eq{\begin{equation}}
%\def\endeq{\end{equation}}
%\def\bbm{\begin{bmatrix}}
%\def\ebm{\end{bmatrix}}
\def\bpm{\begin{pmatrix}}
\def\epm{\end{pmatrix}}
\def\bvm{\begin{vmatrix}}
\def\evm{\end{vmatrix}}

\def\b#1{\overline{#1}}

\def\XXint#1#2#3{{\setbox0=\hbox{$#1{#2#3}{\int}$}
     \vcenter{\hbox{$#2#3$}}\kern-.5\wd0}}

\newtheorem{proposition}{Proposition}
\newtheorem{theorem}{Theorem}
\newtheorem{corollary}{Corollary}
\newtheorem{lemma}{Lemma}
\theoremstyle{definition}

%\numberwithin{equation}{section}
\makeatletter
\@addtoreset{equation}{section}
\makeatother

%\setlength\arraycolsep{1pt}

%%%%%%%%%%%%%%%%%%%%%%%%%%%%%%%%%%%%%%%%%%%%%%%%%%%%%%%%%%%%%%%%%%%%%%%%%%%%%%%%%%%%%%%%%%%%%
\title[Coupled Lakshmanan-Porsezian-Daniel equation]{Inverse scattering transform for the coupled Lakshmanan-Porsezian-Daniel equation with nonzero boundary conditions}

\author{Peng-Fei Han}
\address[P. F. Han]{}
\email{hanpf1995@163.com}
%\urladdr{}
\author{Ru-Suo Ye}
\address[R. S. Ye]{}
\email{rusuoye@163.com}
%\urladdr{}
\author{Yi Zhang$^{*}$}
\address[Corresponding author: Y. Zhang]{Department of Mathematics\\ Zhejiang Normal University\\ Jinhua 321004\\ People's Republic of China.}
\email{zy2836@163.com}
%\urladdr{}

\thanks{$*$ Corresponding author. E-mail address: zy2836@163.com(Yi Zhang)}

\dedicatory{Department of Mathematics, Zhejiang Normal University, Jinhua 321004, People's Republic of China.}

\keywords{Inverse scattering transform, Riemann-Hilbert problem, Nonzero boundary conditions, Coupled Lakshmanan Porsezian Daniel equation, Solitons}

%\subjclass[2010]{35C05, 35Q15, 35Q55, 37K10}

\date{\today}

%%%%%%%%%%%%%%%%%%%%%%%%%%%%%%%%%%%%%%%%%%%%%%%%%%%%%%%%%%%%%%%%%%%%%%%%%%%%%%%%%%%%%%%%%%%%%
\begin{document}

%%%%%%%%%%%%%%%%%%%%%%%%%%%%%%%%%%%%%%%%%%%%%%%%%%%%%%%%%%%%%%%%%%%%%%%%%%%%%%%%%%%%%%%%%%%%%
\begin{abstract}
The challenge of solving the initial value problem for the coupled Lakshmanan Porsezian Daniel equation, while considering nonzero boundary conditions at infinity, is addressed through the development of a suitable inverse scattering transform. Analytical properties of the Jost eigenfunctions are examined, along with the analysis of scattering coefficient characteristics. This analysis leads to the derivation of additional auxiliary eigenfunctions necessary for the comprehensive investigation of the fundamental eigenfunctions. Two symmetry conditions are discussed to study the eigenfunctions and scattering coefficients. These symmetry results are utilized to rigorously define the discrete spectrum and ascertain the corresponding symmetries of scattering datas. The inverse scattering problem is formulated by the Riemann-Hilbert problem. Then we can derive the exact solutions by coupled Lakshmanan Porsezian Daniel equation, the novel soliton solutions are derived and examined in detail.
\end{abstract}
\maketitle

\tableofcontents
%... if the paper becomes long

%%%%%%%%%%%%%%%%%%%%%%%%%%%%%%%%%%%%%%%%%%%%%%%%%%%%%%%%%%%%%%%%%%%%%%%%%%%%%%%%%%%%%%%%%%%%%

\section{Introduction}
\label{s:intro}

The inverse scattering transform (IST) was first proposed by Gardner, Greene, Kruskal and Miura to exactly analyze the initial-value problems for the famous Korteweg-de Vries equation with Lax pairs in 1967~\cite{1}. Since then, a lot of work has been done to extend this method to other integrable nonlinear systems characterized by Lax pairs~\cite{2}. In certain instances, these equations possess complete integrability, allowing the initial value problem to be potentially solvable through the application of the IST. The nonlinear Schr\"odinger (NLS) equation is a typical model for describing the propagation of nonlinear pulses in monomode fiber~\cite{3}
\begin{equation}\label{NLS}
\begin{split}
\mathrm{i}q_{t}+q_{xx}+2\beta|q|^{2}q=0,
\end{split}
\end{equation}
both in the focusing case $\beta=1$ and defocusing case $\beta=-1$. The NLS equation~\eqref{NLS} is generated as the envelope equation of the dispersive wave system, which is nearly monochromatic and weakly nonlinear in optical fiber system~\cite{4}. The IST for the defocusing NLS equation~\cite{B1} and the focusing NLS equation~\cite{B2} with nonzero boundary conditions (NBCS) at infinity were presented, and the focusing and defocusing Hirota equations were also studied~\cite{H1}.

The present work aims to study the coupled Lakshmanan Porsezian Daniel (LPD) equation~\cite{5}
\begin{equation}\label{LPD:0}
\begin{split}
\mathrm{i}\widetilde{\mathbf{q}}_{t}&+\widetilde{\mathbf{q}}_{xx}
+2\Vert\widetilde{\mathbf{q}}\Vert^{2}\widetilde{\mathbf{q}}+\sigma[\widetilde{\mathbf{q}}_{xxxx}
+2\Vert\widetilde{\mathbf{q}}_{x}\Vert^{2}\widetilde{\mathbf{q}}
+2(\widetilde{\mathbf{q}}_{x}^{\dagger}\widetilde{\mathbf{q}})\widetilde{\mathbf{q}}_{x}
+6(\widetilde{\mathbf{q}}^{\dagger}\widetilde{\mathbf{q}}_{x})\widetilde{\mathbf{q}}_{x}
+4\Vert\widetilde{\mathbf{q}}\Vert^{2}\widetilde{\mathbf{q}}_{xx} \\
&+4(\widetilde{\mathbf{q}}^{\dagger}\widetilde{\mathbf{q}}_{xx})\widetilde{\mathbf{q}}
+2(\widetilde{\mathbf{q}}_{xx}^{\dagger}\widetilde{\mathbf{q}})\widetilde{\mathbf{q}}
+6\Vert\widetilde{\mathbf{q}}\Vert^{4}\widetilde{\mathbf{q}}]=\mathbf{0},
\end{split}
\end{equation}
which can be used to describe the propagation of ultrashort pulse in the birefringent or two-mode fiber, where $\widetilde{\mathbf{q}}=\widetilde{\mathbf{q}}(x,t)=(\widetilde{q}_{1}(x,t),\widetilde{q}_{2}(x,t))^{T}$, the parameter $\sigma$ is a real constant. It is convenient to consider the coupled LPD equation with the NZBC, namely
\begin{equation}\label{1.1}
\begin{split}
\mathrm{i}\mathbf{q}_{t}&+\mathbf{q}_{xx}+2(\Vert\mathbf{q}\Vert^{2}-q_{0}^{2})\mathbf{q}
+\sigma[\mathbf{q}_{xxxx}+2\Vert\mathbf{q}_{x}\Vert^{2}\mathbf{q}
+2(\mathbf{q}_{x}^{\dagger}\mathbf{q})\mathbf{q}_{x}
+6(\mathbf{q}^{\dagger}\mathbf{q}_{x})\mathbf{q}_{x}+4\Vert\mathbf{q}\Vert^{2}\mathbf{q}_{xx} \\
&+4(\mathbf{q}^{\dagger}\mathbf{q}_{xx})\mathbf{q}+2(\mathbf{q}_{xx}^{\dagger}\mathbf{q})\mathbf{q}
+6(\Vert\mathbf{q}\Vert^{4}-q_{0}^{4})\mathbf{q}]=\mathbf{0},
\end{split}
\end{equation}
obtained via the simple change of dependent variable $\widetilde{\mathbf{q}}(x,t)=\mathbf{q}(x,t)\mathrm{e}^{2\mathrm{i}q_{0}^{2}t+6\mathrm{i}\sigma q_{0}^{4}t}$. The corresponding NBCS at infinity are
\begin{equation}\label{1.2}
\begin{split}
\lim_{x\rightarrow\pm\infty}{\mathbf{q}}(x,t)=\mathbf{q}_{\pm}=\mathbf{q}_{0}
\mathbf{e}^{\mathrm{i}\delta_{\pm}},
\end{split}
\end{equation}
where $\mathbf{q}=\mathbf{q}(x,t)=(q_{1}(x,t),q_{2}(x,t))^{T}$ and $\mathbf{q}_{0}$ are two-component vectors, $q_{0}=\Vert \mathbf{q}_{0} \Vert$, with $\delta_{\pm}$ are real numbers. Based on the $N$-fold Darboux transformation, the higher-order rogue waves and multi-breather of the coupled LPD equation~\eqref{1.1} were constructed~\cite{5}. One soliton, bound-state two soliton solutions~\cite{6} and the conservation laws of the coupled LPD equation~\eqref{1.1} were obtained with the aid of the Darboux transformation.

When $q_{2}(x,t)=0$, it is obviously that Eq.~\eqref{1.1} may be reduced into the higher-order NLS equation~\cite{7}:
\begin{equation}\label{1.4}
\begin{split}
\mathrm{i}q_{t}+q_{xx}+2|q|^{2}q+\sigma(q_{xxxx}+4|q_{x}|^{2}q+6q_{x}^{2}q^{*}+8q_{xx}|q|^{2}
+2q^{2}q_{xx}^{*}+6|q|^{4}q)=0,
\end{split}
\end{equation}
with $q=q(x,t)$ denotes the complex envelope and the parameter $\sigma$ is a real constant. The breather solutions and interaction solutions for the higher-order NLS equation~\cite{7} were derived, and these three types of solutions can be transformed into the nonpulsating soliton solutions.

In 1973, Zakharov and Shabat extended the application of IST to some NBCS including the vector NLS equations~\cite{8,9} and made remarkable progress. The vector NLS equations~\cite{B3} serve as universal frameworks to describe the evolution of weakly nonlinear dispersive wave trains. They describe the change of the wave function of the particle with time and space and the energy of the particle~\cite{B4}, the solutions can be used to describe the trajectory of the particle. Shortly thereafter, the IST technique for the defocusing vector NLS equation with NBCS at infinity was completely developed and some mixed solutions of dark solitons and bright solitons were presented in~\cite{10,B5}. The IST for square matrix NLS system~\cite{11} with NBCS was also accomplished. The Riemann-Hilbert approach for higher-order nonlinear evolution equations~\cite{H2,H3} with zero boundary conditions at infinity were also accomplished. By establishing an appropriate IST, the initial value problem of focusing~\cite{12} and defocusing~\cite{13} Manakov systems with NBCS at infinity were studied. Dark-dark solitons, bright-bright, breather-breather and their interactions of the coupled Gerdjikov-Ivanov equation~\cite{B6} with NBCS were presented based on the IST.

With advancements over the past fifteen years, the IST has experienced a notable resurgence, giving rise to a proliferation of research papers and practical implementations~\cite{14}. The research also extends to studying the complex physical characteristics of plasma physics~\cite{H4}, solid-state physics~\cite{H5}, quantum mechanics~\cite{H6}, and their relationships with rogue waves~\cite{17,18,H7}. This paper aims to extend the IST to the coupled LPD equation and study the analytical properties and asymptotic behavior of the Jost eigenfunction. A framework is established for obtaining reflectionless solutions corresponding to any number of simple zeros of the analytical scattering coefficient. For the majority of partial differential equations, solitons correspond to the zeros of the analytical scattering coefficient. Here, we utilize the precise formulation of the IST mentioned earlier to establish a coherent approach for deriving solutions associated with the simple zero of the analytical scattering coefficients. We derive the simple poles within the context of the Riemann-Hilbert (RH) problem and construct these soliton solutions using the precise formulation of the previously mentioned IST.

The process is arranged as follows. In Section 2, we formulate the Jost solutions, scattering matrix, auxiliary eigenfunctions and symmetries. In Section 3, the discrete spectrum and symmetries of the norming constants and the asymptotic behavior are studied by calculation. In Section 4, we formulate the RH problem and trace formulae. In Section 5, we obtain the exact soliton solutions under the condition of reflectionless potentials, and Section 6 contains a final discussion.

\section{Direct scattering problem}
\label{s:Dsp}

\subsection{Uniformization}
\label{s:Dsp:uniform}

The coupled LPD equation~\eqref{1.1} possess Lax pairs representation, which can be expressed as follows:
\begin{equation}\label{2.1}
\begin{split}
\psi_{x}&=\mathbf{A}(x,t,k)\psi, \quad \psi_{t}=\mathbf{E}(x,t,k)\psi,
\end{split}
\end{equation}
while $\psi=\psi(x,t)$ is the vector eigenfunction, the matrices $\mathbf{A}(x,t,k)$ and $\mathbf{E}(x,t,k)$ are written as
\begin{equation}\label{2.2}
\begin{split}
\mathbf{A}=\mathbf{A}(x,t,k)&=-\mathrm{i}k\mathbf{J}+\mathbf{Q}, \\
\mathbf{E}=\mathbf{E}(x,t,k)&=2k[\sigma(\mathbf{Q}_{x}\mathbf{Q}-\mathbf{Q}\mathbf{Q}_{x})
-2\sigma\Vert\mathbf{q}\Vert^{2}\mathbf{Q}-\sigma\mathbf{Q}_{xx}-\mathbf{Q}]
+\mathrm{i}q_{0}^{2}\mathbf{J}+3\mathrm{i}\sigma q_{0}^{4}\mathbf{J}  \\
&+4\mathrm{i}k^{2}[\mathbf{J}_{1}+\sigma\mathbf{J}(\mathbf{Q}_{x}-\mathbf{Q}^{2})]
+8\sigma k^{3}\mathbf{Q}-8\mathrm{i}\sigma k^{4}\mathbf{J}
+3\mathrm{i}\sigma\mathbf{J}(\mathbf{Q}_{x}\mathbf{Q}^{2}+\mathbf{Q}^{2}\mathbf{Q}_{x}) \\
&+\mathrm{i}\sigma\mathbf{J}(\mathbf{Q}_{xx}\mathbf{Q}+\mathbf{Q}\mathbf{Q}_{xx}
-\mathbf{Q}_{xxx}-\mathbf{Q}_{x}^{2})
+\mathrm{i}[1+3\sigma\Vert\mathbf{q}\Vert^{2}]\mathbf{J}\mathbf{Q}^{2}
-\mathrm{i}\mathbf{J}\mathbf{Q}_{x},
\end{split}
\end{equation}
with
\begin{equation}\label{2.3}
\begin{split}
\mathbf{J}=\begin{bmatrix}
    1 & \mathbf{0_{1\times2}}  \\
    \mathbf{0_{2\times1}} & -\mathbf{I_{2\times2}}
  \end{bmatrix}, \quad
\mathbf{J}_{1}=\begin{bmatrix}
    1 & \mathbf{0_{1\times2}}  \\
    \mathbf{0_{2\times1}} & \mathbf{0_{2\times2}}
  \end{bmatrix}, \quad
\mathbf{Q}=\mathbf{Q}(x,t)=\begin{bmatrix}
    0 & \mathbf{q}^{\dagger}   \\
    -\mathbf{q} & \mathbf{0_{2\times2}}
  \end{bmatrix},
\end{split}
\end{equation}
where $k$ is the spectral parameter, and $\dagger$ stands for the Hermitian transpose. It can be readily determined that the coupled LPD equation~\eqref{1.1} can be derived through the examination of the following condition~\cite{A1}
\begin{equation}\label{2.4}
\begin{split}
\mathbf{A}_{t}-\mathbf{E}_{x}+[\mathbf{A},\mathbf{E}]=\mathbf{0},
\end{split}
\end{equation}
the compatibility condition of $\psi_{xt}=\psi_{tx}$ can be confirmed by performing direct calculations and observing that $\mathbf{J}\mathbf{Q}=-\mathbf{Q}\mathbf{J}$.

To define the Jost eigenfunctions as $x\to\pm\infty$, the spatial and temporal evolution of the solutions of the Lax pairs will be asymptotically
\begin{equation}\label{2.5}
\begin{split}
\psi_{x}=\mathbf{A}_{\pm}\psi, \quad \psi_{t}=\mathbf{E}_{\pm}\psi,
\end{split}
\end{equation}
with
\begin{equation}\label{2.6}
\begin{split}
\lim_{x\rightarrow\pm\infty}{\mathbf{A}}=\mathbf{A}_{\pm}&=-\mathrm{i}k\mathbf{J}+\mathbf{Q}_{\pm}, \\
\lim_{x\rightarrow\pm\infty}{\mathbf{E}}=\mathbf{E}_{\pm}&=
(-8\mathrm{i}\sigma k^{4}+\mathrm{i}q_{0}^{2}+3\mathrm{i}\sigma q_{0}^{4})\mathbf{J}
+8\sigma k^{3}\mathbf{Q}_{\pm}+4\mathrm{i}k^{2}(\mathbf{J}_{1}-\sigma\mathbf{J}\mathbf{Q}_{\pm}^{2})\\
&-2k(2\sigma q_{0}^{2}\mathbf{Q}_{\pm}+\mathbf{Q}_{\pm})+\mathrm{i}(1+3\sigma q_{0}^{2})\mathbf{J}\mathbf{Q}_{\pm}^{2}.
\end{split}
\end{equation}
The eigenvalues of $\mathbf{A}_{\pm}$ and $\mathbf{E}_{\pm}$ are as follows
\begin{equation}\label{2.7}
\begin{split}
\mathbf{A}_{\pm,1}&=\mathrm{i}k, \quad \mathbf{E}_{\pm,1}=8\mathrm{i}k^{4}\sigma-\mathrm{i}q_{0}^{2}-3\mathrm{i}\sigma q_{0}^{4}, \\
\mathbf{A}_{\pm,2,3}&=\pm\mathrm{i}\lambda, \quad
\mathbf{E}_{\pm,2,3}=2\mathrm{i}k^{2} \pm 2\mathrm{i}\lambda(k-4k^{3}\sigma+2k\sigma q_{0}^{2}).
\end{split}
\end{equation}

Given the doubly branched nature of the eigenvalues, we introduce the two-sheeted Riemann surface so that $\lambda(k)$ a single-valued function~\cite{B2}
\begin{equation}\label{2.8}
\begin{split}
\lambda^{2}=k^{2}+q_{0}^{2}.
\end{split}
\end{equation}
Letting
\begin{equation}\label{2.9}
\begin{split}
k+\mathrm{i}q_{0}=\kappa_{1}\mathrm{e}^{\mathrm{i}\vartheta_{1}}, \quad
k-\mathrm{i}q_{0}=\kappa_{2}\mathrm{e}^{\mathrm{i}\vartheta_{2}}, \quad
-\frac{\pi}{2}<\vartheta_{1},\vartheta_{2}<\frac{3\pi}{2},
\end{split}
\end{equation}
two single-valued analytic functions are obtained
\begin{equation}\label{2.10}
\begin{split}
\lambda(k)=\left\{\begin{array}{ll}
\sqrt{\kappa_{1}\kappa_{2}}\, \mathrm{e}^{\mathrm{i}(\vartheta_{1}+\vartheta_{2})/2}, & \text { on } S_{1}, \\
-\sqrt{\kappa_{1}\kappa_{2}}\, \mathrm{e}^{\mathrm{i}(\vartheta_{1}+\vartheta_{2})/2}, & \text { on } S_{2},
\end{array}\right.
\end{split}
\end{equation}
the related properties of the two-sheeted Riemann surface can be referred to~\cite{B2}.

Next, we proceed to define the uniformization variable
\begin{equation}\label{2.11}
\begin{split}
\lambda=z-k,
\end{split}
\end{equation}
the corresponding inverse map is given by
\begin{equation}\label{2.12}
\begin{split}
k=\frac{1}{2}(z-\frac{q_{0}^{2}}{z}), \quad \lambda=\frac{1}{2}(z+\frac{q_{0}^{2}}{z}).
\end{split}
\end{equation}
Denoting
\begin{equation}\label{2.15}
\begin{split}
U^{+}&=\left\{z \in \mathbb{C}:\left(|z|^{2}-q_{o}^{2}\right) \mathfrak{Im} z>0\right\}, \quad
U^{-}=\left\{z \in \mathbb{C}:\left(|z|^{2}-q_{o}^{2}\right) \mathfrak{Im} z<0\right\}.
\end{split}
\end{equation}

\subsection{Jost eigenfunctions, scattering matrix and analyticity}
\label{s:Dsp:Jost}

On the complex $z$-plane, the continuous spectrum consists is $\Sigma=\mathbb{R}\cup C_{0}$, where $C_{0}$ is a circle with the origin as the center and the radius $q_{0}$ in the complex $z$-plane. Define the following properties for any two-component vector
\begin{equation}\label{2.16}
\begin{split}
\mathbf{v}=(v_{1},v_{2})^{T}, \quad  \mathbf{v}^{\perp}=(v_{2},-v_{1})^{\dagger}.
\end{split}
\end{equation}

The eigenvector matrix is given below
\begin{equation}\label{2.17}
\begin{split}
\mathbf{Y}_{\pm}=\mathbf{Y}_{\pm}(z)=\begin{bmatrix}
    1 & 0 &  -\displaystyle{\frac{\mathrm{i}q_{0}}{z}}  \\
    -\displaystyle{\frac{\mathrm{i}}{z}\mathbf{q}_{\pm}} &
    \displaystyle{\frac{\mathbf{q}_{\pm}^{\perp}}{q_{0}}} &
     \displaystyle{\frac{\mathbf{q}_{\pm}}{q_{0}}}
  \end{bmatrix}, \quad  \operatorname{det}\mathbf{Y}_{\pm}(z)=1+\frac{q_{0}^{2}}{z^{2}}:=\rho(z),
\end{split}
\end{equation}
where
\begin{equation}\label{2.18}
\begin{split}
 \quad
\mathbf{Y}_{\pm}^{-1}(z)=\frac{1}{\rho(z)}\begin{bmatrix}
    1 & \displaystyle{\frac{\mathrm{i}\mathbf{q}_{\pm}^{\dagger}}{z}}  \\
    0 & \displaystyle{\frac{\rho(z)}{q_{0}}(\mathbf{q}_{\pm}^{\perp})^{\dagger}}  \\
    \displaystyle{\frac{\mathrm{i}q_{0}}{z}} &
    \displaystyle{\frac{\mathbf{q}_{\pm}^{\dagger}}{q_{0}}}  \\
  \end{bmatrix}.
\end{split}
\end{equation}

It can be directly computed that $\mathbf{A}_{\pm}$ and $\mathbf{E}_{\pm}$~\eqref{2.6} satisfy the following conditions
\begin{equation}\label{2.19}
\begin{split}
\mathbf{A}_{\pm}\mathbf{Y}_{\pm}&=\mathrm{i}\mathbf{Y}_{\pm}\mathbf{\Lambda}_{1}, \quad
\mathbf{E}_{\pm}\mathbf{Y}_{\pm}=-\mathrm{i}\mathbf{Y}_{\pm}\mathbf{\Lambda}_{2},
\end{split}
\end{equation}
where
\begin{equation}\label{2.20}
\begin{split}
\mathbf{\Lambda}_{1}&=\operatorname{diag}(-\lambda,k,\lambda), \\
\mathbf{\Lambda}_{2}&=\operatorname{diag}\left(-2\lambda(k-4k^{3}\sigma+2k\sigma q_{0}^{2})-2k^{2}, 3\sigma q_{0}^{4}+q_{0}^{2}-8k^{4}\sigma, 2\lambda(k-4k^{3}\sigma+2k\sigma q_{0}^{2})-2k^{2} \right),
\end{split}
\end{equation}
and satisfy $[\mathbf{A}_{\pm},\mathbf{E}_{\pm}]=\mathbf{0}$, the Jost eigenfunctions $\psi(x,t,z)$ of the Lax pairs on $z\in\Sigma$ satisfying the boundary conditions
\begin{equation}\label{2.21}
\begin{split}
\psi_{\pm}=\psi_{\pm}(x,t,z)=\mathbf{Y}_{\pm}(z) \mathbf{e}^{\mathrm{i}\mathbf{\Omega}(x,t,z)}+o(1), \quad x\rightarrow\pm\infty,
\end{split}
\end{equation}
with
\begin{equation}\label{2.22}
\begin{split}
\mathbf{\Omega}&=\mathbf{\Omega}(x,t,z)=\operatorname{diag}(\delta_{1},\delta_{2},\delta_{3}), \quad
\delta_{1}=-\lambda x+2(k^{2}+k\lambda-4k^{3}\lambda\sigma+2k\lambda\sigma q_{0}^{2})t, \\
\delta_{2}&=kx+(8k^{4}\sigma-3\sigma q_{0}^{4}-q_{0}^{2})t, \quad
\delta_{3}=\lambda x+2(k^{2}-k\lambda+4k^{3}\lambda\sigma-2k\lambda\sigma q_{0}^{2})t.
\end{split}
\end{equation}

As usual, we introduce modified eigenfunctions
\begin{equation}\label{2.23}
\begin{split}
\nu_{\pm}=\nu_{\pm}(x,t,z)=\psi_{\pm} \mathbf{e}^{-\mathrm{i}\mathbf{\Omega}},
\end{split}
\end{equation}
so that
\begin{equation}\label{2.24}
\begin{split}
\lim_{x\rightarrow\pm\infty}{\nu_{\pm}}=\mathbf{Y}_{\pm}.
\end{split}
\end{equation}
We rewrite the Lax pairs~\eqref{2.1} as
\begin{equation}\label{2.25}
\begin{split}
(\psi_{\pm})_{x}&=\mathbf{A}_{\pm}\psi_{\pm}+\Delta\mathbf{A}_{\pm}\psi_{\pm}, \quad
(\psi_{\pm})_{t}=\mathbf{E}_{\pm}\psi_{\pm}+\Delta\mathbf{E}_{\pm}\psi_{\pm},
\end{split}
\end{equation}
where $\Delta\mathbf{A}_{\pm}=\mathbf{A}-\mathbf{A}_{\pm}$ and $\Delta\mathbf{E}_{\pm}=\mathbf{E}-\mathbf{E}_{\pm}$, the Lax pairs~\eqref{2.25} can be written
as
\begin{subequations}\label{2.26}
\begin{align}
(\mathbf{Y}_{\pm}^{-1}\nu_{\pm})_{x}&=
[\mathrm{i}\mathbf{\Omega}_{x},\mathbf{Y}_{\pm}^{-1}\nu_{\pm}]
+\mathbf{Y}_{\pm}^{-1}\Delta\mathbf{A}_{\pm}\nu_{\pm} ,  \\
(\mathbf{Y}_{\pm}^{-1}\nu_{\pm})_{t}&=
[\mathrm{i}\mathbf{\Omega}_{t},\mathbf{Y}_{\pm}^{-1}\nu_{\pm}]
+\mathbf{Y}_{\pm}^{-1}\Delta\mathbf{E}_{\pm}\nu_{\pm},
\end{align}
\end{subequations}
where $\mathbf{\Omega}_{x}=\mathbf{\Lambda}_{1}$ and $\mathbf{\Omega}_{t}=-\mathbf{\Lambda}_{2}$. The system~\eqref{2.26} can be written in the full derivative form
\begin{equation}\label{2.27}
\begin{split}
\mathrm{d}[\mathbf{e}^{-\mathrm{i}\mathbf{\Omega}}\mathbf{Y}_{\pm}^{-1}\nu_{\pm}
\mathbf{e}^{\mathrm{i}\mathbf{\Omega}}]&=
\mathbf{e}^{-\mathrm{i}\mathbf{\Omega}}[\sigma_{1}\mathrm{d}x+\sigma_{2}\mathrm{d}t]
\mathbf{e}^{\mathrm{i}\mathbf{\Omega}},
\end{split}
\end{equation}
with
\begin{equation}\label{2.28}
\begin{split}
\sigma_{1}=\mathbf{Y}_{\pm}^{-1}\Delta\mathbf{A}_{\pm}\nu_{\pm}, \quad
\sigma_{2}=\mathbf{Y}_{\pm}^{-1}\Delta\mathbf{E}_{\pm}\nu_{\pm}.
\end{split}
\end{equation}

It is verified that the spectral problem about $\nu_{\pm}(x,t,z)$ is equivalent to the Volterra integral equations
\begin{equation}\label{2.29}
\begin{split}
\nu_{\pm}(x,t,z)&=\mathbf{Y}_{\pm}(z)+\int_{\pm\infty}^{x}\mathbf{Y}_{\pm}(z)
\mathbf{e}^{\mathrm{i}(x-r)\mathbf{\Lambda}_{1}(z)}\sigma_{3}
\mathbf{e}^{-\mathrm{i}(x-r)\mathbf{\Lambda}_{1}(z)}\mathrm{d}r,
\end{split}
\end{equation}
with
\begin{equation}\label{2.30}
\begin{split}
\sigma_{3}&=\mathbf{Y}_{\pm}^{-1}(z)\Delta\mathbf{A}_{\pm}(r,t)\nu_{\pm}(r,t,z).
\end{split}
\end{equation}
Through a rigorous approach, the Jost solutions can be precisely defined as the solutions to the Volterra integral equations~\eqref{2.29}. The following two theorems can be proved by referring to~\cite{12}.

\begin{theorem}\label{thm:1}
Suppose that $\mathbf{q}(\cdot, t)-\mathbf{q}_{-} \in L^{1}(-\infty, a)$ $(\mathbf{q}(\cdot, t)-\mathbf{q}_{+} \in L^{1}(a, \infty))$ for any constant $a\in\mathbb{R}$, the subsequent columns of $\nu_{-}(x,t,z)$ $(\nu_{+}(x,t,z))$ satisfy the properties
\begin{subequations}\label{2.31}
\begin{align}
\nu_{-,1} & \;:\; z\in\mathbb{U}_{1}, \quad  \nu_{-,2} \;:\; \mathfrak{Im}z<0, \quad
\nu_{-,3} \;:\; z\in\mathbb{U}_{4},  \\
\nu_{+,1} & \;:\; z\in\mathbb{U}_{2}, \quad  \nu_{+,2} \;:\; \mathfrak{Im}z>0, \quad
\nu_{+,3} \;:\; z\in\mathbb{U}_{3},
\end{align}
\end{subequations}
where $\mathbb{U}_{1}$, $\mathbb{U}_{2}$, $\mathbb{U}_{3}$ and $\mathbb{U}_{4}$ are
\begin{subequations}\label{2.32}
\begin{align}
\mathbb{U}_{1}&\equiv\{ \,z\;|\; \mathfrak{Im}z>0  \; \text{and} \; |z|>q_{0} \, \}, \quad
\mathbb{U}_{2}\equiv\{ \,z\;|\; \mathfrak{Im}z<0 \; \text{and} \; |z|>q_{0} \, \},  \\
\mathbb{U}_{3}&\equiv\{ \,z\;|\; \mathfrak{Im}z<0 \; \text{and} \; |z|<q_{0} \, \}, \quad
\mathbb{U}_{4}\equiv\{ \,z\;|\; \mathfrak{Im}z>0 \; \text{and} \; |z|<q_{0} \, \}.
\end{align}
\end{subequations}
\end{theorem}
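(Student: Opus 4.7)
The plan is to reduce the statement to a standard uniform convergence argument for the Neumann series of the Volterra equation~\eqref{2.29}, carried out one column at a time. Concretely, I would (i) write each column of~\eqref{2.29} as a scalar Volterra equation, (ii) identify the open set on which all exponential kernels are uniformly bounded, (iii) iterate and invoke the $L^{1}$ hypothesis on $\mathbf{q}(\cdot,t)-\mathbf{q}_{\pm}$ to obtain a factorial bound on the $n$-th iterate, and (iv) conclude analyticity via the Weierstrass theorem applied to the termwise-analytic iterates.

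For step (ii), writing out the $(i,j)$-entry of $\mathrm{e}^{\mathrm{i}(x-r)\mathbf{\Lambda}_{1}}\sigma_{3}\mathrm{e}^{-\mathrm{i}(x-r)\mathbf{\Lambda}_{1}}$ introduces the scalar factor $\mathrm{e}^{\mathrm{i}(x-r)(\mathbf{\Lambda}_{1,i}-\mathbf{\Lambda}_{1,j})}$. The pairwise differences of the diagonal entries of $\mathbf{\Lambda}_{1}=\diag(-\lambda,k,\lambda)$ are, up to sign, $z=\lambda+k$, $2\lambda$, and $q_{0}^{2}/z=\lambda-k$. For $\nu_{-,j}$ the integration range has $x-r\geq 0$, so uniform boundedness demands $\mathfrak{Im}(\mathbf{\Lambda}_{1,i}-\mathbf{\Lambda}_{1,j})\geq 0$ for every $i\neq j$; for $\nu_{+,j}$ the inequalities reverse. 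Combined with the identity $\mathfrak{Im}\,\lambda(z)=\tfrac{1}{2}\,\mathfrak{Im}(z)\,(1-q_{0}^{2}/|z|^{2})$, a short case check on $j=1,2,3$ reproduces precisely the regions $\mathbb{U}_{1}$, $\{\mathfrak{Im}\,z<0\}$, $\mathbb{U}_{4}$ for $\nu_{-,j}$ and $\mathbb{U}_{2}$, $\{\mathfrak{Im}\,z>0\}$, $\mathbb{U}_{3}$ for $\nu_{+,j}$ stated in the theorem.

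For step (iii), on any compact subset of the relevant region the exponential factors are uniformly bounded by some constant $C$, and the standard iterated-simplex estimate controls the $n$-th Neumann iterate in norm by $\tfrac{1}{n!}\bigl(C\int_{\pm\infty}^{x}\|\mathbf{q}(r,t)-\mathbf{q}_{\pm}\|\,\mathrm{d}r\bigr)^{n}$, which is summable by the $L^{1}$ hypothesis. Each iterate is patently analytic in $z$ on the claimed region, being a polynomial in $z$, $1/z$, and $\lambda(z)$ composed with absolutely convergent $r$-integrals that introduce no further singularities, so the Weierstrass $M$-test transfers analyticity to the sum.

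The main obstacle is the interplay between $\mathfrak{Im}\,z$, $\mathfrak{Im}\,\lambda(z)$, and the modulus $|z|$ near the circle $C_{0}=\{|z|=q_{0}\}$, on which $\lambda$ is real; this is precisely what forces the splitting into the four quadrants $\mathbb{U}_{1},\ldots,\mathbb{U}_{4}$ rather than simple half-planes, and it makes the bookkeeping in step (ii) the only nontrivial piece of the argument. A secondary technical point is that on $C_{0}$ one only obtains boundedness, not decay, of the exponentials; however boundedness is all the Volterra iteration needs, so the argument extends continuously to $\Sigma$ away from the genuine branch points $\pm\mathrm{i}q_{0}$.
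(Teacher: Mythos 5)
Your proposal is correct and follows exactly the standard route the paper relies on: the paper gives no proof of Theorem~\ref{thm:1} but defers to the Manakov-system reference, whose argument is precisely your columnwise Neumann-series iteration of the Volterra equation~\eqref{2.29}, with the regions $\mathbb{U}_{1},\ldots,\mathbb{U}_{4}$ determined by the signs of $\mathfrak{Im}\,z$, $\mathfrak{Im}(q_{0}^{2}/z)$ and $\mathfrak{Im}\,\lambda=\tfrac{1}{2}\mathfrak{Im}(z)(1-q_{0}^{2}/|z|^{2})$, and the factorial simplex bound plus uniform convergence of analytic iterates. Your case check of the exponent differences $\pm z$, $\pm 2\lambda$, $\pm q_{0}^{2}/z$ reproduces the stated domains correctly, so there is nothing to add.
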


Equation~\eqref{2.23} suggests that the columns of $\psi_{\pm}$ also possess the same analyticity and boundedness. If $\psi$ is an arbitrary solution of the Lax pairs~\eqref{2.1}, we have
\begin{equation}\label{2.35}
\begin{split}
\partial_{x}[\operatorname{det}\psi_{\pm}]&=\operatorname{tr}\mathbf{A} \operatorname{det}\psi_{\pm}, \quad
\partial_{t}[\operatorname{det}\psi_{\pm}]=\operatorname{tr}\mathbf{E} \operatorname{det}\psi_{\pm},
\end{split}
\end{equation}
where
\begin{equation}\label{2.36}
\begin{split}
\operatorname{tr}\mathbf{A}=\mathrm{i}k, \quad
\operatorname{tr}\mathbf{E}=\mathrm{i}(8k^{4}\sigma+4k^{2}-3\sigma q_{0}^{4}-q_{0}^{2}).
\end{split}
\end{equation}
Abel's theorem yields
\begin{equation}\label{2.37}
\begin{split}
\partial x[\operatorname{det}\nu_{\pm}]&=
\partial x[\operatorname{det}(\psi_{\pm}\mathbf{e}^{-\mathrm{i}\mathbf{\Omega}})]=0, \quad
\partial t[\operatorname{det}\nu_{\pm}]=
\partial t[\operatorname{det}(\psi_{\pm}\mathbf{e}^{-\mathrm{i}\mathbf{\Omega}})]=0.
\end{split}
\end{equation}
Then~\eqref{2.21} implies
\begin{equation}\label{2.38}
\begin{split}
\operatorname{det}\psi_{\pm}=\rho(z)\mathrm{e}^{\mathrm{i}(\delta_{1}
+\delta_{2}+\delta_{3})}, \quad z\in\Sigma \backslash \{\pm\mathrm{i}q_{0}\}.
\end{split}
\end{equation}

The relevant scattering matrix $\mathbf{S}(z)$ and $\mathbf{H}(z)$ can be defined by
\begin{equation}\label{2.39}
\begin{split}
\psi_{-}=\psi_{+}\mathbf{S}(z), \quad \mathbf{H}(z):=\mathbf{S}^{-1}(z), \quad z\in\Sigma \backslash \{\pm\mathrm{i}q_{0}\}.
\end{split}
\end{equation}
Moreover,~\eqref{2.38} and~\eqref{2.39} imply
\begin{equation}\label{2.40}
\begin{split}
\hbox{det} \;\mathbf{S}(z)=1, \quad z\in\Sigma \backslash \{\pm\mathrm{i}q_{0}\}.
\end{split}
\end{equation}

The modified eigenfunctions $\nu_{\pm}(x,t,z)$ are bounded for $x\in\mathbb{R}$ and for all $z$ within the interior of their respective domains of analyticity
\begin{subequations}\label{2.41}
\begin{align}
\nu_{-}&=\mathbf{Y}_{+}\mathbf{e}^{\mathrm{i}\mathbf{\Omega}}\mathbf{S}(z)
\mathbf{e}^{-\mathrm{i}\mathbf{\Omega}}+o(1), \quad (x\rightarrow+\infty), \\
\nu_{+}&=\mathbf{Y}_{-}\mathbf{e}^{\mathrm{i}\mathbf{\Omega}}\mathbf{H}(z)
\mathbf{e}^{-\mathrm{i}\mathbf{\Omega}}+o(1), \quad (x\rightarrow-\infty).
\end{align}
\end{subequations}

\begin{theorem}\label{thm:2}
Under the same hypotheses as in Theorem~\ref{thm:1}, the scattering coefficients satisfy the following properties
\begin{subequations}\label{2.42}
\begin{align}
s_{11}(z) & \;:\; z\in\mathbb{U}_{1}, \quad  s_{22}(z) \;:\; \mathfrak{Im}z<0, \quad
s_{33}(z) \;:\; z\in\mathbb{U}_{4},  \\
h_{11}(z) & \;:\; z\in\mathbb{U}_{2}, \quad  h_{22}(z) \;:\; \mathfrak{Im}z>0, \quad
h_{33}(z) \;:\; z\in\mathbb{U}_{3}.
\end{align}
\end{subequations}
\end{theorem}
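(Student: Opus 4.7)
The plan is to represent each diagonal scattering coefficient as a three-column Wronskian of Jost eigenfunctions and then read off its analyticity from Theorem~\ref{thm:1}, bringing in the auxiliary eigenfunctions of this section wherever the naive intersection of analyticity regions is too small. Starting from the scattering relation $\psi_{-}=\psi_{+}\mathbf{S}(z)$ on $\Sigma\setminus\{\pm\mathrm{i}q_{0}\}$, Cramer's rule yields
\[
s_{jj}(z)\det\psi_{+} = \det\bigl[\psi_{+,1},\ldots,\psi_{-,j},\ldots,\psi_{+,3}\bigr],\qquad j=1,2,3,
\]
with the analogous formulas for $h_{jj}(z)$ after swapping the roles of $\psi_{\pm}$ (the replaced column sitting in the $j$th slot). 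By~\eqref{2.38} the denominator $\rho(z)\mathrm{e}^{\mathrm{i}(\delta_{1}+\delta_{2}+\delta_{3})}$ is nonzero on $\Sigma\setminus\{\pm\mathrm{i}q_{0}\}$, so these identities unambiguously define the six diagonal coefficients on the continuous spectrum.

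Next I would promote each identity to an analytic continuation. Theorem~\ref{thm:1} asserts that every Jost column $\psi_{\pm,k}$ (equivalently $\nu_{\pm,k}$ via~\eqref{2.23}) extends holomorphically to its region---one of $\mathbb{U}_{1},\mathbb{U}_{2},\mathbb{U}_{3},\mathbb{U}_{4}$, or an open half-plane. Multilinearity of the determinant then propagates holomorphy to the numerator on the intersection of the three column domains, while the denominator is entire and nonvanishing on the continuous spectrum. A direct bookkeeping across the six cases matches each such intersection to the region claimed in~\eqref{2.42}.

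The main obstacle is that for several of the six determinants, the three columns have pairwise incompatible domains: for example, the formula for $s_{11}(z)$ contains $\psi_{+,3}$, which is holomorphic only on $\mathbb{U}_{3}$, a region disjoint from the target $\mathbb{U}_{1}$. Here the auxiliary eigenfunctions introduced in this section are essential: each is a Wronskian-type cross product $\nu_{-,i}\times\nu_{+,j}$ which satisfies the adjoint Lax pair and is holomorphic on $\mathrm{dom}(\nu_{-,i})\cap\mathrm{dom}(\nu_{+,j})$, and the index pairs $(i,j)$ can be arranged so that each $\mathbb{U}_{k}$ is covered. Rewriting the offending three-column determinant as the scalar pairing of such an auxiliary eigenfunction with the remaining Jost column---and invoking a Wronskian-constancy argument analogous to~\eqref{2.37}---extends the analyticity of each $s_{jj}$ and $h_{jj}$ to the full domains claimed in~\eqref{2.42}. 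The theorem then follows by running this argument once for each of the six coefficients.
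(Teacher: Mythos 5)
Your overall architecture is the standard one: write each diagonal coefficient as a cofactor pairing,
\begin{equation*}
s_{11}(z)\operatorname{det}\psi_{+}=\psi_{-,1}\cdot\bigl[\psi_{+,2}\times\psi_{+,3}\bigr],\qquad
h_{11}(z)\operatorname{det}\psi_{-}=\psi_{+,1}\cdot\bigl[\psi_{-,2}\times\psi_{-,3}\bigr],
\end{equation*}
and you correctly spot the obstruction that the three Jost columns never share a common open domain (this already fails for $s_{22}$, not only $s_{11}$, since $\mathbb{U}_{2}\cap\mathbb{U}_{3}=\emptyset$). The gap lies in how you propose to get around it. The cross products that actually occur in these cofactors are \emph{same-side} products $\psi_{\pm,l}\times\psi_{\pm,m}$, which by Corollary~\ref{cor:1} are the \emph{adjoint} Jost eigenfunctions $\tilde{\psi}_{\pm,j}$ of~\eqref{2.43}, not the mixed-side auxiliary eigenfunctions $\mu_{j}$ of~\eqref{2.57}. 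Their analyticity domains, listed in~\eqref{2.54}, are emphatically \emph{not} the intersections of the domains of the two factors: $\tilde{\nu}_{+,1}\propto\psi_{+,2}\times\psi_{+,3}$ is analytic on $\mathbb{U}_{1}$, whereas $\mathrm{dom}(\nu_{+,2})\cap\mathrm{dom}(\nu_{+,3})=\{\mathfrak{Im}z>0\}\cap\mathbb{U}_{3}=\emptyset$. That analyticity is a separate theorem, obtained by setting up the adjoint Lax pair and rerunning the Volterra-integral-equation argument of Theorem~\ref{thm:1} for $\tilde{\nu}_{\pm}$; it cannot be extracted from multilinearity of the determinant together with Theorem~\ref{thm:1} alone.

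Your substitute mechanism --- pair a mixed cross product $\nu_{-,i}\times\nu_{+,j}$, analytic on the intersection of the two factors' domains, with the remaining Jost column --- provably cannot close the argument. For $s_{11}$ the three splittings of the determinant are $\psi_{-,1}\cdot(\psi_{+,2}\times\psi_{+,3})$ (same-side), $\psi_{+,2}\cdot(\psi_{+,3}\times\psi_{-,1})$ (mixed, but $\mathbb{U}_{3}\cap\mathbb{U}_{1}=\emptyset$), and $\psi_{+,3}\cdot(\psi_{-,1}\times\psi_{+,2})$ (the mixed factor is analytic on $\mathbb{U}_{1}$, cf.~\eqref{2.70}, but it is then paired with $\psi_{+,3}$, which lives only on $\mathbb{U}_{3}$); no choice of index pairs covers $\mathbb{U}_{1}$, and the other five coefficients behave the same way. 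The paper does not write the proof out --- it defers to~\cite{12} --- but the intended route is precisely the adjoint-problem one: establish~\eqref{2.54} independently, use~\eqref{2.58} to identify the same-side cross products with $\tilde{\psi}_{\pm,j}$, and conclude, e.g., $s_{11}=\tilde{\psi}_{+,1}^{T}\psi_{-,1}/\rho(z)$ with both factors analytic on $\mathbb{U}_{1}$ (the pairing being $x$-independent because $\tilde{\mathbf{A}}^{T}=-\mathbf{A}$, which identifies it with the scattering coefficient via the boundary conditions). Your write-up gestures at ``satisfies the adjoint Lax pair'' but never supplies this essential input, so as written the proof does not go through.
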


\subsection{Adjoint problem}
\label{s:Dsp:Ap}

To establish the scattering problem, it is necessary to have a complete analytic function. We consider the so-called ``adjoint'' Lax pairs~\cite{A3}
\begin{equation}\label{2.43}
\begin{split}
\tilde{\psi}_{x}&=\tilde{\mathbf{A}}\tilde{\psi}, \quad
\tilde{\psi}_{t}=\tilde{\mathbf{E}}\tilde{\psi},
\end{split}
\end{equation}
with
\begin{equation}\label{2.44}
\begin{split}
\tilde{\mathbf{A}}&=\mathrm{i}k\mathbf{J}+\mathbf{Q}^{*}, \quad
\mathbf{Q}^{T}=-\mathbf{Q}^{*}, \quad \mathbf{Q}^{\dagger}=-\mathbf{Q}, \\
\tilde{\mathbf{E}}&=2k[\sigma(\mathbf{Q}_{x}^{*}\mathbf{Q}^{*}-\mathbf{Q}^{*}\mathbf{Q}_{x}^{*})
-2\sigma\Vert\mathbf{q}\Vert^{2}\mathbf{Q}^{*}-\sigma\mathbf{Q}_{xx}^{*}-\mathbf{Q}^{*}]
+\mathrm{i}\mathbf{J}\mathbf{Q}_{x}^{*}-\mathrm{i}q_{0}^{2}\mathbf{J}
-3\mathrm{i}\sigma q_{0}^{4}\mathbf{J}  \\
&-4\mathrm{i}k^{2}[\mathbf{J}_{1}+\sigma\mathbf{J}(\mathbf{Q}_{x}^{*}-(\mathbf{Q}^{*})^{2})]
+8\sigma k^{3}\mathbf{Q}^{*}-8\mathrm{i}\sigma k^{4}\mathbf{J}
-3\mathrm{i}\sigma\mathbf{J}[\mathbf{Q}_{x}^{*}(\mathbf{Q}^{*})^{2}
+(\mathbf{Q}^{*})^{2}\mathbf{Q}_{x}^{*}]  \\
&-\mathrm{i}\sigma\mathbf{J}[\mathbf{Q}_{xx}^{*}\mathbf{Q}^{*}+\mathbf{Q}^{*}\mathbf{Q}_{xx}^{*}
-\mathbf{Q}_{xxx}^{*}-(\mathbf{Q}_{x}^{*})^{2}]
-\mathrm{i}(1+3\sigma\Vert\mathbf{q}\Vert^{2})\mathbf{J}(\mathbf{Q}^{*})^{2}.
\end{split}
\end{equation}

Note that $\tilde{\mathbf{A}}(z)=\mathbf{A}^{*}(z^{*})$ and $\tilde{\mathbf{E}}(z)=\mathbf{E}^{*}(z^{*})$ for all $z\in\Sigma$, while ``$\times$'' denotes the usual cross product. For any number of column vectors $\mathbf{u}_{1},\mathbf{u}_{2}\in \mathbb{C}^{3}$, there are the following properties
\begin{equation}\label{2.46}
\begin{split}
&[(\mathbf{J}\mathbf{u}_{1})\times\mathbf{u}_{2}]+[\mathbf{u}_{1}\times(\mathbf{J}\mathbf{u}_{2})]
+[\mathbf{u}_{1}\times\mathbf{u}_{2}]
+[(\mathbf{J}\mathbf{u}_{1})\times(\mathbf{J}\mathbf{u}_{2})]=\mathbf{0},\\
&\mathbf{J}[\mathbf{u}_{1}\times\mathbf{u}_{2}]
-[(\mathbf{J}\mathbf{u}_{1})\times(\mathbf{J}\mathbf{u}_{2})]=\mathbf{0}, \\
&\mathbf{Q}[\mathbf{u}_{1}\times\mathbf{u}_{2}]+[(\mathbf{Q}^{T}\mathbf{u}_{1})\times\mathbf{u}_{2}]
+[\mathbf{u}_{1}\times(\mathbf{Q}^{T}\mathbf{u}_{2})]=\mathbf{0}, \\
&\mathbf{J}\mathbf{Q}^{2}[\mathbf{u}_{1}\times\mathbf{u}_{2}]
+[(\mathbf{J}(\mathbf{Q}^{T})^{2}\mathbf{u}_{1})\times\mathbf{u}_{2}]
+[\mathbf{u}_{1}\times(\mathbf{J}(\mathbf{Q}^{T})^{2}\mathbf{u}_{2})]=\mathbf{0}.
\end{split}
\end{equation}
Using these identities it is straightforward to prove the following proposition.

\begin{proposition}\label{pro:1}
If $\tilde{\mathbf{v}}_{1}$ and $\tilde{\mathbf{v}}_{2}$ are two arbitrary solutions of the adjoint problem~\eqref{2.43}, then
\begin{equation}\label{2.47}
\begin{split}
\mathbf{v}=\mathrm{e}^{\mathrm{i}(\delta_{1}+\delta_{2}+\delta_{3})}
[\tilde{\mathbf{v}}_{1}\times\tilde{\mathbf{v}}_{2}]
\end{split}
\end{equation}
is a solution of the Lax pairs~\eqref{2.1}.
\end{proposition}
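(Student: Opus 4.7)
The plan is to prove the two components of the Lax pair statement ($\mathbf{v}_x=\mathbf{A}\mathbf{v}$ and $\mathbf{v}_t=\mathbf{E}\mathbf{v}$) separately by direct differentiation, applying the adjoint Lax equations~\eqref{2.43} to each factor $\tilde{\mathbf{v}}_i$, and then collapsing the resulting cross-product expressions with the algebraic identities~\eqref{2.46}. The key observation that drives the computation is the pair of trace conditions
\begin{equation*}
\mathrm{i}\partial_x(\delta_1+\delta_2+\delta_3)=\mathrm{i}(-\lambda+k+\lambda)=\mathrm{i}k=\operatorname{tr}\mathbf{A},
\end{equation*}
\begin{equation*}
\mathrm{i}\partial_t(\delta_1+\delta_2+\delta_3)=\mathrm{i}(8k^{4}\sigma+4k^{2}-3\sigma q_0^{4}-q_0^{2})=\operatorname{tr}\mathbf{E},
\end{equation*}
verified from~\eqref{2.22} and~\eqref{2.36}. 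These arise precisely because $\lambda$ is a branch of $\sqrt{k^2+q_0^2}$ and cancel in pairs in $\delta_1+\delta_3$.

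For the spatial equation I would set $w:=\tilde{\mathbf{v}}_1\times\tilde{\mathbf{v}}_2$, apply the product rule to $\mathbf{v}=\mathrm{e}^{\mathrm{i}(\delta_1+\delta_2+\delta_3)}w$, substitute $\tilde{\mathbf{v}}_{i,x}=(\mathrm{i}k\mathbf{J}+\mathbf{Q}^{*})\tilde{\mathbf{v}}_i$, and expand
\begin{equation*}
w_x=\mathrm{i}k\bigl[(\mathbf{J}\tilde{\mathbf{v}}_1)\times\tilde{\mathbf{v}}_2+\tilde{\mathbf{v}}_1\times(\mathbf{J}\tilde{\mathbf{v}}_2)\bigr]+\bigl[(\mathbf{Q}^{*}\tilde{\mathbf{v}}_1)\times\tilde{\mathbf{v}}_2+\tilde{\mathbf{v}}_1\times(\mathbf{Q}^{*}\tilde{\mathbf{v}}_2)\bigr].
\end{equation*}
Using $\mathbf{Q}^{*}=-\mathbf{Q}^{T}$ together with the third identity of~\eqref{2.46}, the second bracket becomes $\mathbf{Q}\,w$. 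The first bracket collapses via the first and second identities of~\eqref{2.46} to $-(\mathbf{I}+\mathbf{J})w$, so $w_x=-\mathrm{i}k(\mathbf{I}+\mathbf{J})w+\mathbf{Q}w$. Adding the phase contribution $\mathrm{i}k\,\mathbf{v}$ then yields $\mathbf{v}_x=(-\mathrm{i}k\mathbf{J}+\mathbf{Q})\mathbf{v}=\mathbf{A}\mathbf{v}$.

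The temporal equation follows the same template but is computationally heavier. I would group the terms of $\tilde{\mathbf{E}}$ according to their structure: (i) scalar multiples of $\mathbf{J}$ and $\mathbf{J}_1$, which pass through the cross product trivially once combined with the fourth identity of~\eqref{2.46}; (ii) terms linear in $\mathbf{Q}^{*}$, $\mathbf{Q}^{*}_x$, $\mathbf{Q}^{*}_{xx}$, $\mathbf{Q}^{*}_{xxx}$ handled by the third identity; (iii) terms quadratic in $\mathbf{Q}^{*}$ (namely $(\mathbf{Q}^{*})^2$ and $\mathbf{J}(\mathbf{Q}^{*})^2$) handled by the fourth identity; (iv) mixed cubic combinations $\mathbf{J}[\mathbf{Q}^{*}_x(\mathbf{Q}^{*})^2+(\mathbf{Q}^{*})^2\mathbf{Q}^{*}_x]$, which one converts using the second identity to pull $\mathbf{J}$ through, then the third to rewrite $\mathbf{Q}^{*}$, and finally the product $\mathbf{Q}\mathbf{Q}$ identity obtained by iterating the third. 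After using $\mathbf{Q}^{*}=-\mathbf{Q}^{T}$ throughout, every bracket $[\tilde{\mathbf{v}}_1\times\mathbf{X}\tilde{\mathbf{v}}_2+\mathbf{X}\tilde{\mathbf{v}}_1\times\tilde{\mathbf{v}}_2]$ collapses to a left multiplication of $w$ by the operator $-\mathbf{X}^{T}$ (or, for the $\mathbf{J}$ pieces, to $-(\mathbf{I}+\mathbf{J})$), with signs that precisely match those in~\eqref{2.2}.

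The main obstacle is bookkeeping in step (iv): the temporal matrix $\mathbf{E}$ contains eleven distinct monomials in $\mathbf{Q}$ and its derivatives, each of which must be matched term-by-term against the transform of $\tilde{\mathbf{E}}$ through the cross-product identities. I would organize this as a single table, verifying that the $\mathbf{J}$-prefactors appear with the sign flips predicted by the second identity of~\eqref{2.46} (which exchanges $\mathbf{J}$-action between the two factors and the product), and that the scalar $\mathrm{i}q_0^{2}+3\mathrm{i}\sigma q_0^{4}$ shifts the phase derivative $\mathrm{i}\partial_t(\delta_1+\delta_2+\delta_3)$ into exactly $\operatorname{tr}\mathbf{E}$. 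With this accounting done, the computation yields $\mathbf{v}_t=\mathbf{E}\mathbf{v}$, completing the proposition.
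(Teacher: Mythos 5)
Your proposal is correct and is precisely the computation the paper leaves implicit behind ``using these identities it is straightforward to prove'': differentiate $\mathbf{v}=\mathrm{e}^{\mathrm{i}(\delta_{1}+\delta_{2}+\delta_{3})}\,\tilde{\mathbf{v}}_{1}\times\tilde{\mathbf{v}}_{2}$, collapse each bracket via $(\mathbf{M}\mathbf{u}_{1})\times\mathbf{u}_{2}+\mathbf{u}_{1}\times(\mathbf{M}\mathbf{u}_{2})=(\operatorname{tr}\mathbf{M})\,\mathbf{u}_{1}\times\mathbf{u}_{2}-\mathbf{M}^{T}(\mathbf{u}_{1}\times\mathbf{u}_{2})$ (of which~\eqref{2.46} are the needed special cases), and absorb the residual trace scalars using $\mathrm{i}\partial_{x}(\delta_{1}+\delta_{2}+\delta_{3})=\operatorname{tr}\mathbf{A}$ and $\mathrm{i}\partial_{t}(\delta_{1}+\delta_{2}+\delta_{3})=\operatorname{tr}\mathbf{E}$, exactly as you describe. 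One practical caveat for your term-by-term table: carried out against~\eqref{2.44} as printed, the check fails on the quartic term, because $-\mathbf{E}^{T}=\mathbf{E}^{*}(z^{*})$ requires $+8\mathrm{i}\sigma k^{4}\mathbf{J}$ in $\tilde{\mathbf{E}}$ whereas the display shows $-8\mathrm{i}\sigma k^{4}\mathbf{J}$; this is a sign typo in the paper rather than a flaw in your argument.
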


In light of this finding, we utilize the outcome to create four additional analytic eigenfunctions, each corresponding to a fundamental domain. This is accomplished by generating Jost eigenfunctions for the adjoint problem~\eqref{2.43}. As $x\to \pm\infty$, the spatial and temporal evolution of the solutions of the adjoint Lax pairs~\eqref{2.43} will be asymptotically
\begin{equation}\label{2.48}
\begin{split}
\tilde{\psi}_{x}&=\tilde{\mathbf{A}}_{\pm}\tilde{\psi}, \quad \tilde{\psi}_{t}=\tilde{\mathbf{E}}_{\pm}\tilde{\psi},
\end{split}
\end{equation}
with
\begin{equation}\label{2.49}
\begin{split}
\lim_{x\rightarrow\pm\infty}{\tilde{\mathbf{A}}}=\tilde{\mathbf{A}}_{\pm}&=
\mathrm{i}k\mathbf{J}+\mathbf{Q}_{\pm}^{*}, \\
\lim_{x\rightarrow\pm\infty}{\tilde{\mathbf{E}}}=\tilde{\mathbf{E}}_{\pm}&=
-2k(2\sigma q_{0}^{2}+1)\mathbf{Q}^{*}-\mathrm{i}q_{0}^{2}\mathbf{J}
-3\mathrm{i}\sigma q_{0}^{4}\mathbf{J}
-4\mathrm{i}k^{2}[\mathbf{J}_{1}-\sigma\mathbf{J}(\mathbf{Q}^{*})^{2}] \\
&+8\sigma k^{3}\mathbf{Q}^{*}-8\mathrm{i}\sigma k^{4}\mathbf{J}
-\mathrm{i}(1+3\sigma q_{0}^{2})\mathbf{J}(\mathbf{Q}^{*})^{2}.
\end{split}
\end{equation}
The eigenvalues of $\tilde{\mathbf{A}}_{\pm}$ and $\tilde{\mathbf{E}}_{\pm}$ are as follows
\begin{equation}\label{2.50}
\begin{split}
\tilde{\mathbf{A}}_{\pm,1}&=-\mathrm{i}k, \quad \tilde{\mathbf{A}}_{\pm,2,3}=\pm\mathrm{i}\lambda, \\
\tilde{\mathbf{E}}_{\pm,1}&=\mathrm{i}q_{0}^{2}+3\mathrm{i}\sigma q_{0}^{4}-8\mathrm{i}k^{4}\sigma, \quad
\tilde{\mathbf{E}}_{\pm,2,3}=-2\mathrm{i}k^{2} \pm 2\mathrm{i}\lambda(k-4k^{3}\sigma+2k\sigma q_{0}^{2}).
\end{split}
\end{equation}

It can be directly computed that $\tilde{\mathbf{A}}_{\pm}$ and $\tilde{\mathbf{E}}_{\pm}$~\eqref{2.49} satisfy the following conditions
\begin{equation}\label{2.51}
\begin{split}
\tilde{\mathbf{A}}_{\pm}\tilde{\mathbf{Y}}_{\pm}(z)&=
-\mathrm{i}\tilde{\mathbf{Y}}_{\pm}(z)\mathbf{\Lambda}_{1}, \quad
\tilde{\mathbf{E}}_{\pm}\tilde{\mathbf{Y}}_{\pm}(z)=
\mathrm{i}\tilde{\mathbf{Y}}_{\pm}(z)\mathbf{\Lambda}_{2}.
\end{split}
\end{equation}
Note that $\tilde{\mathbf{Y}}_{\pm}(z)=\mathbf{Y}_{\pm}^{*}(z^{*})$ and $\operatorname{det}\tilde{\mathbf{Y}}_{\pm}(z)=\rho(z)$. As before, we define the Jost eigenfunctions of the two parts adjoint Lax pairs~\eqref{2.43}
\begin{equation}\label{2.52}
\begin{split}
\tilde{\psi}_{\pm}=\tilde{\psi}_{\pm}(x,t,z)=\tilde{\mathbf{Y}}_{\pm}(z) \mathbf{e}^{-\mathrm{i}\mathbf{\Omega}}+o(1), \quad x\rightarrow\pm\infty, \quad z\in\Sigma.
\end{split}
\end{equation}

Introducing the modified Jost solutions
\begin{equation}\label{2.53}
\begin{split}
\tilde{\nu}_{\pm}=\tilde{\nu}_{\pm}(x,t,z)=\tilde{\psi}_{\pm} \mathbf{e}^{\mathrm{i}\mathbf{\Omega}},
\end{split}
\end{equation}
the following columns of $\tilde{\nu}_{\pm}(x,t,z)$ satisfy the following properties
\begin{subequations}\label{2.54}
\begin{align}
\tilde{\nu}_{-,1} & \;:\; z\in\mathbb{U}_{2}, \quad
\tilde{\nu}_{-,2} \;:\; \mathfrak{Im}z>0, \quad
\tilde{\nu}_{-,3} \;:\; z\in\mathbb{U}_{3},  \\
\tilde{\nu}_{+,1} & \;:\; z\in\mathbb{U}_{1}, \quad
\tilde{\nu}_{+,2} \;:\; \mathfrak{Im}z<0, \quad
\tilde{\nu}_{+,3} \;:\; z\in\mathbb{U}_{4}.
\end{align}
\end{subequations}

Modified Jost solutions~\eqref{2.53} suggest that the columns of $\tilde{\psi}_{\pm}$ also possess the same properties of analyticity and boundedness. We can introduce the adjoint scattering matrix as
\begin{equation}\label{2.55}
\begin{split}
\tilde{\psi}_{-}=\tilde{\psi}_{+}\tilde{\mathbf{S}}(z), \quad \tilde{\mathbf{H}}(z)=\tilde{\mathbf{S}}^{-1}(z), \quad z\in\Sigma \backslash \{\pm\mathrm{i}q_{0}\},
\end{split}
\end{equation}
the following scattering coefficients satisfy the following properties
\begin{subequations}\label{2.56}
\begin{align}
\tilde{s}_{11}(z) & \;:\; z\in\mathbb{U}_{2}, \quad  \tilde{s}_{22}(z) \;:\; \mathfrak{Im}z>0, \quad \tilde{s}_{33}(z) \;:\; z\in\mathbb{U}_{3},  \\
\tilde{h}_{11}(z) & \;:\; z\in\mathbb{U}_{1}, \quad  \tilde{h}_{22}(z) \;:\; \mathfrak{Im}z<0, \quad \tilde{h}_{33}(z) \;:\; z\in\mathbb{U}_{4}.
\end{align}
\end{subequations}

Next, we can establish the auxiliary eigenfunctions
\begin{subequations}\label{2.57}
\begin{align}
\mu_{1}(z)&=\mathrm{e}^{\mathrm{i}[\delta_{1}+\delta_{2}+\delta_{3}](z)}
[\tilde{\psi}_{+,1}\times\tilde{\psi}_{-,2}](z),  \quad
\mu_{2}(z)=\mathrm{e}^{\mathrm{i}[\delta_{1}+\delta_{2}+\delta_{3}](z)}
[\tilde{\psi}_{-,1}\times\tilde{\psi}_{+,2}](z),  \\
\mu_{3}(z)&=\mathrm{e}^{\mathrm{i}[\delta_{1}+\delta_{2}+\delta_{3}](z)}
[\tilde{\psi}_{+,2}\times\tilde{\psi}_{-,3}](z),  \quad
\mu_{4}(z)=\mathrm{e}^{\mathrm{i}[\delta_{1}+\delta_{2}+\delta_{3}](z)}
[\tilde{\psi}_{-,2}\times\tilde{\psi}_{+,3}](z).
\end{align}
\end{subequations}
Using the results of~\eqref{2.53} and~\eqref{2.54}, the auxiliary eigenfunctions $\mu_{1}(z)$, $\mu_{2}(z)$, $\mu_{3}(z)$ and $\mu_{4}(z)$ are analytic for $z\in\mathbb{U}_{1}$, $z\in\mathbb{U}_{2}$, $z\in\mathbb{U}_{3}$ and $z\in\mathbb{U}_{4}$, respectively.

Note that there is a straightforward connection between the adjoint Jost eigenfunctions and the eigenfunctions of the Lax pairs~\eqref{2.1}.

\begin{corollary}\label{cor:1}
For all cyclic indices $j$, $l$ and $m$ with $z\in\Sigma$,
\begin{subequations}\label{2.58}
\begin{align}
\psi_{\pm,j}(z)&=\frac{\mathrm{e}^{\mathrm{i}[\delta_{1}+\delta_{2}+\delta_{3}]
(z)}}{\rho_{j}(z)}[\tilde{\psi}_{\pm,l}\times\tilde{\psi}_{\pm,m}](z),  \\
\tilde{\psi}_{\pm,j}(z)&=
\frac{\mathrm{e}^{-\mathrm{i}[\delta_{1}+\delta_{2}+\delta_{3}](z)}}{\rho_{j}(z)}
[\psi_{\pm,l}\times\psi_{\pm,m}](z),
\end{align}
\end{subequations}
where
\begin{equation}\label{2.59}
\begin{split}
\rho_{1}(z)=1, \quad \rho_{2}(z)=\rho(z), \quad \rho_{3}(z)=1.
\end{split}
\end{equation}
\end{corollary}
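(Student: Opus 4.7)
The plan is to reduce the corollary to an asymptotic identification argument, using Proposition \ref{pro:1} to guarantee that the right-hand sides solve the Lax pair (resp.\ adjoint Lax pair) and then matching their behavior as $x\to\pm\infty$ against the defining boundary conditions \eqref{2.21} (resp.\ \eqref{2.52}). Uniqueness of solutions of the Volterra integral equation \eqref{2.29} will then force equality on all of $\Sigma$ (off the branch points $\pm\mathrm{i}q_0$).

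\emph{Step 1: The right-hand side solves the Lax pair.} Fix a cyclic triple $(j,l,m)$. Since $\tilde{\psi}_{\pm,l}$ and $\tilde{\psi}_{\pm,m}$ are solutions of the adjoint problem \eqref{2.43}, Proposition \ref{pro:1} applied with $\tilde{\mathbf{v}}_1=\tilde{\psi}_{\pm,l}$ and $\tilde{\mathbf{v}}_2=\tilde{\psi}_{\pm,m}$ gives at once that
\[
\Phi_j^{\pm}(x,t,z):=\frac{\mathrm{e}^{\mathrm{i}[\delta_1+\delta_2+\delta_3](z)}}{\rho_j(z)}\bigl[\tilde{\psi}_{\pm,l}\times\tilde{\psi}_{\pm,m}\bigr](x,t,z)
\]
satisfies the Lax pair \eqref{2.1} on the corresponding domain of analyticity.

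\emph{Step 2: Asymptotic matching.} Using \eqref{2.52} in the form $\tilde{\psi}_{\pm,n}\sim \tilde{\mathbf{Y}}_{\pm,n}(z)\,\mathrm{e}^{-\mathrm{i}\delta_n}$ as $x\to\pm\infty$, the product expands as
\[
\tilde{\psi}_{\pm,l}\times\tilde{\psi}_{\pm,m}\;\sim\;\bigl[\tilde{\mathbf{Y}}_{\pm,l}\times\tilde{\mathbf{Y}}_{\pm,m}\bigr](z)\,\mathrm{e}^{-\mathrm{i}(\delta_l+\delta_m)},
\]
so that $\Phi_j^{\pm}\sim \rho_j(z)^{-1}\bigl[\tilde{\mathbf{Y}}_{\pm,l}\times\tilde{\mathbf{Y}}_{\pm,m}\bigr]\mathrm{e}^{\mathrm{i}\delta_j}$. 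The crux is therefore to verify the pointwise identity
\[
\bigl[\tilde{\mathbf{Y}}_{\pm,l}\times\tilde{\mathbf{Y}}_{\pm,m}\bigr](z)\;=\;\rho_j(z)\,\mathbf{Y}_{\pm,j}(z),\qquad (j,l,m)\ \text{cyclic}.
\]
This is a direct computation from the explicit columns of $\mathbf{Y}_{\pm}$ in \eqref{2.17} and the relation $\tilde{\mathbf{Y}}_{\pm}(z)=\mathbf{Y}_{\pm}^{*}(z^{*})$. For $j=1$ one finds the first component $(|q_{\pm,1}|^{2}+|q_{\pm,2}|^{2})/q_0^{2}=1$, giving $\mathbf{Y}_{\pm,1}$ with $\rho_1=1$; for $j=3$ the analogous calculation recovers $\mathbf{Y}_{\pm,3}$ with $\rho_3=1$; for $j=2$ the factor $\rho(z)=1+q_0^{2}/z^{2}$ appears through the combination $1/q_0+q_0/z^{2}$ coming from the two nonvanishing products. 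Hence $\Phi_j^{\pm}\sim \mathbf{Y}_{\pm,j}\,\mathrm{e}^{\mathrm{i}\delta_j}$ as $x\to\pm\infty$, which matches the boundary condition \eqref{2.21} defining $\psi_{\pm,j}$.

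\emph{Step 3: Uniqueness and the second identity.} Since $\Phi_j^{\pm}$ and $\psi_{\pm,j}$ both solve the Lax pair \eqref{2.1} and share the same limit at $\pm\infty$, both satisfy the same Volterra integral equation \eqref{2.29} in their domain of analyticity. Uniqueness of the Volterra solution (established in the proof of Theorem \ref{thm:1}) yields $\Phi_j^{\pm}=\psi_{\pm,j}$, which is \eqref{2.58}(a). The identity \eqref{2.58}(b) follows by exactly the same argument applied to the adjoint problem: the algebraic identities \eqref{2.46} are symmetric in the two problems (since $\tilde{\mathbf{A}}=\mathbf{A}^{*}(z^{*})$), so that $\mathrm{e}^{-\mathrm{i}(\delta_1+\delta_2+\delta_3)}[\psi_{\pm,l}\times\psi_{\pm,m}]$ solves \eqref{2.43}, and the same cross-product identity for $\mathbf{Y}_{\pm,l}\times\mathbf{Y}_{\pm,m}=\rho_j(z)\tilde{\mathbf{Y}}_{\pm,j}$ (obtained by complex conjugation from Step 2) matches the boundary behavior of $\tilde{\psi}_{\pm,j}$ in \eqref{2.52}.

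\emph{Anticipated obstacle.} The only delicate point is the bookkeeping in Step 2: checking that the coefficient produced by the cross product of the asymptotic columns is exactly $\rho_j(z)$, in particular that the middle index picks up the factor $1+q_0^{2}/z^{2}$ while $j=1,3$ do not. Once this explicit but slightly tedious computation is carried out using \eqref{2.17} together with the relation $\mathbf{q}_{\pm}^{\perp}=(q_{\pm,2}^{*},-q_{\pm,1}^{*})^{T}$ arising from \eqref{2.16}, the remainder of the argument is purely structural.
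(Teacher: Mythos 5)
Your proposal is correct and follows exactly the route the paper intends (and that is standard in the Manakov-system literature it cites): apply Proposition~\ref{pro:1} to see that the cross-product expression solves the Lax pair, match the $x\to\pm\infty$ asymptotics via the identity $[\tilde{\mathbf{Y}}_{\pm,l}\times\tilde{\mathbf{Y}}_{\pm,m}]=\rho_j(z)\mathbf{Y}_{\pm,j}$ (which your explicit column computation from \eqref{2.17} verifies, including the appearance of $\rho(z)=1+q_0^2/z^2$ only for $j=2$), and conclude by uniqueness of the Volterra solution. No gaps.
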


This connection leads to a correlation between the respective scattering matrices, then the scattering matrices $\mathbf{S}(z)$ and $\tilde{\mathbf{S}}(z)$ satisfy the following relationship
\begin{equation}\label{2.60}
\begin{split}
\tilde{\mathbf{S}}^{-1}(z)=\mathbf{J}_{2}(z)\mathbf{S}^{T}(z)
\mathbf{J}_{2}^{-1}(z), \quad  \mathbf{J}_{2}(z)=\operatorname{diag}(1,\rho(z),1).
\end{split}
\end{equation}
The Jost eigenfunctions have the following decompositions in $z\in\Sigma$
\begin{subequations}\label{2.61}
\begin{align}
\psi_{-,1}(z)&=\frac{\mu_{3}(z)+s_{21}(z)\psi_{-,2}(z)}{s_{22}(z)}
=\frac{\mu_{4}(z)+s_{31}(z)\psi_{-,3}(z)}{s_{33}(z)},  \\
\psi_{+,1}(z)&=\frac{\mu_{3}(z)+h_{31}(z)\psi_{+,3}(z)}{h_{33}(z)}
=\frac{\mu_{4}(z)+h_{21}(z)\psi_{+,2}(z)}{h_{22}(z)},  \\
\psi_{-,3}(z)&=\frac{\mu_{2}(z)+s_{23}(z)\psi_{-,2}(z)}{s_{22}(z)}
=\frac{\mu_{1}(z)+s_{13}(z)\psi_{-,1}(z)}{s_{11}(z)},  \\
\psi_{+,3}(z)&=\frac{\mu_{1}(z)+h_{23}(z)\psi_{+,2}(z)}{h_{22}(z)}
=\frac{\mu_{2}(z)+h_{13}(z)\psi_{+,1}(z)}{h_{11}(z)}.
\end{align}
\end{subequations}

In addition, the modified auxiliary eigenfunctions are defined
\begin{subequations}\label{2.62}
\begin{align}
c_{1}(z)&=\mu_{1}(z)\mathrm{e}^{-\mathrm{i}\delta_{3}(z)}, \quad
c_{2}(z)=\mu_{2}(z)\mathrm{e}^{-\mathrm{i}\delta_{3}(z)},  \\
c_{3}(z)&=\mu_{3}(z)\mathrm{e}^{-\mathrm{i}\delta_{1}(z)}, \quad
c_{4}(z)=\mu_{4}(z)\mathrm{e}^{-\mathrm{i}\delta_{1}(z)},
\end{align}
\end{subequations}
the modified auxiliary eigenfunctions $c_{j}(z) (j=1,2,3,4)$ remain bounded for all $x\in\mathbb{R}$.

\subsection{Symmetries}
\label{s:Dsp:Sym}

In order to analyze the merging of a Riemann surface on each sheet, there are significant challenges in maintaining symmetries for NZBCs. For symmetry, there are transformations: $z\mapsto z^{*}$ and $z\mapsto -q_{0}^{2}/z$.

\subsubsection{First symmetry}
\label{s:Dsp:Sym:1}

Consider the transformation $z\mapsto z^{*}$, implying $(k,\lambda)\mapsto (k^{*},\lambda^{*})$.

\begin{proposition}\label{pro:2}
If $\psi(x,t,z)$ is a non-singular solution of the Lax pairs~\eqref{2.1}, so is
\begin{equation}\label{2.63}
\begin{split}
\mathbf{v}_{3}(x,t,z)=[\psi^{\dagger}(x,t,z^{*})]^{-1}.
\end{split}
\end{equation}
\end{proposition}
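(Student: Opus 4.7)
The plan is to reduce the claim to a Hermitian-conjugation symmetry of the Lax matrices under the involution $k\mapsto k^{*}$ (equivalently $z\mapsto z^{*}$). Setting $\phi(x,t,z)=\psi^{\dagger}(x,t,z^{*})$ so that $\mathbf{v}_{3}=\phi^{-1}$, the strategy is in two steps: (i) differentiate $\phi$ using the Lax pair at $z^{*}$; and (ii) invoke the identity $(\phi^{-1})_{x}=-\phi^{-1}\phi_{x}\phi^{-1}$ to transfer the equation onto $\mathbf{v}_{3}$. The key algebraic fact that makes everything work is
\begin{equation*}
\mathbf{A}^{\dagger}(k^{*})=-\mathbf{A}(k),\qquad \mathbf{E}^{\dagger}(k^{*})=-\mathbf{E}(k),
\end{equation*}
which I would state as a preliminary lemma before touching $\mathbf{v}_{3}$.

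To establish the preliminary identities I would use the structural properties recorded earlier: $\mathbf{J}^{\dagger}=\mathbf{J}$, $\mathbf{J}_{1}^{\dagger}=\mathbf{J}_{1}$, $\mathbf{Q}^{\dagger}=-\mathbf{Q}$ (so that $(\mathbf{Q}^{2})^{\dagger}=\mathbf{Q}^{2}$, $\mathbf{Q}_{xx}^{\dagger}=-\mathbf{Q}_{xx}$, $(\mathbf{Q}_{x}^{2})^{\dagger}=\mathbf{Q}_{x}^{2}$, etc.), $\mathbf{J}\mathbf{Q}=-\mathbf{Q}\mathbf{J}$ (so that $\mathbf{J}\mathbf{Q}^{2}=\mathbf{Q}^{2}\mathbf{J}$ and $\mathbf{J}\mathbf{Q}_{x}=-\mathbf{Q}_{x}\mathbf{J}$), and the reality of $\sigma$, $q_{0}^{2}$. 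For $\mathbf{A}=-\ii k\mathbf{J}+\mathbf{Q}$ the identity is immediate: $(-\ii k^{*}\mathbf{J}+\mathbf{Q})^{\dagger}=\ii k\mathbf{J}-\mathbf{Q}=-\mathbf{A}(k)$. For $\mathbf{E}$ I would go term by term, showing that each monomial either already changes sign from the factor of $\ii$ and real coefficients, or changes sign after commuting $\mathbf{J}$ past an odd number of $\mathbf{Q}$'s and using $\mathbf{Q}^{\dagger}=-\mathbf{Q}$. The cubic/quartic $k$-terms in $\mathbf{E}$ also behave correctly because $((k^{*})^{n})^{*}=k^{n}$ contributes a sign exactly matching what one needs after the matrix part is dealt with.

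Once the preliminary identities are in hand, the remainder is essentially one line of calculus. Since $\psi(x,t,z^{*})$ satisfies $\psi_{x}=\mathbf{A}(k(z^{*}))\psi$ and $\psi_{t}=\mathbf{E}(k(z^{*}))\psi$, taking Hermitian conjugate gives
\begin{equation*}
\phi_{x}=\phi\,\mathbf{A}^{\dagger}(k^{*})=-\phi\,\mathbf{A}(k),\qquad \phi_{t}=-\phi\,\mathbf{E}(k).
\end{equation*}
Applying $(\phi^{-1})_{x}=-\phi^{-1}\phi_{x}\phi^{-1}$ and the analogous $t$-formula immediately yields $(\mathbf{v}_{3})_{x}=\mathbf{A}(k)\mathbf{v}_{3}$ and $(\mathbf{v}_{3})_{t}=\mathbf{E}(k)\mathbf{v}_{3}$, which is what we want. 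The non-singularity assumption on $\psi$ is used precisely to ensure $\phi^{-1}$ is well-defined, so no extra care is needed there.

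The main obstacle is the bookkeeping verification $\mathbf{E}^{\dagger}(k^{*})=-\mathbf{E}(k)$: $\mathbf{E}$ has roughly a dozen terms, several involving triple products like $\mathbf{J}\mathbf{Q}_{x}\mathbf{Q}^{2}$ or $\mathbf{J}\mathbf{Q}_{xx}\mathbf{Q}$, and one has to commute $\mathbf{J}$ through, track signs from $(\mathbf{Q}^{n})^{\dagger}=(-1)^{n}\mathbf{Q}^{n}$, and combine with the sign from the overall factor of $\ii$. This is conceptually routine but long; I would organize it in a short table listing, for each monomial, the prefactor sign coming from the $\ii$'s and $k$-powers, the sign coming from the Hermitian conjugate of the matrix part, and the sign absorbed when commuting $\mathbf{J}$ to the left, and verify that the product of the three signs is always $-1$. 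Everything else in the proof is cheap.
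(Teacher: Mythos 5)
Your proof is correct and is exactly the argument the paper implicitly relies on (the proposition is stated without proof, following the standard treatment in the Manakov/focusing-NLS references): establish $\mathbf{A}^{\dagger}(k^{*})=-\mathbf{A}(k)$ and $\mathbf{E}^{\dagger}(k^{*})=-\mathbf{E}(k)$ from $\mathbf{J}^{\dagger}=\mathbf{J}$, $\mathbf{Q}^{\dagger}=-\mathbf{Q}$, $\mathbf{J}\mathbf{Q}=-\mathbf{Q}\mathbf{J}$, then differentiate $[\psi^{\dagger}(z^{*})]^{-1}$. The sign bookkeeping for $\mathbf{E}$ checks out term by term (the symmetrized combinations such as $\mathbf{J}(\mathbf{Q}_{x}\mathbf{Q}^{2}+\mathbf{Q}^{2}\mathbf{Q}_{x})$ each flip sign as a pair), so no gap remains.
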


By using property $[\mathbf{e}^{\mathrm{i}\mathbf{\Omega}(z^{*})}]^{\dagger}
=\mathbf{e}^{-\mathrm{i}\mathbf{\Omega}(z)}$ and~\eqref{2.21}, the Jost eigenfunctions have the following symmetries
\begin{equation}\label{2.64}
\begin{split}
\psi_{\pm}(z)=[\psi_{\pm}^{\dagger}(z^{*})]^{-1}\mathbf{J}_{3}(z), \quad
\mathbf{J}_{3}(z)=\operatorname{diag}(\rho(z),1,\rho(z)), \quad z\in\Sigma.
\end{split}
\end{equation}

Then, using the Schwarz reflection principle,~\eqref{2.61} and~\eqref{2.64} yields the following
\begin{subequations}\label{2.66}
\begin{align}
\psi_{+,1}^{*}(z^{*})&=\frac{\mathrm{e}^{-\mathrm{i}[\delta_{1}+\delta_{2}
+\delta_{3}](z)}}{h_{22}(z)}[\psi_{+,2}\times\mu_{1}](z), \quad
\psi_{-,1}^{*}(z^{*})=\frac{\mathrm{e}^{-\mathrm{i}[\delta_{1}+\delta_{2}
+\delta_{3}](z)}}{s_{22}(z)}[\psi_{-,2}\times\mu_{2}](z),   \\
\psi_{+,3}^{*}(z^{*})&=\frac{\mathrm{e}^{-\mathrm{i}[\delta_{1}+\delta_{2}
+\delta_{3}](z)}}{h_{22}(z)}[\mu_{4}\times\psi_{+,2}](z), \quad
\psi_{-,3}^{*}(z^{*})=\frac{\mathrm{e}^{-\mathrm{i}[\delta_{1}+\delta_{2}
+\delta_{3}](z)}}{s_{22}(z)}[\mu_{3}\times\psi_{-,2}](z), \\
\psi_{+,2}^{*}(z^{*})&=\frac{\mathrm{e}^{-\mathrm{i}[\delta_{1}+\delta_{2}
+\delta_{3}](z)}}{\rho(z)h_{11}(z)}[\mu_{2}\times\psi_{+,1}](z)
=\frac{\mathrm{e}^{-\mathrm{i}[\delta_{1}+\delta_{2}
+\delta_{3}](z)}}{\rho(z)h_{33}(z)}[\psi_{+,3}\times\mu_{3}](z),  \\
\psi_{-,2}^{*}(z^{*})&=\frac{\mathrm{e}^{-\mathrm{i}[\delta_{1}+\delta_{2}
+\delta_{3}](z)}}{\rho(z)s_{11}(z)}[\mu_{1}\times\psi_{-,1}](z)
=\frac{\mathrm{e}^{-\mathrm{i}[\delta_{1}+\delta_{2}
+\delta_{3}](z)}}{\rho(z)s_{33}(z)}[\psi_{-,3}\times\mu_{4}](z).
\end{align}
\end{subequations}

According to~\eqref{2.39} and~\eqref{2.64}, the scattering matrices have the following relation
\begin{equation}\label{2.67}
\begin{split}
\mathbf{S}^{\dagger}(z^{*})=\mathbf{J}_{3}(z)\mathbf{H}(z)\mathbf{J}_{3}^{-1}(z), \quad z\in\Sigma.
\end{split}
\end{equation}
Correspondingly, we can find
\begin{subequations}\label{2.68}
\begin{align}
h_{11}(z)&=s_{11}^{*}(z^{*}), \quad \quad \quad
h_{12}(z)=\frac{s_{21}^{*}(z^{*})}{\rho(z)}, \quad  h_{13}(z)=s_{31}^{*}(z^{*}),  \\
h_{21}(z)&=\rho(z)s_{12}^{*}(z^{*}), \quad  h_{22}(z)=s_{22}^{*}(z^{*}), \quad
h_{23}(z)=\rho(z)s_{32}^{*}(z^{*}),  \\
h_{31}(z)&=s_{13}^{*}(z^{*}), \quad \quad \quad
h_{32}(z)=\frac{s_{23}^{*}(z^{*})}{\rho(z)}, \quad h_{33}(z)=s_{33}^{*}(z^{*}),
\end{align}
\end{subequations}
the analytical region of scattering coefficients on $z\in\Sigma$
\begin{subequations}\label{2.69}
\begin{align}
h_{11}(z)&=s_{11}^{*}(z^{*}), \quad  z\in\mathbb{D}_{2}, \\
h_{22}(z)&=s_{22}^{*}(z^{*}), \quad  \mathfrak{Im}z>0, \\
h_{33}(z)&=s_{33}^{*}(z^{*}), \quad  z\in\mathbb{D}_{3}.
\end{align}
\end{subequations}

Then, through the Schwarz reflection principle and new auxiliary eigenfunctions~\eqref{2.57}, we can derive symmetry relations as well
\begin{subequations}\label{2.70}
\begin{align}
\mu_{1}^{*}(z^{*})&=\mathrm{e}^{-\mathrm{i}[\delta_{1}+\delta_{2}+\delta_{3}](z)}
[\psi_{+,1}\times\psi_{-,2}](z), \quad  z\in\mathbb{U}_{2},  \\
\mu_{2}^{*}(z^{*})&=\mathrm{e}^{-\mathrm{i}[\delta_{1}+\delta_{2}+\delta_{3}](z)}
[\psi_{-,1}\times\psi_{+,2}](z), \quad  z\in\mathbb{U}_{1}, \\
\mu_{3}^{*}(z^{*})&=\mathrm{e}^{-\mathrm{i}[\delta_{1}+\delta_{2}+\delta_{3}](z)}
[\psi_{+,2}\times\psi_{-,3}](z), \quad  z\in\mathbb{U}_{4}, \\
\mu_{4}^{*}(z^{*})&=\mathrm{e}^{-\mathrm{i}[\delta_{1}+\delta_{2}+\delta_{3}](z)}
[\psi_{-,2}\times\psi_{+,3}](z), \quad  z\in\mathbb{U}_{3}.
\end{align}
\end{subequations}
In addition there are also symmetrical properties
\begin{equation}\label{2.71}
\begin{split}
\psi_{\pm,j}^{*}(z^{*})&=\frac{\mathrm{e}^{-\mathrm{i}[\delta_{1}+\delta_{2}
+\delta_{3}](z)}}{\rho_{j}(z)}[\psi_{\pm,l}\times\psi_{\pm,m}](z), \quad z\in\Sigma,
\end{split}
\end{equation}
where $j$, $l$ and $m$ are cyclic indices.

\subsubsection{Second symmetry}
\label{s:Dsp:Sym:2}

Consider the transformation $z\mapsto -q_{0}^{2}/z$, implying $(k,\lambda)\mapsto(k,-\lambda)$.

\begin{proposition}\label{pro:3}
If $\psi(z)$ is a non-singular solution of the Lax pairs~\eqref{2.1}, so is $\mathbf{v}_{4}(z)=\psi(-q_{0}^{2}/z)$.
\end{proposition}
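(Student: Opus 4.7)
The plan is to verify that the involution $z \mapsto -q_0^2/z$ leaves the Lax pair coefficients invariant, after which the result follows immediately by substitution.

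First, I would use the inverse uniformization formulas \eqref{2.12} to compute the action of $z \mapsto -q_0^2/z$ on the original spectral quantities $(k,\lambda)$. A direct substitution gives
\begin{equation*}
k(-q_0^2/z) = \tfrac{1}{2}\bigl(-q_0^2/z - q_0^2/(-q_0^2/z)\bigr) = \tfrac{1}{2}\bigl(-q_0^2/z + z\bigr) = k(z),
\end{equation*}
and similarly $\lambda(-q_0^2/z) = -\lambda(z)$. This confirms the parenthetical claim $(k,\lambda)\mapsto(k,-\lambda)$ stated just before the proposition.

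Next, inspecting the expressions \eqref{2.2} for $\mathbf{A}(x,t,k)$ and $\mathbf{E}(x,t,k)$, one sees that both matrices are polynomials in the spectral parameter $k$ alone (the quantity $\lambda$ never appears explicitly in the Lax operator, only in the asymptotic diagonalization \eqref{2.20}). Combined with the invariance of $k$ established above, this yields
\begin{equation*}
\mathbf{A}\bigl(x,t,k(-q_0^2/z)\bigr) = \mathbf{A}(x,t,k(z)), \qquad \mathbf{E}\bigl(x,t,k(-q_0^2/z)\bigr) = \mathbf{E}(x,t,k(z)).
\end{equation*}
Then, writing $\mathbf{v}_4(x,t,z) = \psi(x,t,-q_0^2/z)$ and differentiating in $x$ and $t$ (which commute with the constant-in-$x,t$ substitution in the spectral variable), we obtain
\begin{equation*}
(\mathbf{v}_4)_x(x,t,z) = \psi_x(x,t,-q_0^2/z) = \mathbf{A}(x,t,k(z))\,\mathbf{v}_4(x,t,z),
\end{equation*}
and analogously $(\mathbf{v}_4)_t = \mathbf{E}(x,t,k(z))\,\mathbf{v}_4$, which is precisely the statement that $\mathbf{v}_4$ solves \eqref{2.1}.

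There is really no serious obstacle here: the whole content of the proposition is the algebraic invariance $k \mapsto k$ together with the structural observation that $\mathbf{A}$ and $\mathbf{E}$ are built purely from $k$ (not $\lambda$). The only place where genuine work is deferred is in the downstream applications: the sign flip $\lambda \mapsto -\lambda$ does act nontrivially on the asymptotic exponents $\delta_1, \delta_3$ in \eqref{2.22} and hence on the Jost eigenfunction normalizations \eqref{2.21}, so the induced symmetries for $\psi_{\pm}$, the scattering matrix, and the auxiliary eigenfunctions \eqref{2.57} will require separately tracking how the boundary data $\mathbf{Y}_{\pm}(z)$ transform under $z \mapsto -q_0^2/z$. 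That analysis, however, is downstream of the present statement, which is just the invariance of the Lax system itself.
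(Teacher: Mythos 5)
Your proof is correct and follows exactly the argument the paper intends (the paper states the proposition without proof, but the preceding remark that $z\mapsto -q_{0}^{2}/z$ implies $(k,\lambda)\mapsto(k,-\lambda)$ is precisely your key observation, and the Lax matrices $\mathbf{A}$, $\mathbf{E}$ in~\eqref{2.2} indeed depend only on $k$). Your closing remark correctly identifies that the nontrivial content of this symmetry lies downstream, in how $\mathbf{Y}_{\pm}$ and the exponents $\delta_{1},\delta_{3}$ transform, which is where the paper's subsequent relations~\eqref{2.72}--\eqref{2.78} do the real work.
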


The following properties can be obtained by progressiveness
\begin{equation}\label{2.72}
\begin{split}
\psi_{\pm}(z)=\psi_{\pm}(-\frac{q_{0}^{2}}{z})\mathbf{J}_{4}(z), \quad \mathbf{J}_{4}(z)=\begin{pmatrix}
    0 & 0 &  \displaystyle{-\frac{\mathrm{i}q_{0}}{z}}  \\
    0 & 1 &  0  \\
    \displaystyle{-\frac{\mathrm{i}q_{0}}{z}} & 0 & 0
  \end{pmatrix}, \quad z\in\Sigma.
\end{split}
\end{equation}

As before, the eigenfunctions have the following analytical properties
\begin{subequations}\label{2.74}
\begin{align}
\psi_{\pm,1}(z)&=-\frac{\mathrm{i}q_{0}}{z}\psi_{\pm,3}(-\frac{q_{0}^{2}}{z}), \quad
\{z\;|\; \mathfrak{Im}z\lessgtr0  \; \text{and} \; |z|>q_{0} \},  \\
\psi_{\pm,2}(z)&=\psi_{\pm,2}(-\frac{q_{0}^{2}}{z}), \quad \quad \quad
\{z\;|\; \operatorname{Im}z\gtrless0  \},  \\
\psi_{\pm,3}(z)&=-\frac{\mathrm{i}q_{0}}{z}\psi_{\pm,1}(-\frac{q_{0}^{2}}{z}), \quad
\{z\;|\; \mathfrak{Im}z\lessgtr0  \; \text{and} \; |z|<q_{0}  \},
\end{align}
\end{subequations}
the scattering matrices have the following relation
\begin{equation}\label{2.75}
\begin{split}
\mathbf{S}(-\frac{q_{0}^{2}}{z})=\mathbf{J}_{4}(z)\mathbf{S}(z)\mathbf{J}_{4}^{-1}(z), \quad z\in\Sigma,
\end{split}
\end{equation}
Componentwise, we have
\begin{subequations}\label{2.76}
\begin{align}
s_{11}(z)&=s_{33}(-\frac{q_{0}^{2}}{z}), \quad \quad \quad s_{12}(z)=\frac{\mathrm{i}z}{q_{0}}s_{32}(-\frac{q_{0}^{2}}{z}), \quad
s_{13}(z)=s_{31}(-\frac{q_{0}^{2}}{z}),  \\
s_{21}(z)&=-\frac{\mathrm{i}q_{0}}{z}s_{23}(-\frac{q_{0}^{2}}{z}), \quad  s_{22}(z)=s_{22}(-\frac{q_{0}^{2}}{z}), \quad \quad
s_{23}(z)=-\frac{\mathrm{i}q_{0}}{z}s_{21}(-\frac{q_{0}^{2}}{z}),  \\
s_{31}(z)&=s_{13}(-\frac{q_{0}^{2}}{z}), \quad \quad \quad
s_{32}(z)=\frac{\mathrm{i}z}{q_{0}}s_{12}(-\frac{q_{0}^{2}}{z}), \quad
s_{33}(z)=s_{11}(-\frac{q_{0}^{2}}{z}),
\end{align}
\end{subequations}
the analytical region of scattering coefficients on $z\in\Sigma$
\begin{subequations}\label{2.77}
\begin{align}
s_{11}(z)&=s_{33}(-\frac{q_{0}^{2}}{z}), \quad z\in\mathbb{U}_{1}, \quad \quad
s_{22}(z)=s_{22}(-\frac{q_{0}^{2}}{z}), \quad \mathfrak{Im}z<0, \\
h_{11}(z)&=h_{33}(-\frac{q_{0}^{2}}{z}), \quad z\in\mathbb{U}_{2}, \quad \quad
h_{22}(z)=h_{22}(-\frac{q_{0}^{2}}{z}), \quad \mathfrak{Im}z>0.
\end{align}
\end{subequations}

The auxiliary eigenfunctions satisfy the following properties
\begin{equation}\label{2.78}
\begin{split}
\mu_{1}(z)&=-\frac{\mathrm{i}q_{0}}{z}\mu_{4}(-\frac{q_{0}^{2}}{z}), \quad
(z\in\mathbb{U}_{1}),  \quad
\mu_{2}(z)=-\frac{\mathrm{i}q_{0}}{z}\mu_{3}(-\frac{q_{0}^{2}}{z}), \quad
(z\in\mathbb{U}_{2}).
\end{split}
\end{equation}
In light of the first and second symmetry connections governing the associated scattering coefficients, we bring forth novel reflections as follows
\begin{subequations}\label{2.79}
\begin{align}
\alpha_{1}(z)&=\frac{s_{21}(z)}{s_{11}(z)}=\rho(z)\frac{h_{12}^{*}(z^{*})}{h_{11}^{*}(z^{*})}, \quad
\alpha_{1}(-\frac{q_{0}^{2}}{z})=\frac{\mathrm{i}z}{q_{0}}\frac{s_{23}(z)}{s_{33}(z)}
=\frac{\mathrm{i}z\rho(z)}{q_{0}}\frac{h_{32}^{*}(z^{*})}{h_{33}^{*}(z^{*})}, \\
\alpha_{2}(z)&=\frac{s_{31}(z)}{s_{11}(z)}=\frac{h_{13}^{*}(z^{*})}{h_{11}^{*}(z^{*})}, \quad \quad \quad \alpha_{2}(-\frac{q_{0}^{2}}{z})=\frac{s_{13}(z)}{s_{33}(z)}
=\frac{h_{31}^{*}(z^{*})}{h_{33}^{*}(z^{*})}, \\
\alpha_{3}(z)&=\frac{s_{32}(z)}{s_{22}(z)}=\frac{1}{\rho(z)}\frac{h_{23}^{*}(z^{*})}{h_{22}^{*}(z^{*})}, \quad \alpha_{3}(-\frac{q_{0}^{2}}{z})=-\frac{\mathrm{i}q_{0}}{z}\frac{s_{12}(z)}{s_{22}(z)}
=-\frac{\mathrm{i}q_{0}}{z\rho(z)}\frac{h_{21}^{*}(z^{*})}{h_{22}^{*}(z^{*})}.
\end{align}
\end{subequations}

\section{Discrete spectrum and its related properties}
\label{s:Dsirp}

\subsection{Discrete spectrum}
\label{s:Dsirp:Ds}

To analyze the discrete spectrum, the following four matrices are defined
\begin{subequations}\label{3.1}
\begin{align}
\mathbf{\Psi}_{1}(z)&=(\psi_{-,1}(z),\psi_{+,2}(z),\mu_{1}(z)), \quad z\in\mathbb{U}_{1}, \\
\mathbf{\Psi}_{2}(z)&=(\psi_{+,1}(z),\psi_{-,2}(z),\mu_{2}(z)), \quad z\in\mathbb{U}_{2}, \\
\mathbf{\Psi}_{3}(z)&=(\mu_{3}(z),\psi_{-,2}(z),\psi_{+,3}(z)), \quad z\in\mathbb{U}_{3}, \\
\mathbf{\Psi}_{4}(z)&=(\mu_{4}(z),\psi_{+,2}(z),\psi_{-,3}(z)), \quad z\in\mathbb{U}_{4}.
\end{align}
\end{subequations}

By the~\eqref{2.61}, it yields
\begin{subequations}\label{3.2}
\begin{align}
\operatorname{det}\mathbf{\Psi}_{1}(z)&=s_{11}(z)h_{22}(z)\rho(z)
\mathrm{e}^{\mathrm{i}[\delta_{1}+\delta_{2}+\delta_{3}](z)}, \quad z\in\mathbb{U}_{1}, \\
\operatorname{det}\mathbf{\Psi}_{2}(z)&=s_{22}(z)h_{11}(z)\rho(z)
\mathrm{e}^{\mathrm{i}[\delta_{1}+\delta_{2}+\delta_{3}](z)}, \quad z\in\mathbb{U}_{2}, \\
\operatorname{det}\mathbf{\Psi}_{3}(z)&=s_{22}(z)h_{33}(z)\rho(z)
\mathrm{e}^{\mathrm{i}[\delta_{1}+\delta_{2}+\delta_{3}](z)}, \quad z\in\mathbb{U}_{3}, \\
\operatorname{det}\mathbf{\Psi}_{4}(z)&=s_{33}(z)h_{22}(z)\rho(z)
\mathrm{e}^{\mathrm{i}[\delta_{1}+\delta_{2}+\delta_{3}](z)}, \quad z\in\mathbb{U}_{4}.
\end{align}
\end{subequations}

Thus, the columns of $\mathbf{\Psi}_{1}(z)$ become linearly dependent at the zeros of $s_{11}(z)$, $h_{22}(z)$ and $\rho(z)$, Similarly, columns $\mathbf{\Psi}_{2}(z)$, $\mathbf{\Psi}_{3}(z)$ and $\mathbf{\Psi}_{4}(z)$ have the same properties. However, the symmetries of the scattering coefficients suggest that these zeros are interconnected and not independent of one another.

\begin{proposition}\label{pro:4}
Let $\mathfrak{Im}z>0$, then
\begin{equation}\label{3.3}
\begin{split}
h_{22}(z)=0 \Longleftrightarrow s_{22}(z^{*})=0 \Longleftrightarrow s_{22}(-\frac{q_{0}^{2}}{z^{*}})=0
\Longleftrightarrow h_{22}(-\frac{q_{0}^{2}}{z})=0.
\end{split}
\end{equation}
\end{proposition}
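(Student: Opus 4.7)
The plan is to chain together the two symmetry relations for $s_{22}$ and $h_{22}$ that were derived in Sections 2.4.1 and 2.4.2. The first symmetry relation~\eqref{2.69} gives $h_{22}(z) = s_{22}^{*}(z^{*})$ whenever $\mathfrak{Im}\, z > 0$, while the second symmetry~\eqref{2.77} gives $s_{22}(w) = s_{22}(-q_{0}^{2}/w)$ for $\mathfrak{Im}\, w < 0$ and $h_{22}(z) = h_{22}(-q_{0}^{2}/z)$ for $\mathfrak{Im}\, z > 0$. Everything else is bookkeeping: complex conjugation is a bijection on $\mathbb{C}$, so $s_{22}^{*}(z^{*}) = 0 \iff s_{22}(z^{*}) = 0$, and the map $z \mapsto -q_{0}^{2}/z$ preserves the upper half-plane (if $z = x + \mathrm{i}y$ with $y>0$, then $\mathfrak{Im}(-q_{0}^{2}/z) = q_{0}^{2}y/|z|^{2} > 0$), so each substitution lands in a region where the relevant symmetry is valid.

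Concretely, for $z$ with $\mathfrak{Im}\, z > 0$ I would argue step by step:
\begin{enumerate}[label=(\roman*)]
\item $h_{22}(z) = 0 \iff s_{22}^{*}(z^{*}) = 0 \iff s_{22}(z^{*}) = 0$, using~\eqref{2.69} (note $z^{*}$ has negative imaginary part, which is where $s_{22}$ is defined).
\item $s_{22}(z^{*}) = 0 \iff s_{22}(-q_{0}^{2}/z^{*}) = 0$, using the second symmetry~\eqref{2.77} applied at $w = z^{*}$, which has $\mathfrak{Im}\, w < 0$ as required.
\item $s_{22}(-q_{0}^{2}/z^{*}) = 0 \iff s_{22}^{*}((-q_{0}^{2}/z^{*})^{*}) = s_{22}^{*}(-q_{0}^{2}/z) = 0$ after another conjugation; but $-q_{0}^{2}/z$ has positive imaginary part, so the first-symmetry relation~\eqref{2.69} reads $h_{22}(-q_{0}^{2}/z) = s_{22}^{*}((-q_{0}^{2}/z)^{*}) = s_{22}^{*}(-q_{0}^{2}/z^{*})$, yielding $h_{22}(-q_{0}^{2}/z) = 0$.
\end{enumerate}
Each arrow is an if-and-only-if because all four transformations involved ($w\mapsto w^{*}$, $w\mapsto -q_{0}^{2}/w$, and complex conjugation of values) are involutions/bijections.

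There is no real obstacle here; the only subtlety to be careful about is making sure that whenever one invokes~\eqref{2.69} or~\eqref{2.77} the argument lies in the half-plane or circular region where the corresponding identity has been proved. In particular, the analytic continuation implicit in~\eqref{2.69}--\eqref{2.77} (via the Schwarz reflection principle used earlier to extend $h_{22}$ off $\Sigma$) is what allows us to treat both $z$ and $-q_{0}^{2}/z$ as points in the domain of analyticity of $h_{22}$. Once that domain check is recorded, the four-way equivalence follows immediately from the displayed symmetries, so the proof reduces to writing out the chain (i)--(iii) above.
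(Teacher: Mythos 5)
Your proof is correct and follows exactly the route the paper intends: Proposition~\ref{pro:4} is stated as a direct consequence of the first symmetry $h_{22}(z)=s_{22}^{*}(z^{*})$ from~\eqref{2.69} and the second symmetry $s_{22}(z)=s_{22}(-q_{0}^{2}/z)$, $h_{22}(z)=h_{22}(-q_{0}^{2}/z)$ from~\eqref{2.77}, chained together with the observation that $z\mapsto -q_{0}^{2}/z$ preserves the upper half-plane and that $(-q_{0}^{2}/z)^{*}=-q_{0}^{2}/z^{*}$. Your domain checks at each step are exactly the bookkeeping the paper leaves implicit, so nothing is missing.
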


\begin{proposition}\label{pro:5}
Let $\mathfrak{Im}z>0$ and $|z|>q_{0}$, then
\begin{equation}\label{3.4}
\begin{split}
s_{11}(z)=0 \Longleftrightarrow h_{11}(z^{*})=0 \Longleftrightarrow h_{33}(-\frac{q_{0}^{2}}{z^{*}})=0
\Longleftrightarrow s_{33}(-\frac{q_{0}^{2}}{z})=0.
\end{split}
\end{equation}
\end{proposition}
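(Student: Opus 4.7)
The strategy is to simply chain together the two symmetry relations already established in Section 2.4 for the scattering coefficient $s_{11}$ (and its partners), keeping careful track of the four relevant subdomains $\mathbb{U}_1,\mathbb{U}_2,\mathbb{U}_3,\mathbb{U}_4$. Let $z\in\mathbb{U}_1$, i.e.\ $\mathfrak{Im}\,z>0$ and $|z|>q_0$, so that $s_{11}(z)$ is analytic at $z$ by Theorem~\ref{thm:2}.

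First, I would invoke the first symmetry. From~\eqref{2.68} (equivalently~\eqref{2.69a}) we have $h_{11}(w)=s_{11}^{*}(w^{*})$ valid on $\mathbb{U}_2$. Applying this at $w=z^{*}$, which lies in $\mathbb{U}_2$ because conjugation maps $\mathbb{U}_1\leftrightarrow\mathbb{U}_2$, gives $h_{11}(z^{*})=s_{11}^{*}(z)$. Hence $s_{11}(z)=0\Longleftrightarrow h_{11}(z^{*})=0$.

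Next, I would invoke the second symmetry. From~\eqref{2.76} we have $s_{11}(z)=s_{33}(-q_0^{2}/z)$. A short check shows the involution $z\mapsto -q_0^{2}/z$ sends $\mathbb{U}_1$ to $\mathbb{U}_4$: indeed $|-q_0^{2}/z|=q_0^{2}/|z|<q_0$, and $\mathfrak{Im}(-q_0^{2}/z)=q_0^{2}\mathfrak{Im}(z)/|z|^{2}>0$, so the point $-q_0^{2}/z$ is inside the region $\mathbb{U}_4$ where $s_{33}$ is analytic. Thus $s_{11}(z)=0\Longleftrightarrow s_{33}(-q_0^{2}/z)=0$.

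Finally, to close the chain at $h_{33}(-q_0^{2}/z^{*})$, I would apply the first symmetry once more, now in the form $h_{33}(w)=s_{33}^{*}(w^{*})$ from~\eqref{2.68}/~\eqref{2.69c}, which is valid on $\mathbb{U}_3$. Taking $w=-q_0^{2}/z^{*}\in\mathbb{U}_3$ (since the previous paragraph showed $-q_0^{2}/z\in\mathbb{U}_4$, and conjugation swaps $\mathbb{U}_3\leftrightarrow\mathbb{U}_4$) produces $h_{33}(-q_0^{2}/z^{*})=s_{33}^{*}(-q_0^{2}/z)$. Combined with step~two this yields $s_{11}(z)=0\Longleftrightarrow h_{33}(-q_0^{2}/z^{*})=0$, completing the quadruple equivalence.

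The steps are essentially bookkeeping; there is no genuine analytic obstacle. The only point that requires a little care is the domain verification — checking that each of the four points $z$, $z^{*}$, $-q_0^{2}/z$, $-q_0^{2}/z^{*}$ lands in the correct one of $\mathbb{U}_1,\mathbb{U}_2,\mathbb{U}_4,\mathbb{U}_3$ respectively, so that the relevant scattering coefficient is analytic (hence the value $0$ is meaningful) and the symmetry formula in~\eqref{2.68}--\eqref{2.69} or~\eqref{2.76}--\eqref{2.77} is applicable at that particular point. Once those four membership checks are in place, the equivalences follow immediately by composing the first symmetry $w\mapsto w^{*}$ with the second symmetry $w\mapsto -q_0^{2}/w$, exactly mirroring the proof pattern of Proposition~\ref{pro:4}.
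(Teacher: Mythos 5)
Your proof is correct and follows exactly the route the paper intends: Proposition~\ref{pro:5} is a direct consequence of the first symmetry~\eqref{2.68} and the second symmetry~\eqref{2.76}, combined with the domain bookkeeping showing that $z^{*}$, $-q_{0}^{2}/z$, $-q_{0}^{2}/z^{*}$ land in $\mathbb{U}_{2}$, $\mathbb{U}_{4}$, $\mathbb{U}_{3}$ respectively, where the corresponding coefficients are analytic by Theorem~\ref{thm:2}. The paper leaves this chaining implicit, so your write-up simply makes explicit the same argument.
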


Propositions~\ref{pro:4} and~\ref{pro:5} suggest that symmetric quartets give rise to discrete eigenvalues: $z$, $z^{*}$, $-q_{0}^{2}/z$ and $-q_{0}^{2}/z^{*}$. Hence, there are three potential types of eigenvalues related to $z\in\mathbb{U}_{1}$:
\begin{enumerate}
  \item $s_{11}(z)=0$ and $h_{22}(z)\neq0$, we call it the first type of eigenvalues.
  \item $s_{11}(z)\neq0$ and $h_{22}(z)=0$, we call it the second type of eigenvalues.
  \item $s_{11}(z)=h_{22}(z)=0$, we call it the third type of eigenvalues.
\end{enumerate}

We proceed to describe these three different types of eigenvalues, along with their respective properties at these discrete points.

\begin{proposition}\label{pro:6}
Suppose $\mathfrak{Im}z>0$ and $|z|>q_{0}$, the following conclusions are equivalent
\begin{itemize} \itemsep0.75pt
  \item $\mu_{1}(z)=\mathbf{0}$.
  \item $\mu_{4}(-\displaystyle{\frac{q_{0}^{2}}{z}})=\mathbf{0}$.
  \item \textit{$\psi_{-,2}(z^{*})$ and $\psi_{+,1}(z^{*})$ are linearly correlated}.
  \item \textit{$\psi_{+,3}(-\displaystyle{\frac{q_{0}^{2}}{z^{*}}})$ and $\psi_{-,2}(-\displaystyle{\frac{q_{0}^{2}}{z^{*}}})$ are linearly correlated}.
\end{itemize}
\end{proposition}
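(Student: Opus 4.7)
The plan is to chain the four conditions through the two symmetry structures already established for the auxiliary eigenfunctions, without invoking anything beyond the identities in Sections~\ref{s:Dsp:Ap} and~\ref{s:Dsp:Sym}. The equivalence between the first two bullets is immediate from the second-symmetry relation~\eqref{2.78}, which reads $\mu_1(z)=-\tfrac{\mathrm{i}q_0}{z}\mu_4(-q_0^2/z)$ for $z\in\mathbb{U}_1$; since the scalar factor $\mathrm{i}q_0/z$ is nonzero on $\mathbb{U}_1$, the two vector-valued functions vanish simultaneously at paired points.

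For the equivalence between the first and third bullets, I would apply the Schwarz-reflection identity in~\eqref{2.70}, whose first line states, for $z\in\mathbb{U}_2$, that $\mu_1^*(z^*)=\mathrm{e}^{-\mathrm{i}[\delta_1+\delta_2+\delta_3](z)}[\psi_{+,1}\times\psi_{-,2}](z)$. Substituting $z=w^*$ with $w\in\mathbb{U}_1$ turns the left-hand side into a nonzero scalar multiple of $\overline{\mu_1(w)}$, while the right-hand side becomes the cross product $[\psi_{+,1}(w^*)\times\psi_{-,2}(w^*)]$ times a nonvanishing exponential. Since a cross product of two vectors in $\mathbb{C}^3$ vanishes if and only if the vectors are parallel, we obtain $\mu_1(w)=\mathbf{0}$ iff $\psi_{+,1}(w^*)$ and $\psi_{-,2}(w^*)$ are linearly correlated. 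An identical argument applied to the fourth line of~\eqref{2.70}, $\mu_4^*(z^*)=\mathrm{e}^{-\mathrm{i}[\delta_1+\delta_2+\delta_3](z)}[\psi_{-,2}\times\psi_{+,3}](z)$ for $z\in\mathbb{U}_3$, combined with the paired point $-q_0^2/w\in\mathbb{U}_4$ and the identity $(-q_0^2/w)^*=-q_0^2/w^*$, produces the equivalence between the second and fourth bullets, completing the four-way equivalence.

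The only nontrivial bookkeeping is tracking the domain correspondences under the two involutions $z\mapsto z^*$ and $z\mapsto -q_0^2/z$: in particular, one must verify that $z\mapsto -q_0^2/z$ carries $\mathbb{U}_1$ onto $\mathbb{U}_4$ (using $|z|>q_0$ and $\mathfrak{Im}\,z>0$) and that complex conjugation pairs $\mathbb{U}_1$ with $\mathbb{U}_2$ and $\mathbb{U}_3$ with $\mathbb{U}_4$. The nonvanishing of the exponential prefactors $\mathrm{e}^{\pm\mathrm{i}[\delta_1+\delta_2+\delta_3]}$ is automatic on the relevant domains. No deeper analytical obstacle arises; the proposition is essentially a repackaging of the symmetry identities already derived in the preceding subsections, so the hardest step is purely clerical — keeping the four sheets $\mathbb{U}_1,\dots,\mathbb{U}_4$ and the two involutions straight.
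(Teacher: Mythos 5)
Your proposal is correct and is essentially the argument the paper intends: the paper states Proposition~\ref{pro:6} without a written proof, but the symmetry identities it sets up in~\eqref{2.70} and~\eqref{2.78} are precisely the tools you chain together, exactly as in the analogous result for the Manakov system in~\cite{12}. Your domain bookkeeping ($z\mapsto z^{*}$ pairing $\mathbb{U}_{1}$ with $\mathbb{U}_{2}$ and $\mathbb{U}_{3}$ with $\mathbb{U}_{4}$, and $z\mapsto -q_{0}^{2}/z$ carrying $\mathbb{U}_{1}$ onto $\mathbb{U}_{4}$) and the rank-one criterion for a vanishing cross product in $\mathbb{C}^{3}$ are both sound, so nothing is missing.
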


\begin{proposition}\label{pro:7}
Assume that $\mathfrak{Im}z>0$ and $|z|>q_{0}$, the following conclusions are equivalent
\begin{itemize} \itemsep0.75pt
  \item $\mu_{2}(z^{*})=\mathbf{0}$.
  \item $\mu_{3}(-\displaystyle{\frac{q_{0}^{2}}{z^{*}}})=\mathbf{0}$.
  \item \textit{$\psi_{-,1}(z)$ and $\psi_{+,2}(z)$ are linearly correlated}.
  \item \textit{$\psi_{+,2}(-\displaystyle{\frac{q_{0}^{2}}{z}})$ and $\psi_{-,3}(-\displaystyle{\frac{q_{0}^{2}}{z}})$ are linearly correlated}.
\end{itemize}
\end{proposition}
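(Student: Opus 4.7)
The plan is to prove the four-way equivalence by chaining three pairwise equivalences, exactly parallel to the strategy suggested for Proposition~\ref{pro:6}. The only inputs needed are the two symmetry groups established in Section~\ref{s:Dsp:Sym} together with the elementary fact that a cross product in $\mathbb{C}^{3}$ vanishes precisely when its two (nonzero) arguments are linearly dependent.

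First, for the equivalence of the first two bullets, I would appeal directly to the second-symmetry identity~\eqref{2.78}, which asserts $\mu_{2}(z)=-(\mathrm{i}q_{0}/z)\,\mu_{3}(-q_{0}^{2}/z)$ on $\mathbb{U}_{2}$. Replacing $z$ by $z^{*}$ and observing that the prefactor $-\mathrm{i}q_{0}/z^{*}$ is a nonzero scalar immediately shows $\mu_{2}(z^{*})=\mathbf{0}$ if and only if $\mu_{3}(-q_{0}^{2}/z^{*})=\mathbf{0}$.

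Next, for the equivalence of the first and third bullets, I would use the second entry of the first-symmetry list~\eqref{2.70}, namely $\mu_{2}^{*}(z^{*})=\mathrm{e}^{-\mathrm{i}[\delta_{1}+\delta_{2}+\delta_{3}](z)}[\psi_{-,1}\times\psi_{+,2}](z)$ for $z\in\mathbb{U}_{1}$. Complex conjugation does not affect vanishing, and the exponential factor is nowhere zero, so $\mu_{2}(z^{*})=\mathbf{0}$ holds if and only if $[\psi_{-,1}\times\psi_{+,2}](z)=\mathbf{0}$. The nontrivial leading asymptotics~\eqref{2.21} together with the fact $\det\mathbf{Y}_{\pm}=\rho(z)\neq 0$ on $\mathbb{U}_{1}$ ensure that neither $\psi_{-,1}(z)$ nor $\psi_{+,2}(z)$ is the zero vector, so vanishing of the cross product is equivalent to linear dependence of the two columns. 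Finally, for the equivalence of the third and fourth bullets, I would invoke the second-symmetry identities~\eqref{2.74}, which give $\psi_{-,1}(z)=-(\mathrm{i}q_{0}/z)\,\psi_{-,3}(-q_{0}^{2}/z)$ and $\psi_{+,2}(z)=\psi_{+,2}(-q_{0}^{2}/z)$. Multiplication by the nonzero scalar $-\mathrm{i}q_{0}/z$ preserves linear dependence, so $\{\psi_{-,1}(z),\psi_{+,2}(z)\}$ is linearly dependent if and only if $\{\psi_{-,3}(-q_{0}^{2}/z),\psi_{+,2}(-q_{0}^{2}/z)\}$ is. Chaining these three equivalences closes the loop.

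The main difficulty will not be any one algebraic step but rather the bookkeeping on analyticity domains: one must verify that, as $z$ ranges over $\mathbb{U}_{1}$, the companion points $z^{*}$, $-q_{0}^{2}/z^{*}$, and $-q_{0}^{2}/z$ lie in $\mathbb{U}_{2}$, $\mathbb{U}_{3}$, and $\mathbb{U}_{4}$ respectively, so that every symmetry identity from Section~\ref{s:Dsp:Sym} is being applied inside the region where both sides are defined and analytic (for instance, $\mu_{3}$ is analytic only on $\mathbb{U}_{3}$, and~\eqref{2.74} relates $\psi_{\pm,1}$ on $\{\mathfrak{Im}z\gtrless 0,\,|z|>q_{0}\}$ to $\psi_{\pm,3}$ on $\{\mathfrak{Im}z\gtrless 0,\,|z|<q_{0}\}$). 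Once this Riemann-surface bookkeeping is settled, each step reduces to multiplying or conjugating by a nonzero scalar and applying an identity already in hand.
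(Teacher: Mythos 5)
Your proposal is correct and is exactly the argument the paper intends: the paper states Proposition~\ref{pro:7} without proof, but the symmetry identities~\eqref{2.70},~\eqref{2.74} and~\eqref{2.78} are set up precisely so that the four statements are linked by nonvanishing scalar factors and the cross-product criterion for linear dependence, as you do. Your domain bookkeeping ($z\in\mathbb{U}_{1}$ forces $z^{*}\in\mathbb{U}_{2}$, $-q_{0}^{2}/z^{*}\in\mathbb{U}_{3}$, $-q_{0}^{2}/z\in\mathbb{U}_{4}$) and the observation that the Jost columns are nowhere zero are the right supporting checks.
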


Assuming that the discrete eigenvalues are simple, we can conclude that

\begin{proposition}\label{pro:8}
If $z_{m}$ is the first type eigenvalue with $z_{m}\in\mathbb{U}_{1}$, there exist constants $d_{m}$, $\hat{d}_{m}$, $\check{d}_{m}$ and $\bar{d}_{m}$ such that
\begin{equation}\label{3.5}
\begin{split}
\psi_{-,1}(z_{m})&=\frac{d_{m}}{h_{22}(z_{m})}\mu_{1}(z_{m}), \quad
\mu_{2}(z_{m}^{*})=\hat{d}_{m}\psi_{+,1}(z_{m}^{*}), \\
\mu_{3}(-\frac{q_{0}^{2}}{z_{m}^{*}})&=\check{d}_{m}\psi_{+,3}(-\frac{q_{0}^{2}}{z_{m}^{*}}), \quad
\psi_{-,3}(-\frac{q_{0}^{2}}{z_{m}})=\bar{d}_{m}\mu_{4}(-\frac{q_{0}^{2}}{z_{m}}).
\end{split}
\end{equation}
\end{proposition}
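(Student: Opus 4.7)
The plan is to prove each of the four proportionalities in~\eqref{3.5} by reducing the claimed linear dependence to a single adjoint Wronskian identity and then collapsing it with the Grassmann identity
\begin{equation*}
(\mathbf{a}\times\mathbf{b})\times(\mathbf{c}\times\mathbf{d})=\mathbf{c}\,[\mathbf{a},\mathbf{b},\mathbf{d}]-\mathbf{d}\,[\mathbf{a},\mathbf{b},\mathbf{c}],
\end{equation*}
where $[\cdot,\cdot,\cdot]$ denotes the scalar triple product. First I would exploit the symmetries: by Proposition~\ref{pro:5} combined with~\eqref{2.68} and~\eqref{2.75}, the hypothesis $s_{11}(z_{m})=0$ is equivalent to $h_{11}(z_{m}^{*})=0$, $h_{33}(-q_{0}^{2}/z_{m}^{*})=0$ and $s_{33}(-q_{0}^{2}/z_{m})=0$, while the type-1 hypothesis $h_{22}(z_{m})\neq 0$ propagates analogously to non-vanishing of the companion diagonal entry at each of the other three points of the quartet. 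Through~\eqref{3.2}, $\operatorname{det}\mathbf{\Psi}_{j}$ then vanishes at the corresponding point for each $j=1,2,3,4$, and the four claims in~\eqref{3.5} amount to asserting that the resulting rank drop in $\mathbf{\Psi}_{j}$ does not involve its $\psi_{\pm,2}$ column.

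For the first relation, I would use Corollary~\ref{cor:1} and definition~\eqref{2.57} to write
\begin{equation*}
\psi_{-,1}(z)=\mathrm{e}^{\mathrm{i}[\delta_{1}+\delta_{2}+\delta_{3}](z)}\,\tilde{\psi}_{-,2}(z)\times\tilde{\psi}_{-,3}(z),\qquad \mu_{1}(z)=\mathrm{e}^{\mathrm{i}[\delta_{1}+\delta_{2}+\delta_{3}](z)}\,\tilde{\psi}_{+,1}(z)\times\tilde{\psi}_{-,2}(z),
\end{equation*}
exhibiting the common factor $\tilde{\psi}_{-,2}(z)$. The Grassmann identity kills the triple product with a repeated argument and leaves
\begin{equation*}
\psi_{-,1}(z)\times\mu_{1}(z)=-\mathrm{e}^{2\mathrm{i}[\delta_{1}+\delta_{2}+\delta_{3}](z)}\operatorname{det}[\tilde{\psi}_{+,1},\tilde{\psi}_{-,2},\tilde{\psi}_{-,3}](z)\,\tilde{\psi}_{-,2}(z).
\end{equation*}
Expanding the second and third columns of this Wronskian through the adjoint scattering relation $\tilde{\psi}_{-}=\tilde{\psi}_{+}\tilde{\mathbf{S}}$ yields $(\tilde{s}_{22}\tilde{s}_{33}-\tilde{s}_{23}\tilde{s}_{32})\operatorname{det}\tilde{\psi}_{+}$, whose parenthesized cofactor equals $\tilde{h}_{11}$; by~\eqref{2.60} this equals $s_{11}(z)$. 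The right-hand side therefore vanishes at $z_{m}$, proving $\psi_{-,1}(z_{m})\parallel\mu_{1}(z_{m})$. The type-1 condition gives $\tilde{s}_{22}(z_{m})=h_{22}(z_{m})\neq 0$ through~\eqref{2.60}, so the expansion of $\tilde{\psi}_{-,2}(z_{m})$ in the basis $\{\tilde{\psi}_{+,1},\tilde{\psi}_{+,2},\tilde{\psi}_{+,3}\}$ has a nonzero $\tilde{\psi}_{+,2}$-component, whence $\mu_{1}(z_{m})\neq 0$; one then defines $d_{m}$ by $\psi_{-,1}(z_{m})=[d_{m}/h_{22}(z_{m})]\mu_{1}(z_{m})$.

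The remaining three relations follow from the same Grassmann-plus-Wronskian template applied at $z_{m}^{*}$, $-q_{0}^{2}/z_{m}^{*}$, and $-q_{0}^{2}/z_{m}$ to the pairs $(\mu_{2},\psi_{+,1})$, $(\mu_{3},\psi_{+,3})$, $(\psi_{-,3},\mu_{4})$. In each case the two vectors share one adjoint factor --- $\tilde{\psi}_{+,2}$, $\tilde{\psi}_{+,2}$, and $\tilde{\psi}_{-,2}$, respectively --- and the triple product that survives Grassmann is evaluated through $\tilde{\psi}_{-}=\tilde{\psi}_{+}\tilde{\mathbf{S}}$ or $\tilde{\psi}_{+}=\tilde{\psi}_{-}\tilde{\mathbf{H}}$ and~\eqref{2.60} to a single diagonal scattering coefficient, namely $h_{11}$, $h_{33}$, or $s_{33}$ at the relevant quartet point, which is exactly the coefficient shown to vanish in paragraph one. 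The constants $\hat{d}_{m}$, $\check{d}_{m}$, $\bar{d}_{m}$ are then defined as the induced proportionalities, with the scalar normalizations in~\eqref{3.5} chosen by absorbing the appropriate value of $h_{22}$ (or $1$) into the definition. The principal obstacle is purely combinatorial bookkeeping: for each of the four points one must identify the correct shared adjoint vector, choose between the $\tilde{\mathbf{S}}$- and $\tilde{\mathbf{H}}$-expansion so that the cofactor picked out matches the vanishing scattering entry, and read off the corresponding diagonal identity from~\eqref{2.60}. Once that dictionary is in place, the four computations are essentially a single calculation performed four times.
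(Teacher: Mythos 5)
Your overall strategy is sound and is in fact the standard one: the cross-product identities you derive from Corollary~\ref{cor:1},~\eqref{2.57},~\eqref{2.55} and~\eqref{2.60} via the Grassmann identity are precisely the symmetry relations the paper has already recorded in~\eqref{2.66}--\eqref{2.70}. For example, your first computation reproduces $[\mu_{1}\times\psi_{-,1}](z)=\rho(z)s_{11}(z)\mathrm{e}^{\mathrm{i}[\delta_{1}+\delta_{2}+\delta_{3}](z)}\psi_{-,2}^{*}(z^{*})$, which is~\eqref{2.66} and vanishes at $z_{m}$ since $s_{11}(z_{m})=0$; your dictionary for the other three quartet points (shared adjoint factor, surviving cofactor $h_{11}$, $h_{33}$, $s_{33}$, each identified through $\tilde{\mathbf{S}}=\mathbf{J}_{2}\mathbf{H}^{T}\mathbf{J}_{2}^{-1}$) also checks out. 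Two points you leave implicit deserve a sentence each: the expansions $\tilde{\psi}_{-}=\tilde{\psi}_{+}\tilde{\mathbf{S}}$ hold only on $\Sigma$, so each resulting identity must be continued into the relevant region $\mathbb{U}_{j}$ using analyticity of both sides; and a vanishing cross product gives proportionality only after one of the two vectors is known to be nonzero, which is automatic for the Jost columns but not for $\mu_{1}$ and $\mu_{4}$.

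That last point is where your argument has a genuine gap. To get $\mu_{1}(z_{m})\neq\mathbf{0}$ you expand $\tilde{\psi}_{-,2}(z_{m})$ in the ``basis'' $\{\tilde{\psi}_{+,1},\tilde{\psi}_{+,2},\tilde{\psi}_{+,3}\}$ and point to its nonzero $\tilde{\psi}_{+,2}$-component $\tilde{s}_{22}(z_{m})=h_{22}(z_{m})$. But by~\eqref{2.54} the column $\tilde{\psi}_{+,2}$ is analytic only for $\mathfrak{Im}z<0$ and $\tilde{\psi}_{+,3}$ only on $\mathbb{U}_{4}$, so neither is defined at $z_{m}\in\mathbb{U}_{1}$: the adjoint scattering expansion lives on $\Sigma$ and cannot be evaluated at $z_{m}$ as written. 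The fix is short and stays within your framework: apply the same Grassmann computation to the pair $\psi_{+,2}$ and $\mu_{1}$, which share the factor $\tilde{\psi}_{+,1}$; this yields $[\psi_{+,2}\times\mu_{1}](z)=h_{22}(z)\mathrm{e}^{\mathrm{i}[\delta_{1}+\delta_{2}+\delta_{3}](z)}\psi_{+,1}^{*}(z^{*})$ on $\mathbb{U}_{1}$ (this is~\eqref{2.66} again). Since $h_{22}(z_{m})\neq0$ and $\psi_{+,1}(z_{m}^{*})\neq\mathbf{0}$ (a solution of the first-order system~\eqref{2.1} vanishing at one $x$ vanishes identically, contradicting~\eqref{2.21}), the left-hand side is nonzero, hence $\mu_{1}(z_{m})\neq\mathbf{0}$; Proposition~\ref{pro:6} then gives $\mu_{4}(-q_{0}^{2}/z_{m})\neq\mathbf{0}$, which you also need for the fourth relation. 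With that substitution your proof goes through.
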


\begin{proposition}\label{pro:9}
If $\theta_{m}$ is the second type eigenvalue with $\theta_{m}\in\mathbb{U}_{1}$, there exist constants $g_{m}$, $\hat{g}_{m}$, $\check{g}_{m}$ and $\bar{g}_{m}$ such that
\begin{equation}\label{3.6}
\begin{split}
\mu_{1}(\theta_{m})&=g_{m}\psi_{+,2}(\theta_{m}), \quad
\psi_{-,2}(\theta_{m}^{*})=\hat{g}_{m}\mu_{2}(\theta_{m}^{*}), \\
\psi_{-,2}(-\frac{q_{0}^{2}}{\theta_{m}^{*}})&=\check{g}_{m}\mu_{3}(-\frac{q_{0}^{2}}{\theta_{m}^{*}}), \quad
\mu_{4}(-\frac{q_{0}^{2}}{\theta_{m}})=\bar{g}_{m}\psi_{+,2}(-\frac{q_{0}^{2}}{\theta_{m}}).
\end{split}
\end{equation}
\end{proposition}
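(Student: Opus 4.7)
The plan is to reduce each of the four proportionalities in Proposition~\ref{pro:9} to the vanishing of a single cross-product identity whose right-hand side is proportional to $h_{22}$ or $s_{22}$ at the relevant quartet point, and whose left-hand side extends analytically into the domain $\mathbb{U}_{j}$ where the eigenvalue lives.

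First I would apply Proposition~\ref{pro:4} to spread the hypothesis $h_{22}(\theta_{m})=0$ into the quartet
\begin{equation*}
h_{22}(\theta_{m})=s_{22}(\theta_{m}^{*})=s_{22}(-q_{0}^{2}/\theta_{m}^{*})=h_{22}(-q_{0}^{2}/\theta_{m})=0,
\end{equation*}
placing the four points of interest in $\mathbb{U}_{1}$, $\mathbb{U}_{2}$, $\mathbb{U}_{3}$, $\mathbb{U}_{4}$ respectively. Next, for each claim I would derive a cross-product identity valid on $\Sigma$ by extracting from~\eqref{2.61} the formula that expresses the relevant auxiliary eigenfunction as a linear combination of two Jost columns, and then crossing it with the Jost column that I want to prove is proportional. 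For the first claim, the identity $\mu_{1}=h_{22}\psi_{+,3}-h_{23}\psi_{+,2}$ is read off from~\eqref{2.61}; crossing with $\psi_{+,2}$ annihilates the $h_{23}$ term, and substituting $\psi_{+,2}\times\psi_{+,3}=\mathrm{e}^{\mathrm{i}[\delta_{1}+\delta_{2}+\delta_{3}]}\tilde{\psi}_{+,1}$ from Corollary~\ref{cor:1} with cyclic indices $(1,2,3)$ yields
\begin{equation*}
\mu_{1}(z)\times\psi_{+,2}(z)=-h_{22}(z)\,\mathrm{e}^{\mathrm{i}[\delta_{1}+\delta_{2}+\delta_{3}](z)}\,\tilde{\psi}_{+,1}(z),\qquad z\in\Sigma.
\end{equation*}
The three analogous manipulations of~\eqref{2.61}, using $\mu_{2}=s_{22}\psi_{-,3}-s_{23}\psi_{-,2}$, $\mu_{3}=s_{22}\psi_{-,1}-s_{21}\psi_{-,2}$, and $\mu_{4}=h_{22}\psi_{+,1}-h_{21}\psi_{+,2}$ crossed respectively with $\psi_{-,2}$, $\psi_{-,2}$, $\psi_{+,2}$, produce the sister identities in which the right-hand side carries a factor of $s_{22}$, $s_{22}$, $h_{22}$ times one of $\tilde{\psi}_{-,1}$, $\tilde{\psi}_{-,3}$, $\tilde{\psi}_{+,3}$.

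The crucial step is the analytic extension of each identity from $\Sigma$ to the interior of the correct domain. In the first case, $\mu_{1}$ and $\tilde{\psi}_{+,1}$ are analytic on $\mathbb{U}_{1}$ by Theorem~\ref{thm:1} and~\eqref{2.54}, while $\psi_{+,2}$ and $h_{22}$ are analytic on the entire half-plane $\mathfrak{Im}\,z>0\supset\mathbb{U}_{1}$, so both sides are analytic on $\mathbb{U}_{1}$ and the equality extends off $\Sigma$ by the uniqueness of analytic continuation. The same bookkeeping works verbatim for $\mathbb{U}_{2}$, $\mathbb{U}_{3}$, $\mathbb{U}_{4}$ with the adjoint columns $\tilde{\psi}_{-,1}$, $\tilde{\psi}_{-,3}$, $\tilde{\psi}_{+,3}$. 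Evaluating each extended identity at the corresponding quartet point from Step 1, the scalar prefactor $h_{22}$ or $s_{22}$ vanishes, forcing the two vector columns on the left-hand side to be parallel; the proportionality constants $g_{m}$, $\hat{g}_{m}$, $\check{g}_{m}$, $\bar{g}_{m}$ then exist by definition, with simplicity of the eigenvalue ruling out the degenerate case in which $\psi_{+,2}$ or $\mu_{j}$ itself vanishes at the point.

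The main technical obstacle I anticipate is precisely this bookkeeping of analyticity domains: each of the four identities must pair an auxiliary eigenfunction $\mu_{j}$ with exactly the Jost column whose region of analyticity, together with that of the adjoint column appearing on the right-hand side, covers the disk $\mathbb{U}_{j}$ containing the eigenvalue. This pairing is what forces the specific matching of columns in the statement of Proposition~\ref{pro:9}; once it is verified, the remainder is essentially the algebra already used for Proposition~\ref{pro:8}, with the roles of the scattering factors $s_{11}$ and $h_{22}$ interchanged.
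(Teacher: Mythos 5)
Your proposal is correct in its main mechanism and is essentially the argument this paper tacitly relies on (it states Proposition~\ref{pro:9} without proof, following the Manakov treatment of~\cite{12}): spread the zero of $h_{22}$ over the symmetric quartet via Proposition~\ref{pro:4}, rewrite each line of~\eqref{2.61} as $\mu_{1}=h_{22}\psi_{+,3}-h_{23}\psi_{+,2}$ and its three siblings, cross with the appropriate second Jost column to eliminate the non-analytic off-diagonal scattering coefficient, identify the surviving cross product with an adjoint Jost column via Corollary~\ref{cor:1}, continue analytically into the correct $\mathbb{U}_{j}$, and evaluate at the quartet point. Your domain bookkeeping is right in all four cases, and the first and fourth identities follow completely, since $\psi_{+,2}$ cannot vanish (no column of $\psi_{+}$ can, as $\det\psi_{+}=\rho\,\mathrm{e}^{\mathrm{i}(\delta_{1}+\delta_{2}+\delta_{3})}\neq0$ away from the branch points).

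The one place where your argument is thinner than it should be is the second and third identities: the vanishing cross product only gives linear dependence of $\psi_{-,2}(\theta_{m}^{*})$ and $\mu_{2}(\theta_{m}^{*})$, whereas the statement writes the nonvanishing Jost column as a multiple of the auxiliary eigenfunction, which additionally requires $\mu_{2}(\theta_{m}^{*})\neq\mathbf{0}$ (and likewise $\mu_{3}(-q_{0}^{2}/\theta_{m}^{*})\neq\mathbf{0}$). You attribute this to ``simplicity of the eigenvalue,'' but the correct source is the defining condition $s_{11}(\theta_{m})\neq0$ of the second type: if $\mu_{2}(\theta_{m}^{*})=\mathbf{0}$, then by Proposition~\ref{pro:7} $\psi_{-,1}(\theta_{m})$ and $\psi_{+,2}(\theta_{m})$ are parallel, so together with your first identity $\mu_{1}(\theta_{m})=g_{m}\psi_{+,2}(\theta_{m})$ one gets $[\mu_{1}\times\psi_{-,1}](\theta_{m})=\mathbf{0}$, and then~\eqref{2.66} (fourth line) forces $\rho(\theta_{m})s_{11}(\theta_{m})\psi_{-,2}^{*}(\theta_{m}^{*})=\mathbf{0}$, hence $s_{11}(\theta_{m})=0$, a contradiction. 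Adding that one step closes the gap; the rest of your write-up stands as is.
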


\begin{proposition}\label{pro:10}
If $\xi_{m}$ is the third type eigenvalue with $\xi_{m}\in\mathbb{U}_{1}$ and $\mu_{1}(\xi_{m})=\mu_{2}(\xi_{m}^{*})=\mathbf{0}$, there exist constants $f_{m}$, $\hat{f}_{m}$, $\check{f}_{m}$ and $\bar{f}_{m}$ such that
\begin{equation}\label{3.7}
\begin{split}
\psi_{-,1}(\xi_{m})&=f_{m}\psi_{+,2}(\xi_{m}), \quad \quad
\psi_{-,2}(\xi_{m}^{*})=\hat{f}_{m}\psi_{+,1}(\xi_{m}^{*}), \\
\psi_{-,2}(-\frac{q_{0}^{2}}{\xi_{m}^{*}})&=\check{f}_{m}\psi_{+,3}(-\frac{q_{0}^{2}}{\xi_{m}^{*}}), \quad
\psi_{-,3}(-\frac{q_{0}^{2}}{\xi_{m}})=\bar{f}_{m}\psi_{+,2}(-\frac{q_{0}^{2}}{\xi_{m}}).
\end{split}
\end{equation}
\end{proposition}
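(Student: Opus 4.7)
The proposition asserts four linear-dependence relations among Jost eigenfunctions at the symmetric quartet $\xi_m,\xi_m^*,-q_0^2/\xi_m,-q_0^2/\xi_m^*$. Each relation in~\eqref{3.7} is literally one of the equivalent statements already catalogued in Propositions~\ref{pro:6} and~\ref{pro:7}, so the argument is essentially a bookkeeping exercise: identify the correct hypothesis and read off the corresponding linear dependence.

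The plan is as follows. Since $\xi_m \in \mathbb{U}_1$, we have $\mathfrak{Im}\xi_m > 0$ and $|\xi_m| > q_0$, so Propositions~\ref{pro:6} and~\ref{pro:7} both apply with $z = \xi_m$. First I would invoke the hypothesis $\mu_2(\xi_m^*) = \mathbf{0}$: the two $\psi$-level equivalences in Proposition~\ref{pro:7} give that $\psi_{-,1}(\xi_m)$ is proportional to $\psi_{+,2}(\xi_m)$ and that $\psi_{-,3}(-q_0^2/\xi_m)$ is proportional to $\psi_{+,2}(-q_0^2/\xi_m)$, yielding the scalars $f_m$ and $\bar{f}_m$ respectively. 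Next I would invoke the hypothesis $\mu_1(\xi_m) = \mathbf{0}$: the two $\psi$-level equivalences in Proposition~\ref{pro:6} deliver that $\psi_{-,2}(\xi_m^*)$ is proportional to $\psi_{+,1}(\xi_m^*)$ and that $\psi_{-,2}(-q_0^2/\xi_m^*)$ is proportional to $\psi_{+,3}(-q_0^2/\xi_m^*)$, producing the remaining scalars $\hat{f}_m$ and $\check{f}_m$. Together these four proportionalities are exactly~\eqref{3.7}.

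The only step demanding care is verifying that each ``denominator'' eigenfunction does not vanish at the relevant spectral point, so that the proportionality scalars are genuinely well-defined. This is not a real obstacle: a Jost column $\psi_{\pm,j}(x,t,z)$ cannot be identically zero in $x$ for any $z\in\Sigma$ away from the branch points $\pm\mathrm{i}q_0$, because its $x\to\pm\infty$ asymptotics~\eqref{2.21} are dictated by the nonvanishing matrix $\mathbf{Y}_\pm(z)$, and $\xi_m$ is assumed to lie strictly inside $\mathbb{U}_1$. The simplicity assumption on the discrete spectrum guarantees that these scalars are finite rather than infinite. A further consistency check, namely that the four constants satisfy the relations forced by the two symmetries of Section~\ref{s:Dsp:Sym}, is not required by the statement, which asserts only existence; that refinement belongs to the subsequent analysis of the norming constants.
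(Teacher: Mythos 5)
Your proposal is correct and follows exactly the route the paper intends: the four relations in~\eqref{3.7} are read off from the hypotheses $\mu_{2}(\xi_{m}^{*})=\mathbf{0}$ (giving, via Proposition~\ref{pro:7}, the proportionalities defining $f_{m}$ and $\bar{f}_{m}$) and $\mu_{1}(\xi_{m})=\mathbf{0}$ (giving, via Proposition~\ref{pro:6}, those defining $\hat{f}_{m}$ and $\check{f}_{m}$), with the non-vanishing of the reference columns $\psi_{+,2}$, $\psi_{+,1}$, $\psi_{+,3}$ guaranteed by their $x\to+\infty$ asymptotics~\eqref{2.21}. Your remark that the symmetry relations among the constants are deferred to the norming-constant analysis (Proposition~\ref{pro:11}) is also consistent with the paper's organization.
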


The combination of Propositions~\ref{pro:8},~\ref{pro:9} and~\ref{pro:10} offers a comprehensive understanding of the discrete spectrum. Specifically, it becomes evident that discrete eigenvalues of each type correspond to bound states in the scattering problem.

Let $\{z_{m}\}_{m=1}^{\mathrm{M}_{1}}$ represent all values of the first type of eigenvalues, then
\begin{subequations}\label{3.8}
\begin{align}
\nu_{-,1}(z_{m})&=\frac{d_{m}}{h_{22}(z_{m})}c_{1}(z_{m})
\mathrm{e}^{\mathrm{i}[\delta_{3}-\delta_{1}](z_{m})},  \quad
c_{2}(z_{m}^{*})=\hat{d}_{m}\nu_{+,1}(z_{m}^{*})
\mathrm{e}^{\mathrm{i}[\delta_{1}-\delta_{3}](z_{m}^{*})}, \\
c_{3}(-\frac{q_{0}^{2}}{z_{m}^{*}})&=\check{d}_{m}\nu_{+,3}(-\frac{q_{0}^{2}}{z_{m}^{*}})
\mathrm{e}^{\mathrm{i}[\delta_{1}+\delta_{3}](z_{m}^{*})}, \quad
\nu_{-,3}(-\frac{q_{0}^{2}}{z_{m}})=\bar{d}_{m}c_{4}(-\frac{q_{0}^{2}}{z_{m}})
\mathrm{e}^{\mathrm{i}[\delta_{3}-\delta_{1}](z_{m})}.
\end{align}
\end{subequations}

Let $\{\theta_{m}\}_{m=1}^{\mathrm{m}_{2}}$ represent all values of the second type of eigenvalues, then
\begin{subequations}\label{3.9}
\begin{align}
c_{1}(\theta_{m})&=g_{m}\nu_{+,2}(\theta_{m})
\mathrm{e}^{\mathrm{i}[\delta_{2}-\delta_{3}](\theta_{m})}, \quad
\nu_{-,2}(\theta_{m}^{*})=\hat{g}_{m}c_{2}(\theta_{m}^{*})
\mathrm{e}^{\mathrm{i}[\delta_{3}-\delta_{2}](\theta_{m}^{*})}, \\
\nu_{-,2}(-\frac{q_{0}^{2}}{\theta_{m}^{*}})&=\check{g}_{m}c_{3}(-\frac{q_{0}^{2}}{\theta_{m}^{*}})
\mathrm{e}^{\mathrm{i}[\delta_{3}-\delta_{2}](\theta_{m}^{*})}, \quad
c_{4}(-\frac{q_{0}^{2}}{\theta_{m}})=\bar{g}_{m}\nu_{+,2}(-\frac{q_{0}^{2}}{\theta_{m}})
\mathrm{e}^{\mathrm{i}[\delta_{2}-\delta_{3}](\theta_{m})}.
\end{align}
\end{subequations}

Let $\{\xi_{m}\}_{m=1}^{\mathrm{M}_{3}}$ represent all values of the third type of eigenvalues, then
\begin{subequations}\label{3.10}
\begin{align}
\nu_{-,1}(\xi_{m})&=f_{m}\nu_{+,2}(\xi_{m})
\mathrm{e}^{\mathrm{i}[\delta_{2}-\delta_{1}](\xi_{m})}, \quad
\nu_{-,2}(\xi_{m}^{*})=\hat{f}_{m}\nu_{+,1}(\xi_{m}^{*})
\mathrm{e}^{\mathrm{i}[\delta_{1}-\delta_{2}](\xi_{m}^{*})}, \\
\nu_{-,2}(-\frac{q_{0}^{2}}{\xi_{m}^{*}})&=
\check{f}_{m}\nu_{+,3}(-\frac{q_{0}^{2}}{\xi_{m}^{*}})
\mathrm{e}^{\mathrm{i}[\delta_{1}-\delta_{2}](\xi_{m}^{*})}, \quad
\nu_{-,3}(-\frac{q_{0}^{2}}{\xi_{m}})=\bar{f}_{m}\nu_{-,2}(-\frac{q_{0}^{2}}{\xi_{m}})
\mathrm{e}^{\mathrm{i}[\delta_{2}-\delta_{1}](\xi_{m})}.
\end{align}
\end{subequations}

\begin{proposition}\label{pro:11}
The norming constants in Propositions~\ref{pro:8},~\ref{pro:9} and~\ref{pro:10} obey the symmetry relations
\begin{subequations}\label{3.11}
\begin{align}
\bar{g}_{m}&=\frac{\mathrm{i}\theta_{m}}{q_{0}}g_{m}, \quad \hat{g}_{m}=-\frac{g_{m}^{*}}{\rho(\theta_{m}^{*})h_{11}(\theta_{m}^{*})}, \quad
\check{g}_{m}=\frac{\mathrm{i}q_{0}}{\theta_{m}^{*}}
\frac{g_{m}^{*}}{\rho(\theta_{m}^{*})h_{11}(\theta_{m}^{*})}, \\
\bar{f}_{m}&=\frac{\mathrm{i}\xi_{m}}{q_{0}}f_{m}, \quad
\hat{f}_{m}=-\frac{s_{22}^{\prime}(\xi_{m}^{*})}{h_{11}^{\prime}(\xi_{m}^{*})}
\frac{f_{m}^{*}}{\rho(\xi_{m}^{*})}, \quad \check{f}_{m}=\frac{\mathrm{i}q_{0}}{\xi_{m}^{*}}
\frac{s_{22}^{\prime}(\xi_{m}^{*})}{h_{11}^{\prime}(\xi_{m}^{*})}\frac{f_{m}^{*}}{\rho(\xi_{m}^{*})}, \\
\bar{d}_{m}&=\frac{d_{m}}{h_{22}(z_{m})}, \quad \hat{d}_{m}=\check{d}_{m}=-d_{m}^{*}.
\end{align}
\end{subequations}
\end{proposition}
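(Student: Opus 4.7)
The plan is to derive each of the relations in (3.11) by applying one of the two basic symmetries---the reflection $z\mapsto z^{*}$ and the inversion $z\mapsto -q_{0}^{2}/z$---or their composition to the bound-state identities of Propositions~\ref{pro:8}--\ref{pro:10}, and then matching coefficients. The \emph{barred} constants $\bar d_{m},\bar g_{m},\bar f_{m}$ come from the inversion alone: for instance, starting from $\mu_{1}(\theta_{m})=g_{m}\psi_{+,2}(\theta_{m})$, substitution of~\eqref{2.78} and~\eqref{2.74} at $z=\theta_{m}$ yields $\mu_{4}(-q_{0}^{2}/\theta_{m})=(\mathrm{i}\theta_{m}/q_{0})g_{m}\psi_{+,2}(-q_{0}^{2}/\theta_{m})$, and comparison with Proposition~\ref{pro:9} gives $\bar g_{m}=(\mathrm{i}\theta_{m}/q_{0})g_{m}$. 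Analogous substitutions---also using $\psi_{\pm,1}\leftrightarrow\psi_{\pm,3}$ from~\eqref{2.74}---produce $\bar d_{m}=d_{m}/h_{22}(z_{m})$ and $\bar f_{m}=(\mathrm{i}\xi_{m}/q_{0})f_{m}$.

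The \emph{hatted} constants are obtained by conjugating the bound-state identity at the eigenvalue and invoking the Schwarz-reflection formulas~\eqref{2.70},~\eqref{2.71},~\eqref{2.66} to rewrite the conjugated eigenfunctions as cross products at the reflected eigenvalue; a simplification via the decompositions~\eqref{2.61} then collapses the resulting cross products to a single scalar multiple of a fixed cross product, from which the norming constant can be read off. For $\hat d_{m}$ one combines~\eqref{2.70}b,~\eqref{2.71} with $j=1$, and the rewriting $\mu_{1}=h_{22}\psi_{+,3}-h_{23}\psi_{+,2}$ obtained from~\eqref{2.61}d to obtain, after cancellation of $h_{22}(z_{m})$, the clean identity $\hat d_{m}=-d_{m}^{*}$. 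For $\hat g_{m}$ the analogous computation uses~\eqref{2.66}e in place of~\eqref{2.71}, which introduces the extra normalization factor $1/[\rho(\theta_{m}^{*})h_{11}(\theta_{m}^{*})]$. The \emph{checked} constants $\check d_{m},\check g_{m},\check f_{m}$ then follow by applying the inversion to the hatted relation (or equivalently the reflection to the barred one); the prefactors $\mathrm{i}q_{0}/z^{*}$ that appear are precisely those tabulated, and in the first-type case the two prefactors cancel so that $\check d_{m}=\hat d_{m}=-d_{m}^{*}$.

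The main obstacle is the third-type case $\hat f_{m}$, and consequently $\check f_{m}$. Because $\xi_{m}$ is a simultaneous simple zero of $s_{11}$ and $h_{22}$, Proposition~\ref{pro:4} together with~\eqref{2.68}--\eqref{2.69} forces $h_{11}(\xi_{m}^{*})=s_{22}(\xi_{m}^{*})=0$ as well, so the cross-product identity~\eqref{2.66}e used in the derivation of $\hat g_{m}$ becomes a $0/0$ indeterminate form at $\xi_{m}^{*}$. To extract $\hat f_{m}$ one must Taylor-expand both $\mu_{1}$ (which also vanishes at $\xi_{m}$ by Proposition~\ref{pro:10}) and the scalar $s_{11}$ about $\xi_{m}$, and dually the cross-product expression and $h_{11}$ about $\xi_{m}^{*}$; careful tracking of which first-order terms survive produces the clean ratio $s_{22}'(\xi_{m}^{*})/h_{11}'(\xi_{m}^{*})$ in the final formula. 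This l'H\^opital-type bookkeeping at the double zero is the most delicate step of the argument.
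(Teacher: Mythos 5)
Your derivation is correct and is exactly the argument the paper's machinery is set up for (the paper states Proposition~\ref{pro:11} without proof): the barred constants follow from the second symmetry via~\eqref{2.74} and~\eqref{2.78}, the hatted ones from the cross-product identities~\eqref{2.66},~\eqref{2.70},~\eqref{2.71} together with~\eqref{2.68}, the checked ones by composing the two, and your l'H\^opital treatment of the double zero for $\hat f_{m}$ correctly yields the ratio $s_{22}^{\prime}(\xi_{m}^{*})/h_{11}^{\prime}(\xi_{m}^{*})$ after conjugation of $h_{22}^{\prime}(\xi_{m})/s_{11}^{\prime}(\xi_{m})$. The only blemish is the reference to ``\eqref{2.66}e,'' which does not exist --- you mean the last line of~\eqref{2.66} (the $\psi_{\mp,2}^{*}(z^{*})$ relations, whose $\rho\,s_{11}$ denominator is the source of the extra factor $1/[\rho(\theta_{m}^{*})h_{11}(\theta_{m}^{*})]$).
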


\subsection{Asymptotic behavior}
\label{s:Dsirp:Ab}

With the help of the Wentzel-Kramers-Brillouin expansion method~\cite{B6}, the asymptotic behavior of the modified eigenfunctions can be analyzed. Specifically, combined with the differential equations~\eqref{2.26}, so it has asymptotic properties
\begin{subequations}
\begin{align}
\nu_{\pm,1}(x,t,z)&=\begin{pmatrix}
    1   \\
    -\mathrm{i}\mathbf{q}/z
  \end{pmatrix}+O(z^{-2}), \quad z\rightarrow\infty, \label{3.13a} \\
\nu_{\pm,2}(x,t,z)&=\begin{pmatrix}
    -\mathrm{i}\mathbf{q}^{\dagger}\mathbf{q}_{\pm}^{\perp}/q_{0}z   \\
    \mathbf{q}_{\pm}^{\perp}/q_{0}
  \end{pmatrix}+O(z^{-2}), \quad z\rightarrow\infty, \label{3.13b} \\
\nu_{\pm,3}(x,t,z)&=\begin{pmatrix}
    -\mathrm{i}\mathbf{q}^{\dagger}\mathbf{q}_{\pm}/q_{0}z   \\
    \mathbf{q}_{\pm}/q_{0}
  \end{pmatrix}+O(z^{-2}), \quad z\rightarrow\infty, \label{3.13c}
\end{align}
\end{subequations}
and
\begin{subequations}\label{3.14}
\begin{align}
\nu_{\pm,1}(x,t,z)&=\begin{pmatrix}
    \mathbf{q}^{\dagger}\mathbf{q}_{\pm}/q_{0}^{2}  \\
    -\mathrm{i}\mathbf{q}_{\pm}/z
  \end{pmatrix}+O(z), \quad z\rightarrow0, \\
\nu_{\pm,2}(x,t,z)&=\begin{pmatrix}
    0   \\
    \mathbf{q}_{\pm}^{\perp}/q_{0}
  \end{pmatrix}+O(z), \quad z\rightarrow0, \\
\nu_{\pm,3}(x,t,z)&=\begin{pmatrix}
    -\mathrm{i}q_{0}/z   \\
    \mathbf{q}/q_{0}
  \end{pmatrix}+O(z), \quad z\rightarrow0.
\end{align}
\end{subequations}

For $j=1,2,3,4$ in terms of definition of the new auxiliary eigenfunctions $\mu_{j}(z)$ and the modified auxiliary eigenfunctions $c_{j}(z)$, combining~\eqref{2.57} with~\eqref{2.70} we can obtain
\begin{equation}\begin{split}
c_{1}(z)&=\begin{pmatrix}
    -\mathrm{i}\mathbf{q}^{\dagger}\mathbf{q}_{-}/q_{0}z  \\
    \mathbf{q}_{-}/q_{0}
  \end{pmatrix}+O(z^{-2}), \quad
c_{2}(z)=\begin{pmatrix}
    -\mathrm{i}\mathbf{q}^{\dagger}\mathbf{q}_{+}/q_{0}z  \\
    \mathbf{q}_{+}/q_{0}
  \end{pmatrix}+O(z^{-2}), \quad z\rightarrow\infty, \\
c_{3}(z)&=\begin{pmatrix}
    \mathbf{q}_{-}^{\dagger}\mathbf{q}_{+}/q_{0}^{2}     \\
    -\mathrm{i}(\mathbf{q}_{-}^{\dagger}\mathbf{q}_{+})\mathbf{q}/q_{0}^{2}z
  \end{pmatrix}+O(z^{-2}), \quad
c_{4}(z)=\begin{pmatrix}
    \mathbf{q}_{+}^{\dagger}\mathbf{q}_{-}/q_{0}^{2}     \\
    -\mathrm{i}(\mathbf{q}_{+}^{\dagger}\mathbf{q}_{-})\mathbf{q}/q_{0}^{2}z
  \end{pmatrix}+O(z^{-2}), \quad z\rightarrow\infty,
\end{split}\end{equation}
and
\begin{equation}\begin{split}
c_{1}(z)&=\begin{pmatrix}
    -\mathrm{i}\mathbf{q}_{+}^{\dagger}\mathbf{q}_{-}/q_{0}z  \\
    (\mathbf{q}_{+}^{\dagger}\mathbf{q})\mathbf{q}_{-}/q_{0}^{3}
  \end{pmatrix}+O(z), \quad
c_{2}(z)=\begin{pmatrix}
    -\mathrm{i}\mathbf{q}_{-}^{\dagger}\mathbf{q}_{+}/q_{0}z  \\
    (\mathbf{q}_{-}^{\dagger}\mathbf{q})\mathbf{q}_{+}/q_{0}^{3}
  \end{pmatrix}+O(z), \quad  z\rightarrow0, \\
c_{3}(z)&=\begin{pmatrix}
    \mathbf{q}^{\dagger}\mathbf{q}_{+}/q_{0}^{2}   \\
    -\mathrm{i}\mathbf{q}_{+}/z
  \end{pmatrix}+O(z), \quad
c_{4}(z)=\begin{pmatrix}
    \mathbf{q}^{\dagger}\mathbf{q}_{-}/q_{0}^{2}   \\
    -\mathrm{i}\mathbf{q}_{-}/z
  \end{pmatrix}+O(z), \quad  z\rightarrow0.
\end{split}\end{equation}

\begin{corollary}\label{cor:2}
As $z\rightarrow\infty$ in the appropriate regions of the complex plane
\begin{subequations}\label{3.15}
\begin{align}
s_{11}(z)&=1+O(\frac{1}{z}), \quad
s_{22}(z)=\frac{\mathbf{q}_{-}^{\dagger}\mathbf{q}_{+}}{q_{0}^{2}}+O(\frac{1}{z}), \quad
s_{33}(z)=\frac{\mathbf{q}_{+}^{\dagger}\mathbf{q}_{-}}{q_{0}^{2}}+O(\frac{1}{z}), \\
h_{11}(z)&=1+O(\frac{1}{z}), \quad
h_{22}(z)=\frac{\mathbf{q}_{+}^{\dagger}\mathbf{q}_{-}}{q_{0}^{2}}+O(\frac{1}{z}), \quad
h_{33}(z)=\frac{\mathbf{q}_{-}^{\dagger}\mathbf{q}_{+}}{q_{0}^{2}}+O(\frac{1}{z}).
\end{align}
\end{subequations}
Similarly, as $z\rightarrow0$ in the appropriate regions of the complex plane
\begin{subequations}\label{3.16}
\begin{align}
s_{11}(\mathbf{z})&=(\mathbf{q}_{+}^{\dagger}\mathbf{q}_{-})/q_{0}^{2}+O(z), \quad
s_{22}(\mathbf{z})=(\mathbf{q}_{-}^{\dagger}\mathbf{q}_{+})/q_{0}^{2}+O(z), \quad
s_{33}(\mathbf{z})=1+O(z),  \\
h_{11}(\mathbf{z})&=(\mathbf{q}_{-}^{\dagger}\mathbf{q}_{+})/q_{0}^{2}+O(z), \quad
h_{22}(\mathbf{z})=(\mathbf{q}_{+}^{\dagger}\mathbf{q}_{-})/q_{0}^{2}+O(z), \quad
h_{33}(\mathbf{z})=1+O(z).
\end{align}
\end{subequations}
\end{corollary}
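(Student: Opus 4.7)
The plan is to express each diagonal scattering coefficient as a $3\times 3$ determinant ratio of Jost eigenfunctions and then substitute the asymptotic expansions of the modified eigenfunctions stated immediately before the corollary. Starting from $\psi_-=\psi_+\mathbf{S}(z)$ and $\psi_+=\psi_-\mathbf{H}(z)$, Cramer's rule together with $\det\psi_\pm(z)=\rho(z)\,\mathrm{e}^{\mathrm{i}(\delta_1+\delta_2+\delta_3)}$ from (2.38) yields, after the exponential phase factors cancel between numerator and denominator,
\begin{equation*}
s_{11}(z)=\frac{\det[\nu_{-,1},\nu_{+,2},\nu_{+,3}](z)}{\rho(z)},\quad
s_{22}(z)=\frac{\det[\nu_{+,1},\nu_{-,2},\nu_{+,3}](z)}{\rho(z)},\quad
s_{33}(z)=\frac{\det[\nu_{+,1},\nu_{+,2},\nu_{-,3}](z)}{\rho(z)},
\end{equation*}
and completely analogous formulas for $h_{ii}(z)$ obtained by swapping $\nu_+\leftrightarrow\nu_-$. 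The entire problem then reduces to computing the leading-order behavior of each determinant.

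For the limit $z\to\infty$ I substitute (3.13a)--(3.13c). The leading terms have a clean block structure: $\nu_{\pm,1}\to(1,\mathbf{0})^T$, while $\nu_{\pm,2}$ and $\nu_{\pm,3}$ have vanishing top entry and lower two-component blocks $\mathbf{q}_\pm^\perp/q_0$ and $\mathbf{q}_\pm/q_0$. Expanding each $3\times 3$ determinant along the first row reduces it to a $2\times 2$ determinant in $\mathbb{C}^2$ of pairs drawn from $\{\mathbf{q}_+^\perp,\mathbf{q}_-^\perp,\mathbf{q}_+,\mathbf{q}_-\}$. Using the definition $\mathbf{v}^\perp=(v_2,-v_1)^\dagger$ together with $\|\mathbf{q}_\pm\|^2=q_0^2$ and the standard identity $\det[\mathbf{q}^\perp,\mathbf{q}']=\mathbf{q}^\dagger\mathbf{q}'$, these evaluate immediately to $1$, $\mathbf{q}_-^\dagger\mathbf{q}_+/q_0^2$ and $\mathbf{q}_+^\dagger\mathbf{q}_-/q_0^2$ for $s_{11}$, $s_{22}$, $s_{33}$ respectively. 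Since $\rho(z)=1+O(z^{-2})$, the $O(1/z)$ subleading pieces in (3.13) produce corrections at the advertised order, establishing (3.15); the $h_{ii}$ asymptotics then follow at once by interchanging the roles of $\nu_\pm$.

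The limit $z\to 0$ proceeds identically, substituting (3.14) instead, with one extra wrinkle: the leading terms of $\nu_{\pm,1}$ and $\nu_{\pm,3}$ now carry $1/z$ singularities, so each $3\times 3$ determinant has a $1/z^2$ leading singularity. However, $\rho(z)\sim q_0^2/z^2$ as $z\to 0$, so the double pole cancels exactly; the surviving constant, computed by the same first-row expansion and the same inner-product identities, gives (3.16). The main obstacle is simply the $3\times 3$ determinantal bookkeeping, especially near $z=0$ where one must simultaneously track the divergent leading term and the constant that survives cancellation with $\rho(z)$. A natural consistency check at every step is that the $\mathbf{q}(x,t)$-dependent terms appearing in (3.13)--(3.14) must drop out of each determinant, since the $s_{ii}$ and $h_{ii}$ are independent of $x$ and $t$; this cancellation is guaranteed by the orthogonality relation $(\mathbf{q}_\pm^\perp)^\dagger\mathbf{q}_\pm=0$ and the matched column structure of the Jost ansätze in (2.17).
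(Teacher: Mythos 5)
Your proposal is correct and follows essentially the same route the paper takes: the diagonal scattering coefficients are written as Wronskian ratios (this is exactly the paper's representation \eqref{3.21}--\eqref{3.22}, since $z^{2}/(z^{2}+q_{0}^{2})=1/\rho(z)$ and the phase factors convert $\psi_{\pm,j}$ into $\nu_{\pm,j}$), and the expansions \eqref{3.13a}--\eqref{3.13c} and \eqref{3.14} are substituted, with the $1/z^{2}$ singularity of the determinant at the origin cancelling against $\rho(z)\sim q_{0}^{2}/z^{2}$. Your determinant evaluations via $\det[\mathbf{q}^{\perp},\mathbf{q}']=\mathbf{q}^{\dagger}\mathbf{q}'$ and $\Vert\mathbf{q}_{\pm}\Vert=q_{0}$ reproduce all twelve limits correctly.
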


The asymptotic behavior of other scattering coefficients
\begin{equation}\label{3.17}
\begin{split}
[\mathbf{S}^{\pm1}(z)]_{o}=\frac{1}{q_{0}^{2}}\begin{pmatrix}
    0 & 0 & 0  \\
    0 & 0 & (\mathbf{q}_{\pm}^{\perp})^{\dagger}\mathbf{q}_{\mp}  \\
    0 & \mathbf{q}_{\pm}^{\dagger}\mathbf{q}_{\mp}^{\perp} & 0
  \end{pmatrix}+O(z^{-1}), \quad z\rightarrow\infty,
\end{split}
\end{equation}
and
\begin{equation}\label{3.18}
\begin{split}
[\mathbf{S}^{\pm1}(z)]_{o}=\frac{\mathrm{i}}{zq_{0}}\begin{pmatrix}
    0 & 0 & 0  \\
    (\mathbf{q}_{\mp}^{\perp})^{\dagger}\mathbf{q}_{\pm} & 0 & 0  \\
    0 & 0 & 0
  \end{pmatrix}+O(1), \quad z\rightarrow0.
\end{split}
\end{equation}

\subsection{Behavior at the branch points}
\label{s:Dsirp:Behavior}

We explore the characteristics of the scattering matrix at the branch points $k=\pm\mathrm{i}q_{0}$.
\begin{equation}
\begin{split}
\lim_{z\rightarrow\pm\mathrm{i}q_{0}}{\mathbf{Y}_{\pm}
\mathbf{e}^{\mathrm{i}(x-r)\mathbf{\Lambda}_{1}}\mathbf{Y}_{\pm}^{-1}}=\begin{pmatrix}
    1\pm q_{0}(x-r) & \mathbf{q}_{\pm}^{\dagger}(x-r)   \\
    -\mathbf{q}_{\pm}(x-r) & \displaystyle{\frac{1}{q_{0}^{2}}}(1\mp q_{0}(x-r))\mathbf{q}_{\pm}\mathbf{q}_{\pm}^{\dagger}
+\mathrm{e}^{\mp q_{0}(x-r)}\mathbf{q}_{\pm}^{\perp}(\mathbf{q}_{\pm}^{\perp})^{\dagger}   \\
  \end{pmatrix}.
\end{split}
\end{equation}

If $(1+|x|)\left(\mathbf{q}-\mathbf{q}_{ \pm}\right) \in L^{1}\left(\mathbb{R}^{\pm}\right)$, the integrals in~\eqref{2.29} are also convergent at $z\rightarrow\pm\mathrm{i}q_{0}$. Then, the columns of $\psi_{\pm}(\mathrm{i}q_{0})$ and $\psi_{\pm}(-\mathrm{i}q_{0})$ are linearly dependent
\begin{equation}\label{3.20}
\begin{split}
\psi_{\pm,1}(\mathrm{i}q_{0})=-\psi_{\pm,3}(\mathrm{i}q_{0}), \quad
\psi_{\pm,1}(-\mathrm{i}q_{0})=\psi_{\pm,3}(-\mathrm{i}q_{0}).
\end{split}
\end{equation}
We analyze the behavior of the scattering matrix $\mathbf{S}(z)$ near the branch points, which can be represented by Wronskians. For this reason, we record the scattering coefficients as Wronskians
\begin{equation}\label{3.21}
\begin{split}
s_{jl}(z)=\frac{z^{2}}{z^{2}+q_{0}^{2}}W_{jl}(z)
\mathrm{e}^{-\mathrm{i}[\delta_{1}+\delta_{2}+\delta_{3}](z)},
\end{split}
\end{equation}
where
\begin{equation}\label{3.22}
\begin{split}
W_{jl}(z)=\operatorname{det}(\psi_{-,l}(z),\psi_{+,j+1}(z),\psi_{+,j+2}(z)),
\end{split}
\end{equation}
with $j+1$ and $j+2$ are calculated modulo 3.

The Laurent series expansions of the scattering coefficients at $z\rightarrow\pm\mathrm{i}q_{0}$ are expressed as
\begin{equation}\label{3.23}
\begin{split}
s_{ij}(z)&=\frac{s_{ij,\pm}}{z\mp\mathrm{i}q_{0}}+s_{ij,\pm}^{(o)}+O(z\mp\mathrm{i}q_{0}), \quad z\in\Sigma \backslash \{\pm\mathrm{i}q_{0}\},
\end{split}
\end{equation}
where
\begin{equation}\label{3.24}
\begin{split}
s_{ij,\pm}&=\pm\frac{\mathrm{i}q_{0}}{2}W_{ij}(\pm\mathrm{i}q_{0})
\exp\left[ \pm q_{0}x+5\mathrm{i}q_{0}^{2}(1-\sigma q_{0}^{2})t \right], \\
s_{ij,\pm}^{(o)}&=\left[\pm\frac{\mathrm{i}q_{0}}{2}
\frac{\partial}{\partial z}W_{ij}(z)|_{z=\pm\mathrm{i}q_{0}}
+W_{ij}(\pm\mathrm{i}q_{0}) \right]\exp\left[ \pm q_{0}x+5\mathrm{i}q_{0}^{2}(1-\sigma q_{0}^{2})t \right].
\end{split}
\end{equation}
Then, the asymptotic expansion of $\mathbf{S}(z)$ around the branch point can be expressed as follows:
\begin{equation}\label{3.25}
\begin{split}
\mathbf{S}(z)&=\frac{\mathbf{S}_{\pm}}{z\mp\mathrm{i}q_{0}}+\mathbf{S}_{\pm}^{(o)}
+O(z\mp\mathrm{i}q_{0}),
\end{split}
\end{equation}
where $\mathbf{S}_{\pm}^{(o)}=(s_{ij,\pm}^{(o)})$,
\begin{equation}\label{3.26}
\begin{split}
\mathbf{S}_{\pm}=s_{11,\pm}\begin{pmatrix}
    1 & 0 & \mp1  \\
    0 & 0 & 0  \\
    \pm1 & 0 & -1
  \end{pmatrix}+s_{12,\pm}\begin{pmatrix}
    0 & 1 & 0  \\
    0 & 0 & 0  \\
    0 & \pm1 & 0
  \end{pmatrix}.
\end{split}
\end{equation}

\section{Inverse problem}
\label{s:Inverse problem}

\subsection{Riemann-Hilbert problem}
\label{s:Inverse problem:RHP}

To construct the matrix RH problem, we require appropriate jump conditions that articulate eigenfunctions, which are meromorphic in domain $\mathbb{U}_{j}(j=1,2,3,4)$. Then, the following Lemma can be given by using the desired eigenfunctions as the column of $\Psi_{j}(z)(j=1,2,3,4)$ in~\eqref{3.1} and the scattering relation~\eqref{2.39}.

\begin{lemma}\label{lem:1}
Define the piecewise meromorphic function $\mathbf{B}=\mathbf{B}(x,t,z)=\mathbf{B}_{j}(x,t,z)$ for $z\in\mathbb{U}_{j} (j=1,\cdots,4)$, where
\begin{subequations}\label{4.1}
\begin{align}
\mathbf{B}_{1}(x,t,z)&=\mathbf{\Psi}_{1}\mathbf{e}^{-\mathrm{i}\mathbf{\Omega}}
\left[\operatorname{diag}(\frac{1}{s_{11}},1,\frac{1}{h_{22}})\right]=
\left[\displaystyle{\frac{\nu_{-,1}}{s_{11}}}, \nu_{+,2}, \displaystyle{\frac{c_{1}}{h_{22}}}\right],
\quad  z\in\mathbb{U}_{1}, \\
\mathbf{B}_{2}(x,t,z)&=\mathbf{\Psi}_{2}\mathbf{e}^{-\mathrm{i}\mathbf{\Omega}}
\left[\operatorname{diag}(1,\frac{1}{s_{22}},\frac{1}{h_{11}})\right]=
\left[\nu_{+,1}, \displaystyle{\frac{\nu_{-,2}}{s_{22}}}, \displaystyle{\frac{c_{2}}{h_{11}}}\right],
\quad  z\in\mathbb{U}_{2}, \\
\mathbf{B}_{3}(x,t,z)&=\mathbf{\Psi}_{3}\mathbf{e}^{-\mathrm{i}\mathbf{\Omega}}
\left[\operatorname{diag}(\frac{1}{h_{33}},\frac{1}{s_{22}},1)\right]=
\left[\displaystyle{\frac{c_{3}}{h_{33}}}, \displaystyle{\frac{\nu_{-,2}}{s_{22}}}, \nu_{+,3}\right],  \quad  z\in\mathbb{U}_{3}, \\
\mathbf{B}_{4}(x,t,z)&=\mathbf{\Psi}_{4}\mathbf{e}^{-\mathrm{i}\mathbf{\Omega}}
\left[\operatorname{diag}(\frac{1}{h_{22}},1,\frac{1}{s_{33}})\right]=
\left[ \displaystyle{\frac{c_{4}}{h_{22}}}, \nu_{+,2}, \displaystyle{\frac{\nu_{-,3}}{s_{33}}}\right],
\quad  z\in\mathbb{U}_{4}.
\end{align}
\end{subequations}
Then the jump conditions can be obtained according to the reflection coefficients
\begin{equation}\label{4.2}
\begin{split}
\mathbf{B}^{+}(x,t,z)&=\mathbf{B}^{-}(x,t,z)[\mathbf{I}-\mathbf{e}^{\mathrm{i}\mathbf{\Omega}}
\mathbf{L}(z)\mathbf{e}^{-\mathrm{i}\mathbf{\Omega}}], \quad z\in\Sigma,
\end{split}
\end{equation}
where $\mathbf{B}=\mathbf{B}^{+}$ for $z\in\mathbb{U}^{+}=\mathbb{U}_{1}\cup\mathbb{U}_{3}$ and $\mathbf{B}=\mathbf{B}^{-}$ for $z\in\mathbb{U}^{-}=\mathbb{U}_{2}\cup\mathbb{U}_{4}$ $($with $\mathbf{B}^{+}=\mathbf{B}_{1}$ for $z\in\mathbb{U}_{1}$, $\mathbf{B}^{+}=\mathbf{B}_{3}$ for $z\in\mathbb{U}_{3}$, $\mathbf{B}^{-}=\mathbf{B}_{2}$ for $z\in\mathbb{U}_{2}$ and $\mathbf{B}^{-}=\mathbf{B}_{4}$ for $z\in\mathbb{U}_{4})$. Here, the matrix $\mathbf{L}(z)$ should be defined for each section of the contour based on the reflection coefficients.
\begin{equation}\label{4.3}
\begin{split}
\mathbf{L}_{1}(z)&=\begin{pmatrix}
    \Delta_{1}
     & \Delta_{2}
     & \Delta_{3}  \\
    -\alpha_{1} & 0 & \rho\chi_{3}  \\
    \alpha_{1}\alpha_{3}-\alpha_{2} & \alpha_{3} & -\rho\alpha_{3}\chi_{3}
  \end{pmatrix},   \quad
\mathbf{L}_{2}(z)=\begin{pmatrix}
    \chi_{2}\hat{\chi}_{2} & 0 & -\chi_{2}  \\
    0 & 0 & 0  \\
    \hat{\chi}_{2} & 0 & 0
  \end{pmatrix},  \\
\mathbf{L}_{3}(z)&=\begin{pmatrix}
    -\hat{\alpha}_{2}\hat{\chi}_{2} & \Delta_{4} & \hat{\alpha}_{2}  \\
    \Delta_{5} & \Delta_{6} & \Delta_{7}  \\
    \hat{\chi}_{2} & -\alpha_{3} & 0
  \end{pmatrix}, \quad
\mathbf{L}_{4}(z)=\begin{pmatrix}
    \alpha_{2}\hat{\alpha}_{2} & 0 & \hat{\alpha}_{2}  \\
    \Delta_{8} & 0 & \Delta_{9}  \\
    -\alpha_{2} & 0 & 0
  \end{pmatrix},
\end{split}
\end{equation}
with
\begin{equation}\label{4.4}
\begin{split}
\Delta_{1}&=(\alpha_{1}\alpha_{3}-\alpha_{2})\chi_{2}
+\frac{\mathrm{i}z}{q_{0}}\alpha_{1}\hat{\alpha}_{3}, \quad
\Delta_{2}=\alpha_{3}\chi_{2}+\displaystyle{\frac{\mathrm{i}z}{q_{0}}}\hat{\alpha}_{3}, \quad
\Delta_{4}=\alpha_{3}\hat{\alpha}_{2}-\displaystyle{\frac{\mathrm{i}z}{q_{0}}}\hat{\alpha}_{3},  \\
\Delta_{3}&=-(1+\rho\alpha_{3}\chi_{3})\chi_{2}
-\frac{\mathrm{i}z\rho}{q_{0}}\hat{\alpha}_{3}\chi_{3}, \quad
\Delta_{5}=\frac{\mathrm{i}q_{0}}{z}\hat{\alpha}_{1}\hat{\chi}_{2}
    +\frac{\mathrm{i}z\rho}{q_{0}}\hat{\chi}_{3}(1+\hat{\alpha}_{2}\hat{\chi}_{2}), \\
\Delta_{6}&=-\frac{\mathrm{i}q_{0}}{z}\alpha_{3}\hat{\alpha}_{1}
-\frac{z^{2}\rho}{q_{0}^{2}}\hat{\alpha}_{3}\hat{\chi}_{3}
-\frac{\mathrm{i}z\rho}{q_{0}}\alpha_{3}\hat{\alpha}_{2}\hat{\chi}_{3}, \quad
\Delta_{7}=-\frac{\mathrm{i}q_{0}}{z}\hat{\alpha}_{1}
-\frac{\mathrm{i}z\rho}{q_{0}}\hat{\alpha}_{2}\hat{\chi}_{3}, \\
\Delta_{8}&=\frac{\mathrm{i}z\rho}{q_{0}}\hat{\chi}_{3}(1-\alpha_{2}\hat{\alpha}_{2})
-\frac{\mathrm{i}q_{0}}{z}\alpha_{2}\hat{\alpha}_{1}-\alpha_{1}, \quad
\Delta_{9}=\rho\chi_{3}-\frac{\mathrm{i}q_{0}}{z}\hat{\alpha}_{1}
-\frac{\mathrm{i}z\rho}{q_{0}}\hat{\alpha}_{2}\hat{\chi}_{3},
\end{split}
\end{equation}
where $\alpha_{j}=\alpha_{j}(z)$, $\hat{\alpha}_{j}=\alpha_{j}(-q_{0}^{2}/z)$, $\rho=\rho(z)$, $\chi_{j}=\alpha_{j}^{*}(z^{*})$, $\hat{\chi}_{j}=\alpha_{j}^{*}(-q_{0}^{2}/z^{*})$ and $\mathbf{L}(z)=\mathbf{L}_{j}(z)$ for $z\in\mathbb{D}_{j}(j=1,\cdots,4)$.
\end{lemma}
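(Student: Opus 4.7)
The plan is to establish the lemma in three stages: (i) verify piecewise meromorphy of each $\mathbf{B}_j$ on its domain $\mathbb{U}_j$, (ii) derive the jump relation on $\Sigma$ in an abstract form using the scattering matrix, and (iii) unpack the resulting jump matrix into the stated reflection-coefficient form $\mathbf{L}_j(z)$.

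For stage (i), I will inspect each column of $\mathbf{B}_j$ separately. The columns built from modified Jost functions $\nu_{\pm,k}$ are analytic in the asserted region by Theorem~\ref{thm:1}; the columns involving $c_j$ inherit analyticity in the appropriate $\mathbb{U}_j$ from the analyticity of the auxiliary eigenfunctions $\mu_j$ noted after (2.57), together with (2.62). The scalar denominators $s_{11}, s_{22}, h_{11}, h_{22}, h_{33}, s_{33}$ are analytic in the corresponding $\mathbb{U}_j$ by Theorem~\ref{thm:2}, so each column is meromorphic with poles only at the discrete eigenvalues classified in Section~3. The stripped exponential factor $\mathbf{e}^{-\mathrm{i}\mathbf{\Omega}}$ does not affect meromorphy in $z$.

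For stage (ii), the contour $\Sigma = \mathbb{R}\cup C_0$ decomposes into four arcs $\mathbb{D}_1,\dots,\mathbb{D}_4$, each bordering two of the regions $\mathbb{U}_j$. On each arc I take boundary values from the two adjoining regions and re-express one in terms of the other using three ingredients: the scattering relation $\psi_-=\psi_+\mathbf{S}(z)$ from (2.39) applied column by column; the decompositions (2.61), which interpret $\mu_j$ as linear combinations of Jost columns on either side; and the conjugation $\psi_\pm = \nu_\pm \mathbf{e}^{\mathrm{i}\mathbf{\Omega}}$, which produces the $\mathbf{e}^{\pm\mathrm{i}\mathbf{\Omega}}$ sandwich around $\mathbf{L}(z)$ in (4.2). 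Writing $\mathbf{B}^+$ in terms of the columns appearing in $\mathbf{B}^-$ yields a jump matrix whose raw entries are rational combinations of entries of $\mathbf{S}(z)$ and $\mathbf{H}(z)$.

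For stage (iii), I factor the diagonal normalizations out of the raw jump matrix (dividing rows and columns by the relevant $s_{jj}$ or $h_{jj}$) and recognize each surviving ratio as one of the reflections $\alpha_j(z) = s_{j+1,1}/s_{11}$, $\alpha_3(z) = s_{32}/s_{22}$, their complex conjugates $\chi_j(z) = \alpha_j^*(z^*)$, and their Schwarz-inverted companions $\hat{\alpha}_j, \hat{\chi}_j$, which are connected by (2.68), (2.76), and (2.79). This reorganization is carried out arc by arc: on the two real components ($|z| \gtrless q_0$) only the first symmetry $z\mapsto z^*$ is active, while on the two semicircles of $C_0$ both symmetries act simultaneously, which is what produces the $\frac{\mathrm{i}z}{q_0}$ and $\rho(z)$ factors in the $\Delta_i$ of (4.4). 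A final matrix-algebra check confirms the explicit shapes of $\mathbf{L}_1,\dots,\mathbf{L}_4$.

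The principal obstacle is stage (iii): the bookkeeping on the circle $C_0$ is delicate because the inverse map $z\mapsto -q_0^2/z$ is an involution of $C_0$ that exchanges the two semicircular components of $\Sigma\cap\{|z|=q_0\}$, so pairs like $(\alpha_j,\hat{\alpha}_j)$ and $(\chi_j,\hat{\chi}_j)$ intertwine and must be separated carefully using the componentwise symmetries (2.76) together with the $\rho$-weighted relations (2.68). Once these substitutions are made systematically and the factors of $\rho(z)$, $\mathrm{i}z/q_0$, and $\mathrm{i}q_0/z$ coming from $\mathbf{J}_4(z)$ and $\mathbf{J}_2(z)$ are tracked, the entries $\Delta_1,\dots,\Delta_9$ emerge as stated, and the lemma follows.
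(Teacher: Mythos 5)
Your outline matches the route the paper intends (the paper itself offers only the one-line remark that the lemma follows from the columns of $\mathbf{\Psi}_{j}$ in~\eqref{3.1} together with the scattering relation~\eqref{2.39}, so your stages (i)--(iii), resting on Theorems~\ref{thm:1}--\ref{thm:2}, the decompositions~\eqref{2.61}, and the symmetries~\eqref{2.68},~\eqref{2.76},~\eqref{2.79}, supply exactly the intended ingredients). One small inaccuracy in your narrative: the second symmetry is not confined to the circular arcs --- for instance $\mathbf{L}_{1}$ on the real arc already contains the term $\tfrac{\mathrm{i}z}{q_{0}}\hat{\alpha}_{3}$, because the ratio $s_{12}/s_{22}$ that arises there is rewritten via~\eqref{2.79} as $\tfrac{\mathrm{i}z}{q_{0}}\alpha_{3}(-q_{0}^{2}/z)$ --- but since you commit to performing the substitutions arc by arc using~\eqref{2.79} in any case, this mislabeling does not derail the computation.
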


To ensure the above RH problem has a unique solution, it is necessary to define an appropriate normalization condition as well. By considering the asymptotic behavior, it can be readily verified that
\begin{equation}\label{4.5}
\begin{split}
&\mathbf{B}(z)=\mathbf{B}_{\infty}+O(\frac{1}{z}), \quad (z\rightarrow \infty), \quad
\mathbf{B}(z)=\frac{\mathrm{i}}{z}\mathbf{B}_{0}+O(1), \quad (z\rightarrow 0),
\end{split}
\end{equation}
where
\begin{equation}\label{4.6}
\begin{split}
\mathbf{B}_{\infty}+\frac{\mathrm{i}}{z}\mathbf{B}_{0}=\mathbf{Y}_{+}(z), \quad
\mathbf{B}_{\infty}=\begin{pmatrix}
    1 & 0 & 0  \\
    \mathbf{0_{2\times1}} & \displaystyle{\frac{\mathbf{q}_{+}^{\perp}}{q_{0}}} & \displaystyle{\frac{\mathbf{q}_{+}}{q_{0}}}
  \end{pmatrix}, \quad
\mathbf{B}_{0}=\begin{pmatrix}
    0 & 0 & -q_{0}  \\
    -\mathbf{q}_{+} & \mathbf{0_{2\times1}} & \mathbf{0_{2\times1}}
  \end{pmatrix}.
\end{split}
\end{equation}
Besides the asymptotic behavior mentioned in~\eqref{4.1}, to completely define the RH problem described in~\eqref{4.2}, one needs to provide the residue conditions.

For simplicity, we denote by $\mathbf{B}_{-1,\omega}^{\pm}(x,t)$ the residue of $\mathbf{B}^{\pm}$ at $z=\omega$ and the mark $\mathbf{B}^{\pm}(x,t,z)=(\mathbf{b}_{1}^{\pm},\mathbf{b}_{2}^{\pm},\mathbf{b}_{3}^{\pm})$. By using the meromorphic matrices $\mathbf{B}^{\pm}$ in Lemma~\ref{lem:1}, the residue conditions can be obtained
\begin{subequations}\label{4.7}
\begin{align}
\mathbf{B}_{-1,z_{m}}^{+}(x,t)&=D_{m}[\mathbf{b}_{3}^{+}(z_{m}),\mathbf{0},\mathbf{0}], \quad
\mathbf{B}_{-1,-q_{0}^{2}/z_{m}^{*}}^{+}(x,t)=\frac{\mathrm{i}z_{m}^{*}}{q_{0}}\check{D}_{m}
[\mathbf{b}_{1}^{-}(z_{m}^{*}),\mathbf{0},\mathbf{0}], \\
\mathbf{B}_{-1,z_{m}^{*}}^{-}(x,t)&=\hat{D}_{m}[\mathbf{0},\mathbf{0},\mathbf{b}_{1}^{-}(z_{m}^{*})], \quad
\mathbf{B}_{-1,-q_{0}^{2}/z_{m}}^{-}(x,t)=\frac{\mathrm{i}z_{m}}{q_{0}}\bar{D}_{m}
[\mathbf{0},\mathbf{0},\mathbf{b}_{3}^{+}(z_{m})], \\
\mathbf{B}_{-1,\theta_{m}}^{+}(x,t)&=G_{m}[\mathbf{0},\mathbf{0},\mathbf{b}_{2}^{+}(\theta_{m})], \quad
\mathbf{B}_{-1,-q_{0}^{2}/\theta_{m}^{*}}^{+}(x,t)=\frac{\mathrm{i}\theta_{m}^{*}}{q_{0}}\check{G}_{m}
[\mathbf{0},\mathbf{b}_{3}^{-}(\theta_{m}^{*}),\mathbf{0}], \\
\mathbf{B}_{-1,\theta_{m}^{*}}^{-}(x,t)&=\hat{G}_{m}
[\mathbf{0},\mathbf{b}_{3}^{-}(\theta_{m}^{*}),\mathbf{0}], \quad
\mathbf{B}_{-1,-q_{0}^{2}/\theta_{m}}^{-}(x,t)=\bar{G}_{m}
[\mathbf{b}_{2}^{+}(\theta_{m}),\mathbf{0},\mathbf{0}],  \\
\mathbf{B}_{-1,\xi_{m}}^{+}(x,t)&=F_{m}[\mathbf{b}_{2}^{+}(\xi_{m}),\mathbf{0},\mathbf{0}], \quad
\mathbf{B}_{-1,-q_{0}^{2}/\xi_{m}^{*}}^{+}(x,t)=\frac{\mathrm{i}\xi_{m}^{*}}{q_{0}}\check{F}_{m}
[\mathbf{0},\mathbf{b}_{1}^{-}(\xi_{m}^{*}),\mathbf{0}], \\
\mathbf{B}_{-1,\xi_{m}^{*}}^{-}(x,t)&=\hat{F}_{m}
[\mathbf{0},\mathbf{b}_{1}^{-}(\xi_{m}^{*}),\mathbf{0}], \quad
\mathbf{B}_{-1,-q_{0}^{2}/\xi_{m}}^{-}(x,t)=\bar{F}_{m}
[\mathbf{0},\mathbf{0},\mathbf{b}_{2}^{+}(\xi_{m})],
\end{align}
\end{subequations}
with norming constants
\begin{subequations}\label{4.8}
\begin{align}
D_{m}&=\frac{d_{m}}{s_{11}^{\prime}(z_{m})}\mathrm{e}^{\mathrm{i}[\delta_{3}(z_{m})
-\delta_{1}(z_{m})]}, \quad
\check{D}_{m}=\frac{\check{d}_{m}}{h_{33}^{\prime}(-q_{0}^{2}/z_{m}^{*})}
\mathrm{e}^{\mathrm{i}[\delta_{1}(z_{m}^{*})-\delta_{3}(z_{m}^{*})]}, \\
\hat{D}_{m}&=\frac{\hat{d}_{m}}{h_{11}^{\prime}(z_{m}^{*})}
\mathrm{e}^{\mathrm{i}[\delta_{1}(z_{m}^{*})-\delta_{3}(z_{m}^{*})]}, \quad
\bar{D}_{m}=\frac{\bar{d}_{m}h_{22}(z_{m})}{s_{33}^{\prime}(-q_{0}^{2}/z_{m})}
\mathrm{e}^{\mathrm{i}[\delta_{3}(z_{m})-\delta_{1}(z_{m})]}, \\
G_{m}&=\frac{g_{m}}{h_{22}^{\prime}(\theta_{m})}
\mathrm{e}^{\mathrm{i}[\delta_{2}(\theta_{m})-\delta_{3}(\theta_{m})]}, \quad
\check{G}_{m}=\frac{\check{g}_{m}h_{11}(\theta_{m}^{*})}{s_{22}^{\prime}(-q_{0}^{2}/\theta_{m}^{*})}
\mathrm{e}^{\mathrm{i}[\delta_{3}(\theta_{m}^{*})-\delta_{2}(\theta_{m}^{*})]}, \\
\hat{G}_{m}&=\frac{\hat{g}_{m}h_{11}(\theta_{m}^{*})}{s_{22}^{\prime}(\theta_{m}^{*})}
\mathrm{e}^{\mathrm{i}[\delta_{3}(\theta_{m}^{*})-\delta_{2}(\theta_{m}^{*})]}, \quad
\bar{G}_{m}=\frac{\bar{g}_{m}}{h_{22}^{\prime}(-q_{0}^{2}/\theta_{m})}
\mathrm{e}^{\mathrm{i}[\delta_{2}(\theta_{m})-\delta_{3}(\theta_{m})]}, \\
F_{n}&=\frac{f_{m}}{s_{11}^{\prime}(\xi_{m})}
\mathrm{e}^{\mathrm{i}[\delta_{2}(\xi_{m})-\delta_{1}(\xi_{m})]}, \quad
\check{F}_{m}=\frac{\check{f}_{m}}{s_{22}^{\prime}(-q_{0}^{2}/\xi_{m}^{*})}
\mathrm{e}^{\mathrm{i}[\delta_{1}(\xi_{m}^{*})-\delta_{2}(\xi_{m}^{*})]}, \\
\hat{F}_{m}&=\frac{\hat{f}_{m}}{s_{22}^{\prime}(\xi_{m}^{*})}
\mathrm{e}^{\mathrm{i}[\delta_{1}(\xi_{m}^{*})-\delta_{2}(\xi_{m}^{*})]}, \quad
\bar{F}_{m}=\frac{\bar{f}_{m}}{s_{33}^{\prime}(-q_{0}^{2}/\xi_{m})}
\mathrm{e}^{\mathrm{i}[\delta_{2}(\xi_{m})-\delta_{1}(\xi_{m})]},
\end{align}
\end{subequations}
then the defined norming constants satisfy the following symmetric relations
\begin{subequations}\label{4.10}
\begin{align}
\hat{D}_{m}&=-D_{m}^{*}, \quad
\check{D}_{m}=-\frac{q_{0}^{2}}{(z_{m}^{*})^{2}}D_{m}^{*}, \quad
\bar{D}_{m}=\frac{q_{0}^{2}}{z_{m}^{2}}D_{m},  \\
\hat{G}_{m}&=-\frac{G_{m}^{*}}{\rho(\theta_{m}^{*})}, \quad
\check{G}_{m}=\frac{\mathrm{i}q_{0}^{3}}{(\theta_{m}^{*})^{3}}
\frac{G_{m}^{*}}{\rho(\theta_{m}^{*})}, \quad
\bar{G}_{m}=\frac{\mathrm{i}q_{0}}{\theta_{m}}G_{m}, \\
\hat{F}_{m}&=-\frac{F_{m}^{*}}{\rho(\xi_{m}^{*})}, \quad
\check{F}_{m}=\frac{\mathrm{i}q_{0}^{3}}{(\xi_{m}^{*})^{3}}
\frac{F_{m}^{*}}{\rho(\xi_{m}^{*})}, \quad
\bar{F}_{m}=\frac{\mathrm{i}q_{0}}{\xi_{m}}F_{m}.
\end{align}
\end{subequations}

\subsection{Formal solutions of the Riemann-Hilbert problem and reconstruction formula}
\label{s:Inverse problem:Formal solutions}

\begin{theorem}\label{thm:3}
Pure soliton solutions of the RHP defined by Lemma~\ref{lem:1} are given by the system of matrix algebraic-integral equations
\begin{equation}\label{4.11}
\begin{split}
\mathbf{B}(x,t,z)&=\mathbf{Y}_{+}(z)+\frac{\mathrm{i}}{2\pi}
\int_{\Sigma}\frac{\mathbf{B}^{-}(\zeta)\bar{\mathbf{L}}(\zeta)}{\zeta-z}\mathrm{d}\zeta
+\sum_{m=1}^{\mathrm{M}}\left[ \frac{\mathbf{B}_{-1,\varepsilon_{m}}^{+}}{z-\varepsilon_{m}} +\frac{\mathbf{B}_{-1,\varepsilon_{m}^{*}}^{-}}{z-\varepsilon_{m}^{*}}
+\frac{\mathbf{B}_{-1,-q_{0}^{2}/\varepsilon_{m}^{*}}^{+}}{z+(q_{0}^{2}/\varepsilon_{m}^{*})}
+\frac{\mathbf{B}_{-1,-q_{0}^{2}/\varepsilon_{m}}^{-}}{z+(q_{0}^{2}/\varepsilon_{m})} \right],
\end{split}
\end{equation}
where $\bar{\mathbf{L}}(z)=\mathbf{e}^{\mathrm{i}\mathbf{\Omega}}
\mathbf{L}(z)\mathbf{e}^{-\mathrm{i}\mathbf{\Omega}}$, with $\{\varepsilon_{m}\}_{m=1}^{\mathrm{M}}$ denotes the set of all discrete eigenvalues and $\mathbf{B}(x,t,z)=\mathbf{B}^{\pm}(x,t,z)$ for $z\in\mathbb{U}^{\pm}$. Moreover, the eigenfunctions in the residue conditions~\eqref{4.7} are given by
\begin{equation}\label{4.12}
\begin{split}
\mathbf{b}_{2}^{+}(z)&=\sum_{m=1}^{\mathrm{M}_{2}}\left[ \frac{\hat{G}_{m}}{z-\theta_{m}^{*}}
+\frac{\mathrm{i}\theta_{m}^{*}}{q_{0}}\frac{\check{G}_{m}}{z+(q_{0}^{2}/\theta_{m}^{*})} \right]\mathbf{b}_{3}^{-}(\theta_{m}^{*})+\frac{\mathrm{i}}{2\pi}
\int_{\Sigma}\frac{[\mathbf{B}^{-}(\zeta)\bar{\mathbf{L}}(\zeta)]_{2}}{\zeta-z}\mathrm{d}\zeta \\
&+\begin{pmatrix}
     0   \\
    \mathbf{q}_{+}^{\perp}/q_{0}
  \end{pmatrix}
+\sum_{m=1}^{\mathrm{M}_{3}}\left[ \frac{\hat{F}_{m}}{z-\xi_{m}^{*}}+\frac{\mathrm{i}\xi_{m}^{*}}{q_{0}}
\frac{\check{F}_{m}}{z+(q_{0}^{2}/\xi_{m}^{*})} \right]\mathbf{b}_{1}^{-}(\xi_{m}^{*}),
\end{split}
\end{equation}
where $z=\theta_{j^{\prime}}$ and $z=\xi_{l^{\prime}}$,
\begin{equation}\label{4.13}
\begin{split}
\mathbf{b}_{1}^{-}(z)&=\sum_{m=1}^{\mathrm{M}_{1}}\left[\frac{D_{m}\mathbf{b}_{3}^{+}(z_{m})}{z-z_{m}}
+\frac{\mathrm{i}z_{m}^{*}}{q_{0}}
\frac{\check{D}_{m}\mathbf{b}_{1}^{-}(z_{m}^{*})}{z+(q_{0}^{2}/z_{m}^{*})} \right]+\frac{\mathrm{i}}{2\pi}
\int_{\Sigma}\frac{[\mathbf{B}^{-}(\zeta)\bar{\mathbf{L}}(\zeta)]_{1}}{\zeta-z}\mathrm{d}\zeta  \\
&+\begin{pmatrix}
     1   \\
    -\mathrm{i}\mathbf{q}_{+}/z
  \end{pmatrix}
+\sum_{m=1}^{\mathrm{M}_{2}}\frac{\bar{G}_{m}\mathbf{b}_{2}^{+}(\theta_{m})}{z+(q_{0}^{2}/\theta_{m})}
+\sum_{m=1}^{\mathrm{M}_{3}}\frac{F_{m}\mathbf{b}_{2}^{+}(\xi_{m})}{z-\xi_{m}},
\end{split}
\end{equation}
where $z=z_{i^{\prime}}^{*}$ and $z=\xi_{l^{\prime}}^{*}$,
\begin{equation}\label{4.14}
\begin{split}
\mathbf{b}_{3}^{-}(\theta_{j^{\prime}}^{*})&=\sum_{m=1}^{\mathrm{M}_{1}}
\left[\frac{\hat{D}_{m}\mathbf{b}_{1}^{-}(z_{m}^{*})}{\theta_{j^{\prime}}^{*}-z_{m}^{*}}
+\frac{\mathrm{i}z_{n}}{q_{0}}\frac{\bar{D}_{m}
\mathbf{b}_{3}^{+}(z_{m})}{\theta_{j^{\prime}}^{*}+(q_{0}^{2}/z_{m})} \right]+\frac{\mathrm{i}}{2\pi}
\int_{\Sigma}\frac{[\mathbf{B}^{-}(\zeta)\bar{\mathbf{L}}(\zeta)]_{3}}{\zeta-\theta_{j^{\prime}}^{*}}
\mathrm{d}\zeta \\
&+\begin{pmatrix}
    -\mathrm{i}q_{0}/\theta_{j^{\prime}}^{*}   \\
    \mathbf{q}_{+}/q_{0}
  \end{pmatrix}
+\sum_{m=1}^{\mathrm{M}_{2}}\frac{G_{m}\mathbf{b}_{2}^{+}(\theta_{m})}{\theta_{j^{\prime}}^{*}-\theta_{m}}
+\sum_{m=1}^{\mathrm{M}_{3}}\frac{\bar{F}_{m}\mathbf{b}_{2}^{+}(\xi_{m})}
{\theta_{j^{\prime}}^{*}+(q_{0}^{2}/\xi_{m})},
\end{split}
\end{equation}
and
\begin{equation}\label{4.15}
\begin{split}
\mathbf{b}_{3}^{+}(z_{i^{\prime}})&=\sum_{m=1}^{\mathrm{M}_{1}}
\left[\frac{\hat{D}_{m}\mathbf{b}_{1}^{-}(z_{m}^{*})}{z_{i^{\prime}}-z_{m}^{*}}
+\frac{\mathrm{i}z_{m}}{q_{0}}\frac{\bar{D}_{m}
\mathbf{b}_{3}^{+}(z_{m})}{z_{i^{\prime}}+(q_{0}^{2}/z_{m})} \right]+\frac{\mathrm{i}}{2\pi}
\int_{\Sigma}\frac{[\mathbf{B}^{-}(\zeta)\bar{\mathbf{L}}(\zeta)]_{3}}{\zeta-z_{i^{\prime}}}
\mathrm{d}\zeta  \\
&+\begin{pmatrix}
    -\mathrm{i}q_{0}/z_{i^{\prime}}  \\
    \mathbf{q}_{+}/q_{0}
  \end{pmatrix}
+\sum_{m=1}^{\mathrm{M}_{2}}\frac{G_{m}\mathbf{b}_{2}^{+}(\theta_{m})}{z_{i^{\prime}}-\theta_{m}}
+\sum_{m=1}^{\mathrm{M}_{3}}\frac{\bar{F}_{m}\mathbf{b}_{2}^{+}(\xi_{m})}{z_{i^{\prime}}+(q_{0}^{2}/\xi_{m})},
\end{split}
\end{equation}
with $i^{\prime}=1,\cdots,\mathrm{M}_{1}$, $j^{\prime}=1,\cdots,\mathrm{M}_{2}$ and $l^{\prime}=1,\cdots,\mathrm{M}_{3}$.
\end{theorem}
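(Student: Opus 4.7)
The plan is to recast the meromorphic RH problem of Lemma~\ref{lem:1} as a Cauchy-type integral representation coupled with a finite linear system, via the standard three-step program: regularize by subtracting known singularities, apply the Plemelj--Sokhotski projectors, and close the system by evaluating at the discrete eigenvalues.

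For the regularization step, I would introduce
\[
\mathbf{R}(x,t,z) = \mathbf{B}(x,t,z) - \mathbf{Y}_{+}(z) - \sum_{m=1}^{\mathrm{M}}\left[ \frac{\mathbf{B}_{-1,\varepsilon_{m}}^{+}}{z-\varepsilon_{m}} + \frac{\mathbf{B}_{-1,\varepsilon_{m}^{*}}^{-}}{z-\varepsilon_{m}^{*}} + \frac{\mathbf{B}_{-1,-q_{0}^{2}/\varepsilon_{m}^{*}}^{+}}{z+q_{0}^{2}/\varepsilon_{m}^{*}} + \frac{\mathbf{B}_{-1,-q_{0}^{2}/\varepsilon_{m}}^{-}}{z+q_{0}^{2}/\varepsilon_{m}} \right].
\]
By the normalization~\eqref{4.5}--\eqref{4.6} (inherited from the asymptotic expansions~\eqref{3.13a}--\eqref{3.13c} and Corollary~\ref{cor:2}) one has $\mathbf{R}(z)\to\mathbf{0}$ as $z\to\infty$ and $\mathbf{R}$ is regular at $z=0$; by construction the isolated pole contributions all cancel, so $\mathbf{R}$ is analytic on $\mathbb{C}\setminus\Sigma$. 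The jump across $\Sigma$ follows from~\eqref{4.2}:
\[
\mathbf{R}^{+}(z)-\mathbf{R}^{-}(z) = -\mathbf{B}^{-}(z)\bar{\mathbf{L}}(z),\quad z\in\Sigma.
\]
A direct application of the Plemelj--Sokhotski formula then yields $\mathbf{R}(z) = \frac{\mathrm{i}}{2\pi}\int_{\Sigma}\frac{\mathbf{B}^{-}(\zeta)\bar{\mathbf{L}}(\zeta)}{\zeta-z}\,\mathrm{d}\zeta$, and substituting back the definition of $\mathbf{R}$ reproduces the integral representation~\eqref{4.11}.

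To derive the algebraic closure~\eqref{4.12}--\eqref{4.15}, I would extract the relevant columns of~\eqref{4.11} and evaluate at the appropriate discrete points, reading off from~\eqref{4.7} which column vector of $\mathbf{B}^{\pm}$ sits in each residue slot. Concretely, the second column of~\eqref{4.11}, evaluated at generic $z$ and noting that only the second- and third-type quartets contribute nontrivial second-column residues, gives~\eqref{4.12}; the first column, handled similarly (with only the first- and third-type quartets contributing), gives~\eqref{4.13}; the third column evaluated at $z=z_{i^{\prime}}$ gives~\eqref{4.15}, and the same column evaluated at $z=\theta_{j^{\prime}}^{*}$ gives~\eqref{4.14}. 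The norming-constant symmetries of Proposition~\ref{pro:11} and~\eqref{4.10} guarantee the internal consistency of this finite-dimensional system.

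The main obstacle is precisely the bookkeeping in this final step: the residue conditions~\eqref{4.7} link the four symmetric points $\{\varepsilon_{m},\,\varepsilon_{m}^{*},\,-q_{0}^{2}/\varepsilon_{m},\,-q_{0}^{2}/\varepsilon_{m}^{*}\}$ of each quartet, and they distribute different columns of $\mathbf{B}^{\pm}$ into different positions depending on whether the eigenvalue is of first, second, or third type (Propositions~\ref{pro:8}--\ref{pro:10}). Matching these contributions to the correct columns of~\eqref{4.11} while collecting the three species $\{z_{m}\}_{m=1}^{\mathrm{M}_{1}}$, $\{\theta_{m}\}_{m=1}^{\mathrm{M}_{2}}$, $\{\xi_{m}\}_{m=1}^{\mathrm{M}_{3}}$ without double-counting across the symmetry-related sheets is what turns the one-line Plemelj identity into the coupled system stated in the theorem; the analytic input (regularity, decay, Plemelj) is otherwise routine.
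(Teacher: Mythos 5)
Your proposal is correct and follows exactly the standard regularization--Plemelj--closure argument that this paper (which states Theorem~\ref{thm:3} without writing out a proof) inherits from the Manakov-system treatment in~\cite{12}: subtracting $\mathbf{Y}_{+}(z)$ and the residue terms removes the behavior at $z\to\infty$, $z\to 0$ and the discrete poles, the jump~\eqref{4.2} gives $\mathbf{R}^{+}-\mathbf{R}^{-}=-\mathbf{B}^{-}\bar{\mathbf{L}}$, and Plemelj yields~\eqref{4.11} with the correct sign $\mathrm{i}/2\pi$. Your column-by-column bookkeeping of which quartet members populate which residue slot in~\eqref{4.7} correctly reproduces~\eqref{4.12}--\eqref{4.15}, so nothing is missing.
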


By analyzing the solutions~\eqref{4.11} of the regularized RH problem, the potential can be reconstructed. Then the asymptotics in~\eqref{3.13a} imply
\begin{equation}\label{4.18}
\begin{split}
q_{k}(x,t)&=\mathrm{i}\lim_{z\rightarrow\infty}[z \nu_{+,(k+1)1}(x,t,z)], \quad k=1,2.
\end{split}
\end{equation}

\begin{theorem}\label{thm:4}
Pure soliton solutions $\mathbf{q}(x,t)$ of the coupled LPD equation~\eqref{1.1} with NBCS is reconstructed as
\begin{equation}\label{4.19}
\begin{split}
q_{k}(x,t)&=\sum_{m=1}^{\mathrm{M}_{1}}\left[ \mathrm{i}D_{m}\mathbf{b}_{(k+1)3}^{+}(z_{m})
-\frac{z_{m}^{*}}{q_{0}}\check{D}_{m}\mathbf{b}_{(k+1)1}^{-}(z_{m}^{*}) \right]
+\frac{1}{2\pi}\int_{\Sigma}[\mathbf{B}^{-}(\zeta)\bar{\mathbf{L}}(\zeta)]_{1}\mathrm{d}\zeta  \\
&+q_{+,k}+\sum_{m=1}^{\mathrm{M}_{2}}\mathrm{i}\bar{G}_{m}\mathbf{b}_{(k+1)2}^{+}(\theta_{m})
+\sum_{m=1}^{\mathrm{M}_{3}}\mathrm{i}F_{m}\mathbf{b}_{(k+1)2}^{+}(\xi_{m}), \quad k=1,2.
\end{split}
\end{equation}
\end{theorem}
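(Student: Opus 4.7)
The plan is to derive~\eqref{4.19} by matching two asymptotic expansions of the first column $\mathbf{b}_{1}^{-}(x,t,z)$ of $\mathbf{B}^{-}$ as $z\to\infty$. Since the path $z\to\infty$ can be taken inside $\mathbb{U}_{2}=\{|z|>q_{0},\mathfrak{Im}z<0\}$ and since $\mathbf{B}^{-}=\mathbf{B}_{2}$ there, Lemma~\ref{lem:1} identifies $\mathbf{b}_{1}^{-}$ with $\nu_{+,1}$, whose large-$z$ behavior is prescribed by~\eqref{3.13a} as $\nu_{+,1}(x,t,z)=(1,-\mathrm{i}\mathbf{q}(x,t)/z)^{T}+O(z^{-2})$. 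Combining this with the reconstruction identity~\eqref{4.18} yields $q_{k}(x,t)=\mathrm{i}\lim_{z\to\infty}z\,[\mathbf{b}_{1}^{-}(x,t,z)]_{k+1}$, so the theorem reduces to computing the $O(z^{-1})$ coefficient of $[\mathbf{b}_{1}^{-}]_{k+1}$ directly from the representation~\eqref{4.13} supplied by Theorem~\ref{thm:3}.

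The second step is a termwise Laurent expansion about $z=\infty$ of every piece of~\eqref{4.13}. Each simple-pole kernel $1/(z-\alpha)$ with $\alpha\in\{z_{m},-q_{0}^{2}/z_{m}^{*},\xi_{m},-q_{0}^{2}/\theta_{m}\}$ expands as $z^{-1}+O(z^{-2})$, so the four discrete sums contribute, at order $z^{-1}$ of the $(k+1)$-th component, the numerators $D_{m}\mathbf{b}_{(k+1)3}^{+}(z_{m})$, $(\mathrm{i}z_{m}^{*}/q_{0})\check{D}_{m}\mathbf{b}_{(k+1)1}^{-}(z_{m}^{*})$, $\bar{G}_{m}\mathbf{b}_{(k+1)2}^{+}(\theta_{m})$ and $F_{m}\mathbf{b}_{(k+1)2}^{+}(\xi_{m})$ respectively. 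The Cauchy kernel in the jump integral obeys $1/(\zeta-z)=-z^{-1}+O(z^{-2})$, producing the extra contribution $-(\mathrm{i}/2\pi)\int_{\Sigma}[\mathbf{B}^{-}\bar{\mathbf{L}}]_{(k+1)1}\,\mathrm{d}\zeta$, and the inhomogeneous vector $(1,-\mathrm{i}\mathbf{q}_{+}/z)^{T}$ already carries its $z^{-1}$ piece $-\mathrm{i}q_{+,k}$ in its $(k+1)$-th slot. Multiplying the sum of these $z^{-1}$ coefficients by the outer factor $\mathrm{i}$ (to cancel the $-\mathrm{i}$ in $-\mathrm{i}q_{k}$) converts $(\mathrm{i}z_{m}^{*}/q_{0})\check{D}_{m}$ into $-(z_{m}^{*}/q_{0})\check{D}_{m}$, turns the integral prefactor $-\mathrm{i}/2\pi$ into $+1/(2\pi)$, and reproduces the background $q_{+,k}$, yielding exactly the right-hand side of~\eqref{4.19}.

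The argument is purely algebraic and presents no conceptual difficulty, so the main obstacle is careful bookkeeping: tracking signs, the prefactor $\mathrm{i}\alpha^{*}/q_{0}$ attached to second-symmetry residues, and the row/column index $(k+1,1)$ selecting the relevant entry of $\mathbf{B}^{-}\bar{\mathbf{L}}$ inside the jump integral. The one genuine technical point is justifying that the termwise asymptotic expansion of~\eqref{4.13} may be carried out under the integral sign uniformly along any ray to infinity inside $\mathbb{U}_{2}$; this follows by dominated convergence using the decay of $\mathbf{L}(z)$ at infinity inherited from the decay of the reflection coefficients $\alpha_{j}$ in Section~\ref{s:Dsp:Sym}, and should be recorded explicitly. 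With that observation in place, the identification of the $O(z^{-1})$ coefficient with the right-hand side of~\eqref{4.19} is immediate.
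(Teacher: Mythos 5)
Your proposal is correct and follows essentially the same route the paper takes (the paper leaves the computation implicit, merely citing the asymptotics~\eqref{3.13a} and the reconstruction identity~\eqref{4.18} before stating the theorem): identify $\mathbf{b}_{1}^{-}$ with $\nu_{+,1}$ in $\mathbb{U}_{2}$, expand the representation~\eqref{4.13} termwise at $z=\infty$, and read off the $O(z^{-1})$ coefficient of the $(k+1)$-th component. Your sign bookkeeping (the $-z^{-1}$ from the Cauchy kernel, the conversion of $\mathrm{i}z_{m}^{*}\check{D}_{m}/q_{0}$ into $-z_{m}^{*}\check{D}_{m}/q_{0}$, and the recovery of $q_{+,k}$) checks out, and your remark about justifying the interchange of the large-$z$ limit with the integral over the unbounded contour $\Sigma$ is a worthwhile addition that the paper omits.
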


\subsection{Trace formulae}
\label{s:Inverse problem:Trace formulae}

The method of constructing trace formula is similar to the problem of studying the trace formula of Manakov system~\cite{12}. Recall that $h_{22}(z)$ and $s_{22}(z)$ are analytic in $\mathbb{U}^{+}$ and in $\mathbb{U}^{-}$, we can define
\begin{subequations}\label{4.20}
\begin{align}
\gamma^{+}(z)&=h_{22}(z)\mathrm{e}^{\mathrm{i}\Delta\delta}
\prod_{m=1}^{\mathrm{M}_{2}}\frac{z-\theta_{m}^{*}}{z-\theta_{m}}
\frac{z+(q_{0}^{2}/\theta_{m}^{*})}{z+(q_{0}^{2}/\theta_{m})}
\prod_{m=1}^{\mathrm{M}_{3}}\frac{z-\xi_{m}^{*}}{z-\xi_{m}}
\frac{z+(q_{0}^{2}/\xi_{m}^{*})}{z+(q_{0}^{2}/\xi_{m})}, \quad z\in\mathbb{U}^{+}, \label{4.20a} \\
\gamma^{-}(z)&=\frac{1}{s_{22}(z)}\mathrm{e}^{\mathrm{i}\Delta\delta}
\prod_{m=1}^{\mathrm{M}_{2}}\frac{z-\theta_{m}^{*}}{z-\theta_{m}}
\frac{z+(q_{0}^{2}/\theta_{m}^{*})}{z+(q_{0}^{2}/\theta_{m})}
\prod_{m=1}^{\mathrm{M}_{3}}\frac{z-\xi_{m}^{*}}{z-\xi_{m}}
\frac{z+(q_{0}^{2}/\xi_{m}^{*})}{z+(q_{0}^{2}/\xi_{m})}, \quad  z\in\mathbb{U}^{-}. \label{4.20b}
\end{align}
\end{subequations}
with $\gamma^{\pm}(z)$ are analytic in $\mathbb{U}^{\pm}$ and $\Delta\delta=\delta_{+}-\delta_{-}$. For $z\in\Sigma$
\begin{equation}\label{4.21}
\begin{split}
\ln h_{22}(z)-\ln \frac{1}{s_{22}(z)}=-\ln \left[1+\rho(z)\alpha_{3}(z)\alpha_{3}^{*}(z^{*})
+\frac{z^{2}\rho(z)}{q_{0}^{2}}\alpha_{3}(-\frac{q_{0}^{2}}{z})\alpha_{3}^{*}(-\frac{q_{0}^{2}}{z^{*}}) \right],
\end{split}
\end{equation}

By employing suitable cofactor expansions and considering the definition~\eqref{2.79}, then we obtain
\begin{subequations}\label{4.22}
\begin{align}
\ln s_{11}(z)-\ln \frac{1}{h_{11}(z)}&=-\ln \left[1+\frac{\alpha_{1}(z)\alpha_{1}^{*}(z^{*})}{\rho(z)}
+\alpha_{2}(z)\alpha_{2}^{*}(z^{*}) \right], \\
\ln h_{33}(z)-\ln \frac{1}{s_{33}(z)}&=-\ln \left[1+\alpha_{2}(-\frac{q_{0}^{2}}{z})\alpha_{2}^{*}(-\frac{q_{0}^{2}}{z^{*}})
+\frac{q_{0}^{2}}{z^{2}\rho(z)}\alpha_{1}(-\frac{q_{0}^{2}}{z})\alpha_{1}^{*}(-\frac{q_{0}^{2}}{z^{*}}) \right],
\end{align}
\end{subequations}
where $z\in\Sigma$, $s_{11}(z)$, $h_{11}(z)$, $h_{33}(z)$ and $s_{33}(z)$ are analytic only in the fundamental domains $\mathbb{U}_{1}$, $\mathbb{U}_{2}$, $\mathbb{U}_{3}$ and $\mathbb{U}_{4}$, respectively. Recalling that $\mathbf{H}(z)\mathbf{S}(z)=\mathbf{I}$, then
\begin{subequations}\label{4.23}
\begin{align}
\ln h_{33}(z)-\ln \frac{1}{h_{11}(z)}&=\ln s_{22}(z)-\ln
\left[1-\alpha_{2}^{*}(z^{*})\alpha_{2}^{*}(-\frac{q_{0}^{2}}{z^{*}}) \right], \quad z\in\Sigma, \\
\ln s_{11}(z)-\ln \frac{1}{s_{33}(z)}&=\ln h_{22}(z)-\ln
\left[1-\alpha_{2}(z)\alpha_{2}(-\frac{q_{0}^{2}}{z}) \right], \quad z\in\Sigma.
\end{align}
\end{subequations}

Below we define
\begin{align}\label{4.24}
\begin{split}
\hat{\gamma}^{+}(z)= \left \{
\begin{array}{ll}
    \gamma_{1}(z),   & z\in\mathbb{U}_{1}, \\
    \gamma_{3}(z),   & z\in\mathbb{U}_{3},
\end{array}
\right. \quad
\hat{\gamma}^{-}(z)= \left \{
\begin{array}{ll}
    \gamma_{2}(z),   & z\in\mathbb{U}_{2}, \\
    \gamma_{4}(z),   & z\in\mathbb{U}_{4},
\end{array}
\right.
\end{split}
\end{align}
where
\begin{subequations}\label{4.25}
\begin{align}
\gamma_{1}(z)&=s_{11}(z)\prod_{m=1}^{\mathrm{M}_{1}}\frac{z-z_{m}^{*}}{z-z_{m}}
\frac{z+(q_{0}^{2}/z_{m})}{z+(q_{0}^{2}/z_{m}^{*})}
\prod_{m=1}^{\mathrm{M}_{2}}\frac{z+(q_{0}^{2}/\theta_{m}^{*})}{z+(q_{0}^{2}/\theta_{m})}
\prod_{m=1}^{\mathrm{M}_{3}}\frac{z-\xi_{m}^{*}}{z-\xi_{m}}, \quad  z\in\mathbb{U}_{1}, \label{4.25a} \\
\gamma_{2}(z)&=\frac{1}{h_{11}(z)}\prod_{m=1}^{\mathrm{M}_{1}}\frac{z-z_{m}^{*}}{z-z_{m}}
\frac{z+(q_{0}^{2}/z_{m})}{z+(q_{0}^{2}/z_{m}^{*})}
\prod_{m=1}^{\mathrm{M}_{2}}\frac{z+(q_{0}^{2}/\theta_{m}^{*})}{z+(q_{0}^{2}/\theta_{m})}
\prod_{m=1}^{\mathrm{M}_{3}}\frac{z-\xi_{m}^{*}}{z-\xi_{m}}, \quad  z\in\mathbb{U}_{2}, \label{4.25b} \\
\gamma_{3}(z)&=\frac{h_{33}(z)}{\mathrm{e}^{\mathrm{i}\Delta\delta}}
\prod_{m=1}^{\mathrm{M}_{1}}\frac{z-z_{m}^{*}}{z-z_{m}}\frac{z+(q_{0}^{2}/z_{m})}{z+(q_{0}^{2}/z_{m}^{*})}
\prod_{m=1}^{\mathrm{M}_{2}}\frac{z-\theta_{m}}{z-\theta_{m}^{*}}
\prod_{m=1}^{\mathrm{M}_{3}}\frac{z+(q_{0}^{2}/\xi_{m})}{z+(q_{0}^{2}/\xi_{m}^{*})}, \quad  z\in\mathbb{U}_{3},
\label{4.25c} \\
\gamma_{4}(z)&=\frac{\mathrm{e}^{-\mathrm{i}\Delta\delta}}{s_{33}(z)}
\prod_{m=1}^{\mathrm{M}_{1}}\frac{z-z_{m}^{*}}{z-z_{m}}\frac{z+(q_{0}^{2}/z_{m})}{z+(q_{0}^{2}/z_{m}^{*})}
\prod_{m=1}^{\mathrm{M}_{2}}\frac{z-\theta_{m}}{z-\theta_{m}^{*}}
\prod_{m=1}^{\mathrm{M}_{3}}\frac{z+(q_{0}^{2}/\xi_{m})}{z+(q_{0}^{2}/\xi_{m}^{*})}, \quad z\in\mathbb{U}_{4}.
\label{4.25d}
\end{align}
\end{subequations}
Now $\hat{\gamma}^{\pm}(z)$ are analytic in $\mathbb{D}^{\pm}$.

Equations~\eqref{4.22} and~\eqref{4.23} are simplified into the following form by means of~\eqref{4.25}
\begin{subequations}
\begin{align}
\ln \gamma_{1}(z)-\ln \gamma_{2}(z)&=J_{1}(z), \quad
\ln \gamma_{3}(z)-\ln \gamma_{2}(z)=J_{2}(z), \quad  z\in\Sigma, \label{4.26a} \\
\ln \gamma_{3}(z)-\ln \gamma_{4}(z)&=J_{3}(z), \quad
\ln \gamma_{1}(z)-\ln \gamma_{4}(z)=J_{4}(z), \quad  z\in\Sigma, \label{4.26b}
\end{align}
\end{subequations}
with
\begin{subequations}\label{4.27}
\begin{align}
J_{1}(z)&=-\ln \left[1+\frac{\alpha_{1}(z)\alpha_{1}^{*}(z^{*})}{\rho(z)}
+\alpha_{2}(z)\alpha_{2}^{*}(z^{*}) \right], \\
J_{2}(z)&=\frac{1}{2\pi\mathrm{i}}\int_{\mathbb{R}}\frac{J_{0}(\zeta)}{\zeta-z}\mathrm{d}\zeta-\ln
\left[1-\alpha_{2}^{*}(z^{*})\alpha_{2}^{*}(-\frac{q_{0}^{2}}{z^{*}}) \right], \\
J_{3}(z)&=-\ln \left[1+\alpha_{2}(-\frac{q_{0}^{2}}{z})\alpha_{2}^{*}(-\frac{q_{0}^{2}}{z^{*}})
+\frac{q_{0}^{2}}{z^{2}\rho(z)}\alpha_{1}(-\frac{q_{0}^{2}}{z})
\alpha_{1}^{*}(-\frac{q_{0}^{2}}{z^{*}}) \right], \\
J_{4}(z)&=-\frac{1}{2\pi\mathrm{i}}\int_{\mathbb{R}}\frac{J_{0}(\zeta)}{\zeta-z}
\mathrm{d}\zeta-\ln
\left[1-\alpha_{2}(z)\alpha_{2}(-\frac{q_{0}^{2}}{z}) \right],
\end{align}
\end{subequations}
and
\begin{equation}\label{4.28}
\begin{split}
J_{0}(z)=\ln \left[1+\rho(z)\alpha_{3}(z)\alpha_{3}^{*}(z^{*})
+\frac{z^{2}\rho(z)}{q_{0}^{2}}\alpha_{3}(-\frac{q_{0}^{2}}{z})\alpha_{3}^{*}(-\frac{q_{0}^{2}}{z^{*}}) \right],
\end{split}
\end{equation}
where $J_{j}(z)(j=1,2,3,4)$ are jump conditions. Then
\begin{equation}\label{4.29}
\begin{split}
\ln \hat{\gamma}^{+}(z)-\ln \hat{\gamma}^{-}(z)=J_{i}(z), \quad  z\in\Sigma_{i},
\end{split}
\end{equation}
for $i=1,2,3,4$ and where $\Sigma_{i}$ is the boundary of $\bar{\mathbb{U}}_{i}\cap\bar{\mathbb{U}}_{i+1 \bmod 4}$.

Using the Plemelj's formulae then yields
\begin{equation}\label{4.30}
\begin{split}
\ln \hat{\gamma}^{\pm}(z)=\frac{1}{2\pi\mathrm{i}}\int_{\Sigma}\frac{J(\zeta)}{\zeta-z}
\mathrm{d}\zeta, \quad z\in\mathbb{C} \backslash \Sigma.
\end{split}
\end{equation}
Substituting~\eqref{4.30} for~\eqref{4.25a}, we then obtain the analytic scattering coefficients
\begin{equation}\label{4.31}
\begin{split}
s_{11}(z)&=\exp\left[ \frac{1}{2\pi\mathrm{i}}\int_{\Sigma}\frac{J(\zeta)}{\zeta-z}\mathrm{d}\zeta \right] \prod_{m=1}^{\mathrm{M}_{1}}\frac{z-z_{m}}{z-z_{m}^{*}}\frac{z
+\displaystyle{\frac{q_{0}^{2}}{z_{m}^{*}}}}{z+\displaystyle{\frac{q_{0}^{2}}{z_{m}}}}
\prod_{m=1}^{\mathrm{M}_{2}}\frac{z+\displaystyle{\frac{q_{0}^{2}}{\theta_{m}}}}{z
+\displaystyle{\frac{q_{0}^{2}}{\theta_{m}^{*}}}}
\prod_{m=1}^{\mathrm{M}_{3}}\frac{z-\xi_{m}}{z-\xi_{m}^{*}}, \quad z\in\mathbb{U}_{1}.
\end{split}
\end{equation}
Thus
\begin{equation}\label{4.32}
\begin{split}
\ln \gamma^{\pm}(z)=\frac{\mathrm{i}}{2\pi}\int_{\mathbb{R}}\frac{J_{0}(\zeta)}{\zeta-z}
\mathrm{d}\zeta, \quad z\in\mathbb{C} \backslash \Sigma.
\end{split}
\end{equation}

Substituting~\eqref{4.32} for~\eqref{4.20a}, we then obtain the analytic scattering coefficients
\begin{equation}\label{4.33}
\begin{split}
h_{22}(z)&=\exp\left[ \frac{1}{2\pi\mathrm{i}}\int_{\mathbb{R}}\frac{J_{0}(\zeta)}{z
-\zeta}\mathrm{d}\zeta-\mathrm{i}\Delta\delta \right]\prod_{m=1}^{\mathrm{M}_{2}}\frac{z-\theta_{m}}{z-\theta_{m}^{*}}\frac{z
+\displaystyle{\frac{q_{0}^{2}}{\theta_{m}}}}{z+\displaystyle{\frac{q_{0}^{2}}{\theta_{m}^{*}}}}
\prod_{m=1}^{\mathrm{M}_{3}}\frac{z-\xi_{m}}{z-\xi_{m}^{*}}
\frac{z+\displaystyle{\frac{q_{0}^{2}}{\xi_{m}}}}{z+\displaystyle{\frac{q_{0}^{2}}{\xi_{m}^{*}}}},
\end{split}
\end{equation}
for $z\in\mathbb{U}^{+}$. Then the asymptotic phase difference $\Delta\delta=\delta_{+}-\delta_{-}$ is obtained
\begin{equation}\label{4.34}
\begin{split}
\Delta \delta=\frac{1}{2\pi}\int_{\Sigma}\frac{J(\zeta)}{\zeta}\mathrm{d}\zeta
-4\sum_{m=1}^{\mathrm{M}_{1}}\arg z_{m}+2\sum_{m=1}^{\mathrm{M}_{2}}\arg \theta_{m}-2\sum_{m=1}^{\mathrm{M}_{3}}\arg \xi_{m}.
\end{split}
\end{equation}

\section{Pure soliton solutions}
\label{s:Pure soliton solutions}

\begin{theorem}\label{thm:5}
Pure soliton solutions~\eqref{4.19} of the coupled LPD equation~\eqref{1.1} with NBCS may be written
\begin{equation}\label{5.1}
\begin{split}
\mathbf{q}(x,t)=\frac{1}{\operatorname{det} \mathbf{K}}\begin{pmatrix}
\operatorname{det} \mathbf{K}_{1}^{\text {aug }} \\
\operatorname{det} \mathbf{K}_{2}^{\text {aug }}
\end{pmatrix},  \quad
\mathbf{K}_{k}^{\mathrm{aug}}=\left(\begin{array}{ll}
q_{+, k} & \mathbf{\eta}^{T} \\
\mathbf{w}_{k} & \mathbf{K}
\end{array}\right), \quad k=1,2,
\end{split}
\end{equation}
the vectors $\mathbf{w}_{k}$, $\eta$ and matrix $\mathbf{K}$ are
\begin{equation}\label{5.2}
\begin{split}
\mathbf{w}_{k}&=\left(w_{k1}, \cdots, w_{k(2\mathrm{M}_{1}+\mathrm{M}_{2}+\mathrm{M}_{3})}\right)^{T}, \quad \eta=\left(\eta_{1}, \cdots, \eta_{2\mathrm{M}_{1}+\mathrm{M}_{2}+\mathrm{M}_{3}}\right)^{T}, \quad \mathbf{K}=\mathbf{I}-\mathbf{P},
\end{split}
\end{equation}
where
\begin{align}\label{5.3}
\begin{split}
\eta_{j}(x, t)=\left\{\begin{array}{ll}
-\mathrm{i} D_{j},  &j=1, \cdots, \mathrm{M}_{1}, \\
\displaystyle{\frac{z_{j-\mathrm{M}_{1}}^{*}}{q_{0}}}\check{D}_{j-\mathrm{M}_{1}}, & j=\mathrm{M}_{1}+1, \cdots, 2 \mathrm{M}_{1}, \\
-\mathrm{i} \bar{G}_{j-2 \mathrm{M}_{1}}, & j=2\mathrm{M}_{1}+1, \cdots, 2 \mathrm{M}_{1}+\mathrm{M}_{2}, \\
-\mathrm{i} F_{j-2 \mathrm{M}_{1}-\mathrm{M}_{2}}, & j=2\mathrm{M}_{1}+\mathrm{M}_{2}+1, \cdots, 2 \mathrm{M}_{1}+\mathrm{M}_{2}+\mathrm{M}_{3},
\end{array}\right.
\end{split}
\end{align}
and
\begin{align}\label{5.4}
\begin{array}{l}
m_{kj}(x,t)=\left\{\begin{array}{ll}
\displaystyle{\frac{q_{+,k}}{q_{0}}}, \quad  j=1, \cdots, \mathrm{M}_{1}, \\
-\mathrm{i}\displaystyle{\frac{q_{+,k}}{z_{j-\mathrm{M}_{1}}^{*}}}, \quad j=\mathrm{M}_{1}+1, \cdots, 2\mathrm{M}_{1}, \\
\displaystyle{(-1)^{k+1}\frac{q_{+,\bar{k}}^{*}}{q_{0}}
+\frac{q_{+,k}}{q_{0}}\sum_{m=1}^{\mathrm{M}_{2}}C_{m}^{(1)}(\theta_{j-2\mathrm{M}_{1}})
-\mathrm{i}q_{+,k}\sum_{m=1}^{\mathrm{M}_{3}}\frac{C_{m}^{(2)}(\theta_{j-2\mathrm{M}_{1}})}{\xi_{m}^{*}}},  \\ j=2\mathrm{M}_{1}+1, \cdots, 2\mathrm{M}_{1}+\mathrm{M}_{2}, \\
\displaystyle{\frac{q_{+,k}}{q_{0}}\sum_{m=1}^{\mathrm{M}_{2}}C_{m}^{(1)}(\xi_{j-2\mathrm{M}_{1}-\mathrm{M}_{2}})
-\mathrm{i}q_{+,k}\sum_{m=1}^{\mathrm{M}_{3}}\frac{C_{m}^{(2)}(\xi_{j-2\mathrm{M}_{1}-\mathrm{M}_{2}})}{\xi_{m}^{*}}} & \\
+(-1)^{k+1}\displaystyle{\frac{q_{+,\bar{k}}^{*}}{q_{0}}}, \quad
j=2\mathrm{M}_{1}+\mathrm{M}_{2}+1, \cdots, 2\mathrm{M}_{1}+\mathrm{M}_{2}+\mathrm{M}_{3},
\end{array}\right.
\end{array}
\end{align}
with $k=1,2$, $\bar{k}=3-k$ and the coefficients $\mathbf{P}=P_{ij}(x,t)$ are defined as follows. When $i,j=1, \cdots, \mathrm{M}_{1}$,
\begin{equation}\label{5.5}
\begin{split}
P_{ij}&=C_{j}^{(4)}(z_{i}).
\end{split}
\end{equation}
When $i=1, \cdots, \mathrm{M}_{1}$ and $j=\mathrm{M}_{1}+1, \cdots, 2\mathrm{M}_{1}$,
\begin{equation}\label{5.6}
\begin{split}
P_{ij}&=C_{j-\mathrm{M}_{1}}^{(3)}(z_{i}).
\end{split}
\end{equation}
When $i=1, \cdots, \mathrm{M}_{1}$ and $j=2\mathrm{M}_{1}+1, \cdots, 2\mathrm{M}_{1}+\mathrm{M}_{2}$,
\begin{equation}\label{5.7}
\begin{split}
P_{ij}&=C_{j-2\mathrm{M}_{1}}^{(5)}(z_{i}).
\end{split}
\end{equation}
When $i=1, \cdots, \mathrm{M}_{1}$ and $j=2\mathrm{M}_{1}+\mathrm{M}_{2}+1, \cdots, 2 \mathrm{M}_{1}+\mathrm{M}_{2}+\mathrm{M}_{3}$,
\begin{equation}\label{5.8}
\begin{split}
P_{ij}&=C_{j-2\mathrm{M}_{1}-\mathrm{M}_{2}}^{(6)}(z_{i}).
\end{split}
\end{equation}
When $i=\mathrm{M}_{1}+1, \cdots, 2\mathrm{M}_{1}$ and $j=1, \cdots, \mathrm{M}_{1}$,
\begin{equation}\label{5.9}
\begin{split}
P_{ij}&=C_{j}^{(7)}(z_{i-\mathrm{M}_{1}}^{*}).
\end{split}
\end{equation}
When $i=\mathrm{M}_{1}+1, \cdots, 2\mathrm{M}_{1}$ and $j=\mathrm{M}_{1}+1, \cdots, 2\mathrm{M}_{1}$,
\begin{equation}\label{5.10}
\begin{split}
P_{ij}&=C_{j-\mathrm{M}_{1}}^{(8)}(z_{i-\mathrm{M}_{1}}^{*}).
\end{split}
\end{equation}
When $i=\mathrm{M}_{1}+1, \cdots, 2\mathrm{M}_{1}$ and $j=2\mathrm{M}_{1}+1, \cdots, 2\mathrm{M}_{1}+\mathrm{M}_{2}$,
\begin{equation}\label{5.11}
\begin{split}
P_{ij}&=C_{j-2\mathrm{M}_{1}}^{(9)}(z_{i-\mathrm{M}_{1}}^{*}).
\end{split}
\end{equation}
When $i=\mathrm{M}_{1}+1, \cdots, 2\mathrm{M}_{1}$ and $j=2\mathrm{M}_{1}+\mathrm{M}_{2}+1, \cdots, 2 \mathrm{M}_{1}+\mathrm{M}_{2}+\mathrm{M}_{3}$,
\begin{equation}\label{5.12}
\begin{split}
P_{ij}=C_{j-2\mathrm{M}_{1}-\mathrm{M}_{2}}^{(10)}(z_{i-\mathrm{M}_{1}}^{*}).
\end{split}
\end{equation}
When $i=2\mathrm{M}_{1}+1, \cdots, 2\mathrm{M}_{1}+\mathrm{M}_{2}$ and $j=1, \cdots, \mathrm{M}_{1}$,
\begin{equation}\label{5.13}
\begin{split}
P_{ij}&=\sum_{m=1}^{\mathrm{M}_{2}}C_{m}^{(1)}(\theta_{i-2\mathrm{M}_{1}})C_{j}^{(4)}(\theta_{m}^{*})
+\sum_{m=1}^{\mathrm{M}_{3}}C_{m}^{(2)}(\theta_{i-2\mathrm{M}_{1}})C_{j}^{(7)}(\xi_{m}^{*}).
\end{split}
\end{equation}
When $i=2\mathrm{M}_{1}+1, \cdots, 2\mathrm{M}_{1}+\mathrm{M}_{2}$ and $j=\mathrm{M}_{1}+1, \cdots, 2\mathrm{M}_{1}$,
\begin{equation}\label{5.14}
\begin{split}
P_{ij}&=\sum_{m=1}^{\mathrm{M}_{2}}C_{m}^{(1)}(\theta_{i-2\mathrm{M}_{1}})C_{j
-\mathrm{M}_{1}}^{(3)}(\theta_{m}^{*})+\sum_{m=1}^{\mathrm{M}_{3}}C_{m}^{(2)}(\theta_{i
-2\mathrm{M}_{1}})C_{j-\mathrm{M}_{1}}^{(8)}(\xi_{m}^{*}).
\end{split}
\end{equation}
When $i=2\mathrm{M}_{1}+1, \cdots, 2\mathrm{M}_{1}+\mathrm{M}_{2}$ and $j=2\mathrm{M}_{1}+1, \cdots, 2\mathrm{M}_{1}+\mathrm{M}_{2}$,
\begin{equation}\label{5.15}
\begin{split}
P_{ij}&=\sum_{m=1}^{\mathrm{M}_{2}}C_{m}^{(1)}(\theta_{i-2\mathrm{M}_{1}})C_{j
-2\mathrm{M}_{1}}^{(5)}(\theta_{m}^{*})+\sum_{m=1}^{\mathrm{M}_{3}}C_{m}^{(2)}(\theta_{i
-2\mathrm{M}_{1}})C_{j-2\mathrm{M}_{1}}^{(9)}(\xi_{m}^{*}).
\end{split}
\end{equation}
When $i=2\mathrm{M}_{1}+1, \cdots, 2\mathrm{M}_{1}+\mathrm{M}_{2}$ and $j=2\mathrm{M}_{1}+\mathrm{M}_{2}+1, \cdots, 2 \mathrm{M}_{1}+\mathrm{M}_{2}+\mathrm{M}_{3}$,
\begin{equation}\label{5.16}
\begin{split}
P_{ij}&=\sum_{m=1}^{\mathrm{M}_{2}}C_{m}^{(1)}(\theta_{i-2\mathrm{M}_{1}})C_{j-2\mathrm{M}_{1}
-\mathrm{M}_{2}}^{(6)}(\theta_{m}^{*})+\sum_{m=1}^{\mathrm{M}_{3}}C_{m}^{(2)}(\theta_{i
-2\mathrm{M}_{1}})C_{j-2\mathrm{M}_{1}-\mathrm{M}_{2}}^{(10)}(\xi_{m}^{*}).
\end{split}
\end{equation}
When $i=2\mathrm{M}_{1}+\mathrm{M}_{2}+1, \cdots, 2 \mathrm{M}_{1}+\mathrm{M}_{2}+\mathrm{M}_{3}$ and $j=1, \cdots, \mathrm{M}_{1}$,
\begin{equation}\label{5.17}
\begin{split}
P_{ij}&=\sum_{m=1}^{\mathrm{M}_{2}}C_{m}^{(1)}(\xi_{i-2\mathrm{M}_{1}
-\mathrm{M}_{2}})C_{j}^{(4)}(\theta_{m}^{*})+\sum_{m=1}^{\mathrm{M}_{3}}C_{m}^{(2)}(\xi_{i
-2\mathrm{M}_{1}-\mathrm{M}_{2}})C_{j}^{(7)}(\xi_{m}^{*}).
\end{split}
\end{equation}
When $i=2\mathrm{M}_{1}+\mathrm{M}_{2}+1, \cdots, 2 \mathrm{M}_{1}+\mathrm{M}_{2}+\mathrm{M}_{3}$ and $j=\mathrm{M}_{1}+1, \cdots, 2\mathrm{M}_{1}$,
\begin{equation}\label{5.18}
\begin{split}
P_{ij}&=\sum_{m=1}^{\mathrm{M}_{2}}C_{m}^{(1)}(\xi_{i-2\mathrm{M}_{1}-\mathrm{M}_{2}})C_{j
-\mathrm{M}_{1}}^{(3)}(\theta_{m}^{*})+\sum_{m=1}^{\mathrm{M}_{3}}C_{m}^{(2)}(\xi_{i
-2\mathrm{M}_{1}-\mathrm{M}_{2}})C_{j-\mathrm{M}_{1}}^{(8)}(\xi_{m}^{*}).
\end{split}
\end{equation}
When $i=2\mathrm{M}_{1}+\mathrm{M}_{2}+1, \cdots, 2 \mathrm{M}_{1}+\mathrm{M}_{2}+\mathrm{M}_{3}$ and $j=2\mathrm{M}_{1}+1, \cdots, 2\mathrm{M}_{1}+\mathrm{M}_{2}$,
\begin{equation}\label{5.19}
\begin{split}
P_{ij}&=\sum_{m=1}^{\mathrm{M}_{2}}C_{m}^{(1)}(\xi_{i-2\mathrm{M}_{1}-\mathrm{M}_{2}})C_{j
-2\mathrm{M}_{1}}^{(5)}(\theta_{m}^{*})+\sum_{m=1}^{\mathrm{M}_{3}}C_{m}^{(2)}(\xi_{i
-2\mathrm{M}_{1}-\mathrm{M}_{2}})C_{j-2\mathrm{M}_{1}}^{(9)}(\xi_{m}^{*}).
\end{split}
\end{equation}
When $i,j=2\mathrm{M}_{1}+\mathrm{M}_{2}+1, \cdots, 2 \mathrm{M}_{1}+\mathrm{M}_{2}+\mathrm{M}_{3}$,
\begin{equation}\label{5.20}
\begin{split}
P_{ij}=\sum_{m=1}^{\mathrm{M}_{2}}C_{m}^{(1)}(\xi_{i-2\mathrm{M}_{1}-\mathrm{M}_{2}})C_{j
-2\mathrm{M}_{1}-\mathrm{M}_{2}}^{(6)}(\theta_{m}^{*})+\sum_{m=1}^{\mathrm{M}_{3}}C_{m}^{(2)}(\xi_{i
-2\mathrm{M}_{1}-\mathrm{M}_{2}})C_{j-2\mathrm{M}_{1}-\mathrm{M}_{2}}^{(10)}(\xi_{m}^{*}).
\end{split}
\end{equation}
For simplicity, we define
\begin{equation}\label{5.21}
\begin{split}
C_{m}^{(1)}(z)&=\frac{\hat{G}_{m}}{z-\theta_{m}^{*}}
+\frac{\mathrm{i}\theta_{m}^{*}}{q_{0}}\frac{\check{G}_{m}}{z+(q_{0}^{2}/\theta_{m}^{*})}, \quad
C_{m}^{(4)}(z)=\frac{\mathrm{i}z_{n}}{q_{0}}\frac{\bar{D}_{m}}{z+(q_{0}^{2}/z_{n})}, \quad
C_{m}^{(7)}(z)=\frac{D_{m}}{z-z_{m}}, \\
C_{m}^{(2)}(z)&=\frac{\hat{F}_{m}}{z-\xi_{m}^{*}}
+\frac{\mathrm{i}\xi_{m}^{*}}{q_{0}}\frac{\check{F}_{m}}{z+(q_{0}^{2}/\xi_{m}^{*})}, \quad
C_{m}^{(8)}(z)=\frac{\mathrm{i}z_{m}^{*}}{q_{0}}\frac{\check{D}_{m}}{z+(q_{0}^{2}/z_{m}^{*})}, \quad
C_{m}^{(9)}(z)=\frac{\bar{G}_{m}}{z+(q_{0}^{2}/\theta_{m})}, \\
C_{m}^{(3)}(z)&=\frac{\hat{D}_{m}}{z-z_{m}^{*}}, \quad C_{m}^{(5)}(z)=\frac{G_{m}}{z-\theta_{m}}, \quad C_{m}^{(6)}(z)=\frac{\bar{F}_{m}}{z+(q_{0}^{2}/\xi_{m})}, \quad
C_{m}^{(10)}(z)=\frac{F_{m}}{z-\xi_{m}}.
\end{split}
\end{equation}
\end{theorem}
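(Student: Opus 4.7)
The plan is to specialize the general formal solution of Theorem~\ref{thm:3} to the reflectionless case (vanishing reflection coefficients) and then convert the resulting finite-dimensional linear algebraic system into the stated determinantal formula via Cramer's rule.

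First, I would set $\bar{\mathbf{L}}(z)\equiv\mathbf{0}$ throughout, which by \eqref{4.3}--\eqref{4.4} is equivalent to requiring $\alpha_j(z)\equiv 0$. All Cauchy integrals in \eqref{4.11}--\eqref{4.15} and in the reconstruction formula \eqref{4.19} then disappear, and $\mathbf{B}$ is determined entirely by its residues at the discrete spectrum. Second, I would identify the minimal set of independent unknowns. The four-fold symmetries \eqref{4.10} of the norming constants reduce the a priori eight residue contributions per quartet $(\varepsilon_m,\varepsilon_m^*,-q_0^2/\varepsilon_m^*,-q_0^2/\varepsilon_m)$ to four vectors $\mathbf{b}_3^+(z_m)$, $\mathbf{b}_1^-(z_m^*)$, $\mathbf{b}_2^+(\theta_m)$, $\mathbf{b}_2^+(\xi_m)$ per quartet.

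Third, I would substitute \eqref{4.14} for $\mathbf{b}_3^-(\theta_{j'}^*)$ into \eqref{4.12}, then evaluate \eqref{4.12} at $z=\theta_{j'}$ and $z=\xi_{l'}$ and \eqref{4.13} at $z=z_{i'}^*$ and $z=\xi_{l'}^*$. This produces a closed linear system of dimension $2\mathrm{M}_1+\mathrm{M}_2+\mathrm{M}_3$ on the scalar $(k{+}1)$-th components of the four families of unknowns. Because every equation is scalar, the coefficient matrix is independent of which component is being extracted; I would write it as $\mathbf{K}=\mathbf{I}-\mathbf{P}$ and match the sixteen blocks of $\mathbf{P}$ against \eqref{5.5}--\eqref{5.20}, identifying the abbreviations $C_m^{(j)}(z)$ of \eqref{5.21} with the coefficients of the unknowns in \eqref{4.12}--\eqref{4.15} after the symmetry substitution \eqref{4.10} has eliminated $\hat{D}_m$, $\check{D}_m$, $\bar{D}_m$, $\hat{G}_m$, $\check{G}_m$, $\bar{G}_m$, $\hat{F}_m$, $\check{F}_m$, $\bar{F}_m$.

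Fourth, I would apply the reconstruction formula \eqref{4.18}. The $z\to\infty$ expansion of the right-hand side of \eqref{4.13} yields
\begin{equation*}
q_k(x,t) \;=\; q_{+,k} \;-\; \eta^T\mathbf{u}_k,
\end{equation*}
where $\mathbf{u}_k$ collects the $(k{+}1)$-th components of $\mathbf{b}_3^+(z_m)$, $\mathbf{b}_1^-(z_m^*)$, $\mathbf{b}_2^+(\theta_m)$, $\mathbf{b}_2^+(\xi_m)$, and the vector $\eta$ of \eqref{5.3} is produced by the $\mathrm{i}\lim_{z\to\infty}z(\cdot)$ operation acting on the residue denominators. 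The entries of $\mathbf{w}_k$ in \eqref{5.4} are then identified as the evaluations of $\mathbf{Y}_+$ at the various discrete points, augmented by the $\mathbf{q}_+^\perp/q_0$ contribution from \eqref{4.12} (the latter is what produces the double sums in rows $i\geq 2\mathrm{M}_1+1$ of $\mathbf{w}_k$). Finally, the block cofactor identity
\begin{equation*}
q_{+,k}-\eta^T\mathbf{K}^{-1}\mathbf{w}_k \;=\; \frac{1}{\det\mathbf{K}}\,\det\!\begin{pmatrix} q_{+,k} & \eta^T \\ \mathbf{w}_k & \mathbf{K} \end{pmatrix}
\end{equation*}
converts the linear-algebra solution into the stated determinantal form \eqref{5.1}.

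The main obstacle is the extensive but mechanical bookkeeping in the third step: each of the sixteen blocks \eqref{5.5}--\eqref{5.20} of $\mathbf{P}$ must be traced to a specific evaluation of \eqref{4.12}--\eqref{4.15}, and the nested substitution of \eqref{4.14} into \eqref{4.12} is what produces the double-sum structure in the last two block-rows. Verifying that the signs, prefactors of $\mathrm{i}z_m/q_0$, $\mathrm{i}\theta_m^*/q_0$, etc.\ line up exactly with \eqref{5.21} is the tedious part; once this is done the argument reduces to Cramer's rule on a reflectionless Riemann--Hilbert problem.
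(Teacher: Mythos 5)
Your proposal is correct and follows essentially the same route as the paper: in the reflectionless case the paper likewise evaluates the formal equations of Theorem~\ref{thm:3} at the discrete eigenvalues, eliminates $\mathbf{b}_{3}^{-}(\theta_{j'}^{*})$ by substitution to produce the double-sum blocks, closes the system as $(\mathbf{I}-\mathbf{P})\mathbf{x}_{k}=\mathbf{w}_{k}$, and applies Cramer's rule together with the reconstruction formula~\eqref{4.19}. Your explicit use of the block cofactor identity is just a cleaner statement of the final substitution step the paper performs implicitly.
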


\begin{proof}
Then in the reflectionless case, by plugging in the appropriate eigenvalues for $k=1,2$, $\bar{k}=3-k$, the following algebraic equations are derived
\begin{equation}\label{5.22}
\begin{split}
\mathbf{b}_{(k+1)2}^{+}(\theta_{j^{\prime}})=(-1)^{k+1}\frac{q_{+,\bar{k}}^{*}}{q_{0}}
+\sum_{m=1}^{\mathrm{M}_{2}}C_{m}^{(1)}(\theta_{j^{\prime}})\mathbf{b}_{(k+1)3}^{-}(\theta_{m}^{*})
+\sum_{m=1}^{\mathrm{M}_{3}}C_{m}^{(2)}(\theta_{j^{\prime}})\mathbf{b}_{(k+1)1}^{-}(\xi_{m}^{*}),
\end{split}
\end{equation}
\begin{equation}\label{5.23}
\begin{split}
\mathbf{b}_{(k+1)2}^{+}(\xi_{l^{\prime}})=(-1)^{k+1}\frac{q_{+,\bar{k}}^{*}}{q_{0}}
+\sum_{m=1}^{\mathrm{M}_{2}}C_{m}^{(1)}(\xi_{l^{\prime}})\mathbf{b}_{(k+1)3}^{-}(\theta_{m}^{*})
+\sum_{m=1}^{\mathrm{M}_{3}}C_{m}^{(2)}(\xi_{l^{\prime}})\mathbf{b}_{(k+1)1}^{-}(\xi_{m}^{*}),
\end{split}
\end{equation}
\begin{equation}\label{5.24}
\begin{split}
\mathbf{b}_{(k+1)1}^{-}(z_{i^{\prime}}^{*})&=-\frac{\mathrm{i}q_{+,k}}{z_{i^{\prime}}^{*}}
+\sum_{m=1}^{\mathrm{M}_{1}}\left[ C_{m}^{(7)}(z_{i^{\prime}}^{*})\mathbf{b}_{(k+1)3}^{+}(z_{m})
+C_{m}^{(8)}(z_{i^{\prime}}^{*})\mathbf{b}_{(k+1)1}^{-}(z_{m}^{*}) \right]   \\
&+\sum_{m=1}^{\mathrm{M}_{2}}C_{m}^{(9)}(z_{i^{\prime}}^{*})\mathbf{b}_{(k+1)2}^{+}(\theta_{m})
+\sum_{m=1}^{\mathrm{M}_{3}}C_{m}^{(10)}(z_{i^{\prime}}^{*})\mathbf{b}_{(k+1)2}^{+}(\xi_{m}),
\end{split}
\end{equation}
\begin{equation}\label{5.25}
\begin{split}
\mathbf{b}_{(k+1)1}^{-}(\xi_{l^{\prime}}^{*})&=-\frac{\mathrm{i}q_{+,k}}{\xi_{l^{\prime}}^{*}}
+\sum_{m=1}^{\mathrm{M}_{1}}\left[ C_{m}^{(7)}(\xi_{l^{\prime}}^{*})\mathbf{b}_{(k+1)3}^{+}(z_{m})
+C_{m}^{(8)}(\xi_{l^{\prime}}^{*})\mathbf{b}_{(k+1)1}^{-}(z_{m}^{*}) \right]  \\
&+\sum_{m=1}^{\mathrm{M}_{2}}C_{m}^{(9)}(\xi_{l^{\prime}}^{*})\mathbf{b}_{(k+1)2}^{+}(\theta_{m})
+\sum_{m=1}^{\mathrm{M}_{3}}C_{m}^{(10)}(\xi_{l^{\prime}}^{*})\mathbf{b}_{(k+1)2}^{+}(\xi_{m}),
\end{split}
\end{equation}
\begin{equation}\label{5.26}
\begin{split}
\mathbf{b}_{(k+1)3}^{-}(\theta_{j^{\prime}}^{*})&=\frac{q_{+,k}}{q_{0}}+\sum_{m=1}^{\mathrm{M}_{1}}
\left[ C_{m}^{(3)}(\theta_{j^{\prime}}^{*})\mathbf{b}_{(k+1)1}^{-}(z_{m}^{*})
+C_{m}^{(4)}(\theta_{j^{\prime}}^{*})\mathbf{b}_{(k+1)3}^{+}(z_{m}) \right]  \\
&+\sum_{m=1}^{\mathrm{M}_{2}}C_{m}^{(5)}(\theta_{j^{\prime}}^{*})\mathbf{b}_{(k+1)2}^{+}(\theta_{m})
+\sum_{m=1}^{\mathrm{M}_{3}}C_{m}^{(6)}(\theta_{j^{\prime}}^{*})\mathbf{b}_{(k+1)2}^{+}(\xi_{m}),
\end{split}
\end{equation}
\begin{equation}\label{5.27}
\begin{split}
\mathbf{b}_{(k+1)3}^{-}(z_{i^{\prime}})&=\frac{q_{+,k}}{q_{0}}+\sum_{m=1}^{\mathrm{M}_{1}}
\left[ C_{m}^{(3)}(z_{i^{\prime}})\mathbf{b}_{(k+1)1}^{-}(z_{m}^{*})
+C_{m}^{(4)}(z_{i^{\prime}})\mathbf{b}_{(k+1)3}^{+}(z_{n}) \right]  \\
&+\sum_{m=1}^{\mathrm{M}_{2}}C_{m}^{(5)}(z_{i^{\prime}})\mathbf{b}_{(k+1)2}^{+}(\theta_{m})
+\sum_{m=1}^{\mathrm{M}_{3}}C_{m}^{(6)}(z_{i^{\prime}})\mathbf{b}_{(k+1)2}^{+}(\xi_{m}),
\end{split}
\end{equation}
where $i^{\prime}=1,\cdots,\mathrm{M}_{1}$, $j^{\prime}=1,\cdots,\mathrm{M}_{2}$ and $l^{\prime}=1,\cdots,\mathrm{M}_{3}$. Next, we obtain the following for $z=\theta_{j^{\prime}}$ and $z=\xi_{l^{\prime}}$
\begin{equation}\label{5.28}
\begin{split}
\mathbf{b}_{(k+1)2}^{+}(z)&=(-1)^{k+1}\frac{q_{+,\bar{k}}^{*}}{q_{0}}
+\frac{q_{+,k}}{q_{0}}\sum_{m=1}^{\mathrm{M}_{2}}C_{m}^{(1)}(z)-\mathrm{i}q_{+,k}\sum_{m=1}^{\mathrm{M}_{3}}
\frac{C_{m}^{(2)}(z)}{\xi_{m}^{*}}+\mathbf{\Lambda}_{3}+\mathbf{\Lambda}_{4} \\
&+\sum_{m=1}^{\mathrm{M}_{2}}\sum_{m^{\prime}=1}^{\mathrm{M}_{1}}C_{m}^{(1)}(z)
\left[ C_{m^{\prime}}^{(3)}(\theta_{m}^{*})\mathbf{b}_{(k+1)1}^{-}(z_{m^{\prime}}^{*})
+C_{m^{\prime}}^{(4)}(\theta_{m}^{*})\mathbf{b}_{(k+1)3}^{+}(z_{n^{\prime}})\right]  \\
&+\sum_{m=1}^{\mathrm{M}_{3}}\sum_{m^{\prime}=1}^{\mathrm{M}_{1}}C_{m}^{(2)}(z)
\left[ C_{m^{\prime}}^{(7)}(\xi_{m}^{*})\mathbf{b}_{(k+1)3}^{+}(z_{m^{\prime}})
+C_{m^{\prime}}^{(8)}(\xi_{m}^{*})\mathbf{b}_{(k+1)1}^{-}(z_{m^{\prime}}^{*})\right],
\end{split}
\end{equation}
where
\begin{equation}\label{5.29}
\begin{split}
\mathbf{\Lambda}_{3}=\sum_{m=1}^{\mathrm{M}_{2}}\sum_{m^{\prime}=1}^{\mathrm{M}_{2}}C_{m}^{(1)}(z)
C_{m^{\prime}}^{(5)}(\theta_{m}^{*})\mathbf{b}_{(k+1)2}^{+}(\theta_{m^{\prime}})
+\sum_{m=1}^{\mathrm{M}_{2}}\sum_{m^{\prime}=1}^{\mathrm{M}_{3}}C_{m}^{(1)}(z)
C_{m^{\prime}}^{(6)}(\theta_{m}^{*})\mathbf{b}_{(k+1)2}^{+}(\xi_{m^{\prime}}),
\end{split}
\end{equation}
and
\begin{equation}\label{5.30}
\begin{split}
\mathbf{\Lambda}_{4}&=\sum_{m=1}^{\mathrm{M}_{3}}\sum_{m^{\prime}=1}^{\mathrm{M}_{2}}C_{m}^{(2)}(z)
C_{m^{\prime}}^{(9)}(\xi_{m}^{*})\mathbf{b}_{(k+1)2}^{+}(\theta_{m^{\prime}})
+\sum_{m=1}^{\mathrm{M}_{3}}\sum_{m^{\prime}=1}^{\mathrm{M}_{3}}C_{m}^{(2)}(z)
C_{m^{\prime}}^{(10)}(\xi_{m}^{*})\mathbf{b}_{(k+1)2}^{+}(\xi_{m^{\prime}}).
\end{split}
\end{equation}
Then, equations~\eqref{5.24},~\eqref{5.27} and~\eqref{5.28} comprise closed linear system of linear equations. These system may be written $(\mathbf{I}-\mathbf{P})\mathbf{x}_{k}=\mathbf{w}_{k}$, where $\mathbf{x}_{k}=(x_{k1},\cdots,x_{k(2\mathrm{M}_{1}+\mathrm{M}_{2}+\mathrm{M}_{3})})^{T}$ and
\begin{align}\label{5.31}
\begin{split}
x_{kj}=\left\{\begin{array}{ll}
\mathbf{b}_{(k+1)3}^{+}(z_{j}),  &j=1, \cdots, \mathrm{M}_{1}, \\
\mathbf{b}_{(k+1)1}^{-}(z_{j-\mathrm{M}_{1}}^{*}), & j=\mathrm{M}_{1}+1, \cdots, 2 \mathrm{M}_{1}, \\
\mathbf{b}_{(k+1)2}^{+}(\theta_{j-2\mathrm{M}_{1}}), & j=2\mathrm{M}_{1}+1, \cdots, 2 \mathrm{M}_{1}+\mathrm{M}_{2}, \\
\mathbf{b}_{(k+1)2}^{+}(\xi_{j-2\mathrm{M}_{1}-\mathrm{M}_{2}}), & j=2\mathrm{M}_{1}+\mathrm{M}_{2}+1, \cdots, 2 \mathrm{M}_{1}+\mathrm{M}_{2}+\mathrm{M}_{3},
\end{array}\right.
\end{split}
\end{align}
Using Cramer's rule, then
\begin{equation}\label{5.32}
\begin{split}
x_{k j}=\frac{\operatorname{det}\mathbf{K}_{kj}^{\text{aug}}}{\operatorname{det}\mathbf{K}},
\quad j=1, \cdots, 2\mathrm{M}_{1}+\mathrm{M}_{2}+\mathrm{M}_{3}, \quad k=1,2,
\end{split}
\end{equation}
where $\mathbf{K}_{kj}^{\text{aug}}=(\mathbf{K}_{1}, \cdots, \mathbf{K}_{j-1}, \mathbf{w}_{k}, \mathbf{K}_{j+1}, \cdots, \mathbf{K}_{2\mathrm{M}_{1}+\mathrm{M}_{2}+\mathrm{M}_{3}}) $. By substituting the obtained values into the reconstruction formula \eqref{4.19} and using the~\eqref{5.3} of $\eta_{j}(x, t)$, we can achieve the desired results.
\end{proof}

In the reflectionless case, the trace formulae have simpler expressions
\begin{subequations}\label{5.33}
\begin{align}
s_{11}(z)&=
\prod_{m=1}^{\mathrm{M}_{1}}\frac{z-z_{m}}{z-z_{m}^{*}}\frac{z+(q_{0}^{2}/z_{m}^{*})}{z+(q_{0}^{2}/z_{m})}
\prod_{m=1}^{\mathrm{M}_{2}}\frac{z+(q_{0}^{2}/\theta_{m})}{z+(q_{0}^{2}/\theta_{m}^{*})}
\prod_{m=1}^{\mathrm{M}_{3}}\frac{z-\xi_{m}}{z-\xi_{m}^{*}}, \quad z\in\mathbb{U}_{1}, \\
h_{22}(z)&=\mathrm{e}^{-\mathrm{i}\Delta\delta}\prod_{m=1}^{\mathrm{M}_{2}}\frac{z-\theta_{m}}{z-\theta_{m}^{*}}
\frac{z+(q_{0}^{2}/\theta_{m})}{z+(q_{0}^{2}/\theta_{m}^{*})}
\prod_{m=1}^{\mathrm{M}_{3}}\frac{z-\xi_{m}}{z-\xi_{m}^{*}}
\frac{z+(q_{0}^{2}/\xi_{m})}{z+(q_{0}^{2}/\xi_{m}^{*})}, \quad z\in\mathbb{U}^{+},
\end{align}
\end{subequations}
where $M_{1}$, $M_{2}$ and $M_{3}$ represent the number of three kinds of discrete eigenvalues respectively. The different combinations of the three types of eigenvalues are studied below. Subsequently, the solutions~\eqref{5.1} for parameter variations exhibit distinct wave structures. Then, we explicitly five wave structures for the solutions of the coupled LPD equation~\eqref{1.1} with NBCS.

\begin{figure}[htb]
    \centering
    \begin{tabular}{ccc}
\includegraphics[width=0.45\textwidth]{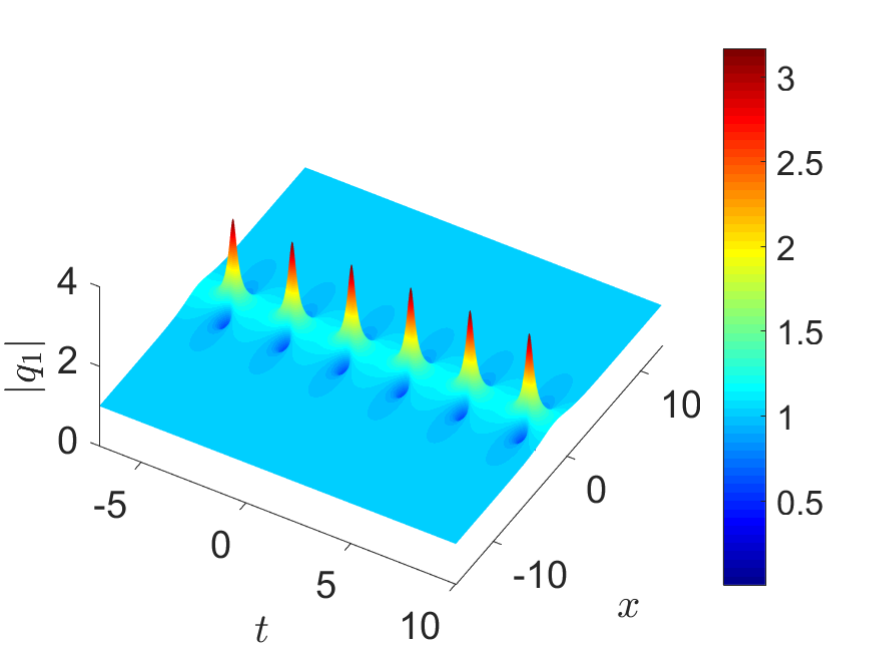} &
\includegraphics[width=0.45\textwidth]{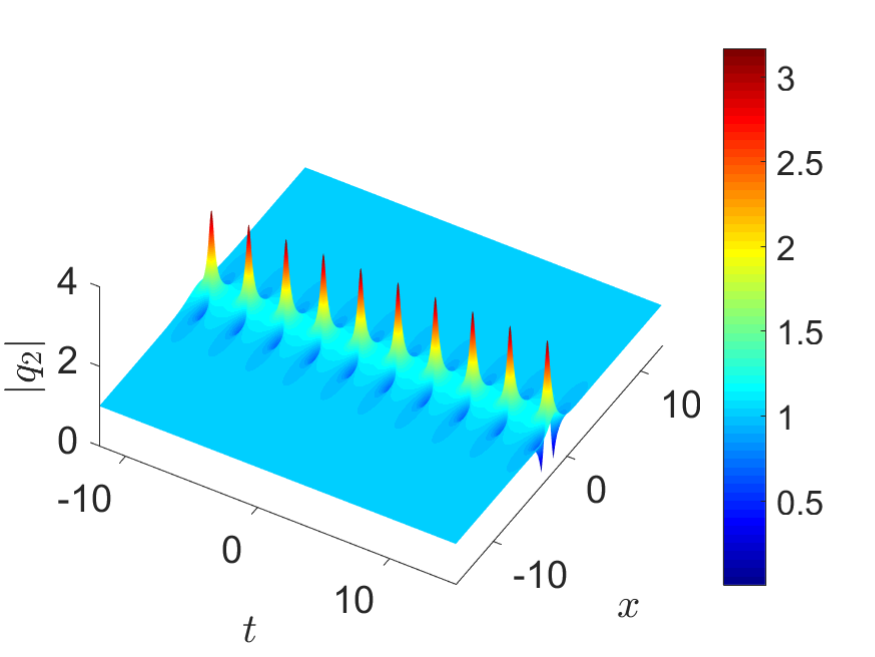} &
	\end{tabular}
\caption{One breather solution $q_{1}$ and $q_{2}$ by taking $\mathbf{q}_{+}=(\sqrt{2}/2,-\sqrt{2}/2)^{T},\sigma=1,d_{1}=\exp(-1.3+0.7\mathrm{i}),z_{1}=1.5\exp(\mathrm{i}\pi/2)$.
}
\label{fig:1}
\end{figure}

\subsection{Different types of soliton solutions}
\label{s:Pure soliton solutions:1}

Assuming $M_{1}=1$ and $M_{2}=M_{3}=0$ and considering
\begin{equation}\label{5.34}
\begin{split}
d_{1}=\exp(p_{1}+p_{2}\mathrm{i}), \quad  z_{1}=p_{3}\exp(\beta_{1}\mathrm{i}\pi), \quad
p_{1},p_{2},p_{3},\beta_{1}\in\mathbb{R},
\end{split}
\end{equation}
we utilize Theorem~\ref{thm:5} to derive one breather solution. From Fig.~\ref{fig:1}, one breather solution for two components $q_{1}$ and $q_{2}$ consist of periodically an upward peak and two downward valleys.

Assuming $M_{2}=1$ and $M_{1}=M_{3}=0$ and considering
\begin{equation}\label{5.38}
\begin{split}
g_{1}=\exp(p_{4}+p_{5}\mathrm{i}), \quad  \theta_{1}=p_{6}\exp(\beta_{2}\mathrm{i}\pi), \quad
p_{4},p_{5},p_{6},\beta_{2}\in\mathbb{R},
\end{split}
\end{equation}
we obtain one kink wave and one bright soliton solution. Fig.~\ref{fig:2} gives one kink wave solution $q_{1}$ and one bright soliton solution $q_{2}$.

\begin{figure}[htb]
\centering
\begin{tabular}{ccc}
\includegraphics[width=0.45\textwidth]{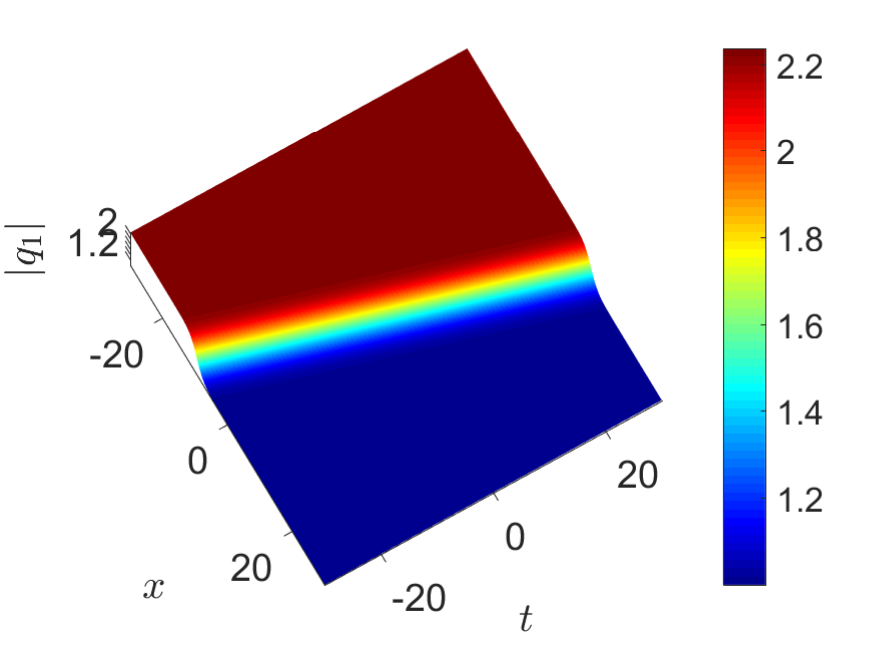} &
\includegraphics[width=0.45\textwidth]{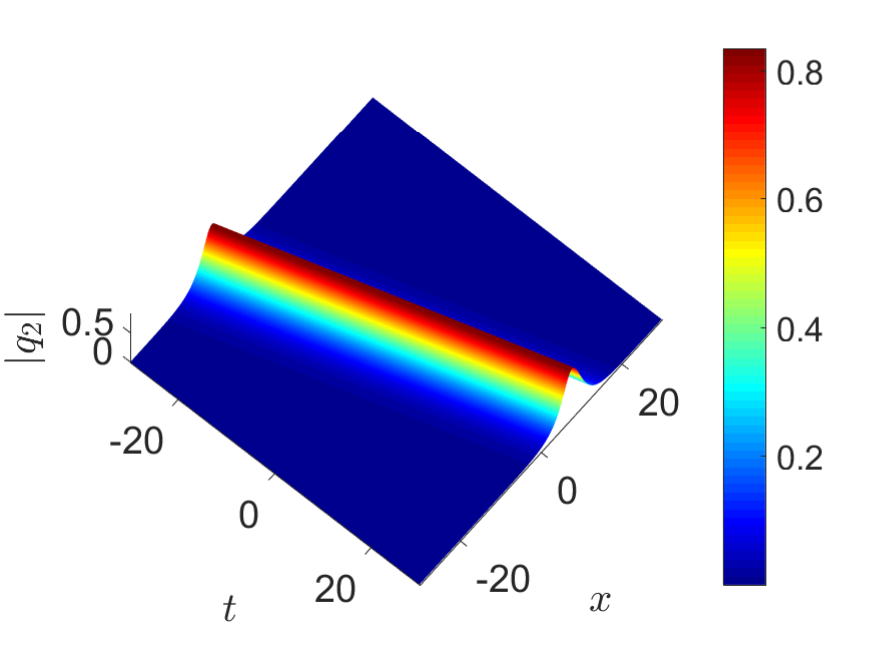} &
	\end{tabular}
\caption{One kink wave solution $q_{1}$ and one bright soliton solution $q_{2}$ by taking $\mathbf{q}_{+}=(1,0)^{T},\sigma=1,g_{1}=\exp(-0.6+\mathrm{i}),\theta_{1}=1.6\exp(\mathrm{i}\pi/4)$.
}
\label{fig:2}
\end{figure}

Assuming $M_{3}=1$ and $M_{1}=M_{2}=0$ and considering
\begin{equation}\label{5.41}
\begin{split}
f_{1}=\exp(p_{7}+p_{8}\mathrm{i}), \quad  \xi_{1}=p_{9}\exp(\beta_{3}\mathrm{i}\pi), \quad
p_{7},p_{8},p_{9},\beta_{3}\in\mathbb{R},
\end{split}
\end{equation}
we derive one dark soliton and one bright soliton solution. Fig.~\ref{fig:3} shows one dark soliton solution $q_{1}$ and one bright soliton solution $q_{2}$.

\begin{figure}[htb]
\centering
\begin{tabular}{ccc}
\includegraphics[width=0.45\textwidth]{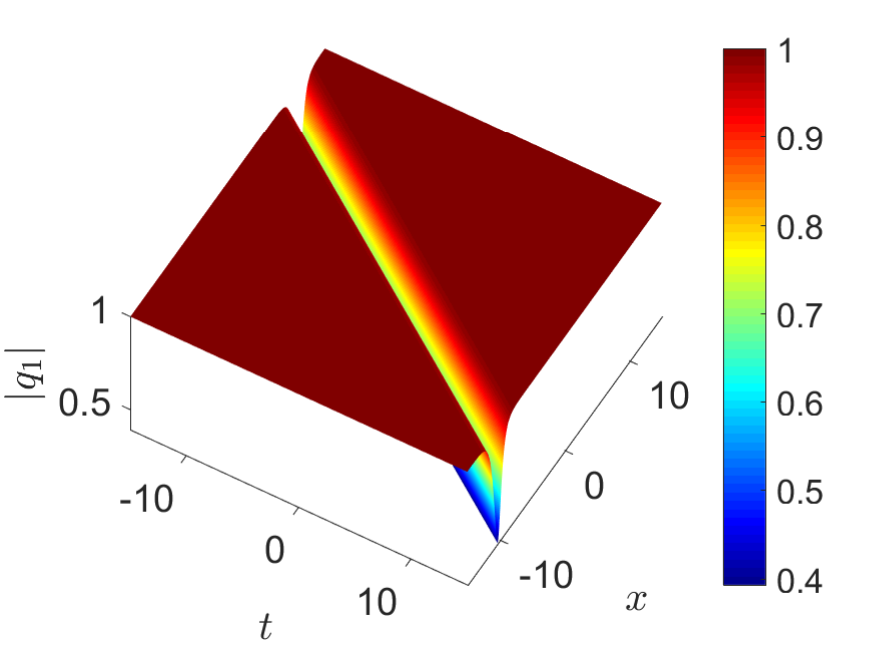} &
\includegraphics[width=0.45\textwidth]{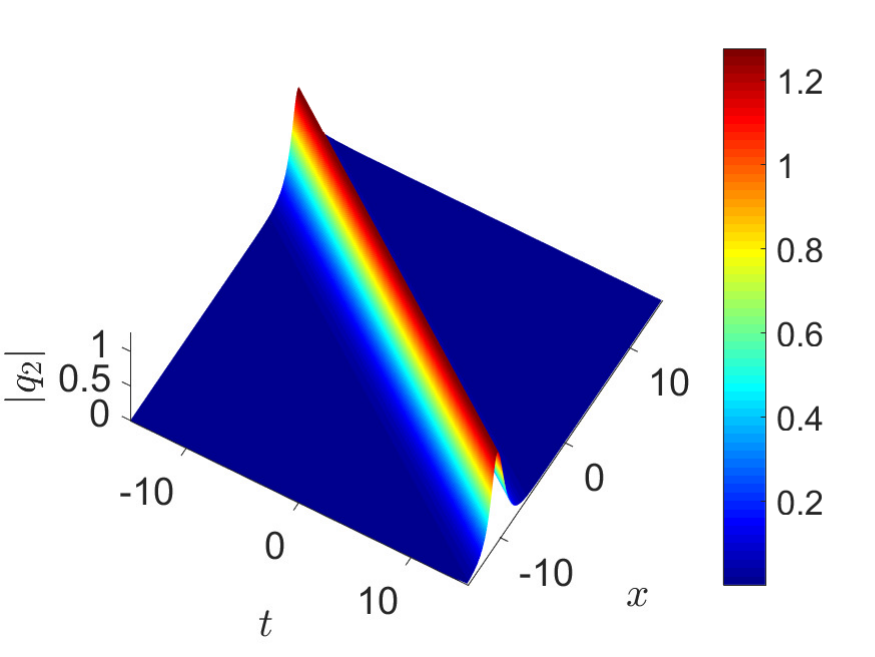} &
	\end{tabular}
\caption{One dark soliton solution $q_{1}$ and one bright soliton solution $q_{2}$ by taking $\mathbf{q}_{+}=(1,0)^{T},\sigma=1,f_{1}=\exp(0.4+\mathrm{i}),\xi_{1}=1.5\exp(\mathrm{i}\pi/4)$.
}
\label{fig:3}
\end{figure}

Assuming $M_{1}=0$ and $M_{2}=M_{3}=1$ and considering
\begin{equation}\label{5.44}
\begin{split}
g_{1}&=\exp(p_{4}+p_{5}\mathrm{i}), \quad  \theta_{1}=p_{6}\exp(\beta_{2}\mathrm{i}\pi), \quad
p_{4},p_{5},p_{6},\beta_{2}\in\mathbb{R}, \\
f_{1}&=\exp(p_{7}+p_{8}\mathrm{i}), \quad  \xi_{1}=p_{9}\exp(\beta_{3}\mathrm{i}\pi), \quad
p_{7},p_{8},p_{9},\beta_{3}\in\mathbb{R},
\end{split}
\end{equation}
we find the hybrid solution between one kink wave and one dark soliton, two bright solitons solution. In Fig.~\ref{fig:4}, $q_{1}$ demonstrates the interaction of one kink wave and one dark soliton. In Fig.~\ref{fig:4}, $q_{2}$ analyzes the collision between two bright solitons. Hybrid solution between one kink wave and one dark soliton $q_{1}$ and two bright solitons soliton $q_{2}$ are described by three-dimensional graph.

\begin{figure}[htb]
\centering
\begin{tabular}{ccc}
\includegraphics[width=0.45\textwidth]{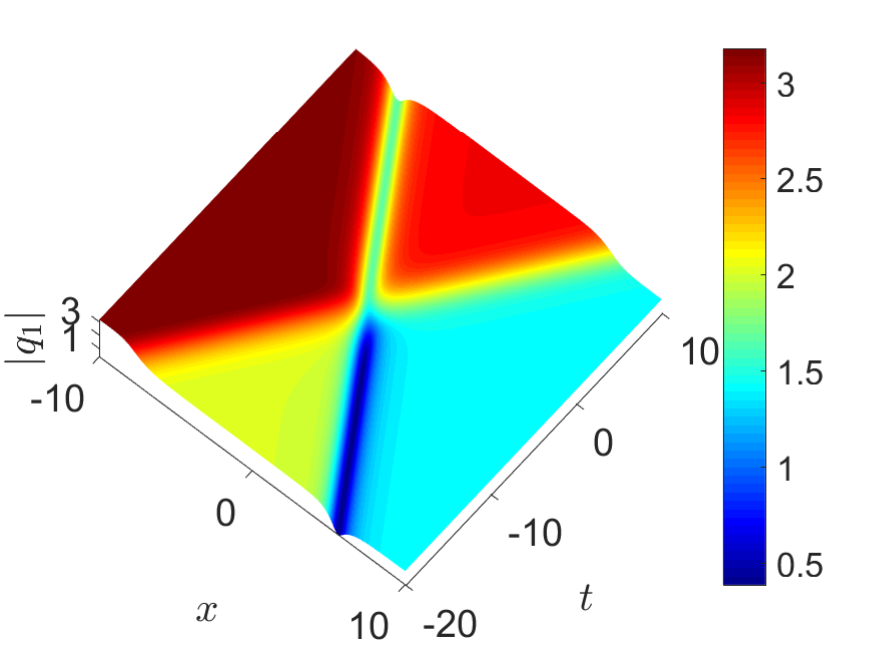} &
\includegraphics[width=0.45\textwidth]{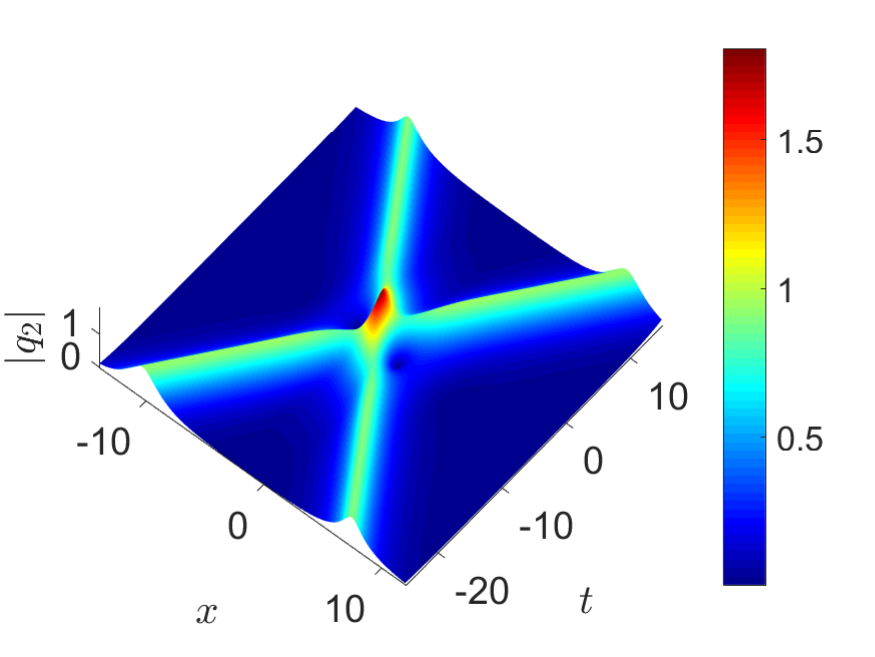} &
	\end{tabular}
\caption{Hybrid solution between one kink wave and one dark soliton $q_{1}$ and two bright solitons solution $q_{2}$ by taking $\mathbf{q}_{+}=(1,0)^{T},\sigma=1,g_{1}=\exp(1+\mathrm{i}),
\theta_{1}=1.4\exp(\mathrm{i}\pi/4),f_{1}=\exp(1.8+\mathrm{i}),\xi_{1}=0.9\exp(-\mathrm{i}\pi/4)$.
}
\label{fig:4}
\end{figure}

Assuming $M_{1}=M_{2}=M_{3}=1$ and considering
\begin{equation}\label{5.49}
\begin{split}
d_{1}&=\exp(p_{1}+p_{2}\mathrm{i}), \quad g_{1}=\exp(p_{4}+p_{5}\mathrm{i}), \quad f_{1}=\exp(p_{7}+p_{8}\mathrm{i}), \quad  \beta_{1},\beta_{2},\beta_{3}\in\mathbb{R}, \\
z_{1}&=p_{3}\exp(\beta_{1}\mathrm{i}\pi), \quad \theta_{1}=p_{6}\exp(\beta_{2}\mathrm{i}\pi), \quad
\xi_{1}=p_{9}\exp(\beta_{3}\mathrm{i}\pi), \quad p_{i}\in\mathbb{R} (i=1,\cdots,9),
\end{split}
\end{equation}
we obtain the hybrid solution between one breather, one bright soliton and one dark soliton, hybrid solution between one breather and two dark solitons. In Fig.~\ref{fig:5}, $q_{1}$ demonstrates the interaction of one breather, one bright soliton and one dark soliton. In Fig.~\ref{fig:5}, $q_{2}$ analyzes the collision between one breather and two dark solitons. Hybrid solution between one breather, one bright soliton and one dark soliton $q_{1}$ and hybrid solution between one breather and two dark solitons $q_{2}$ are described by three-dimensional graph.

\begin{figure}[htb]
\centering
\begin{tabular}{ccc}
\includegraphics[width=0.45\textwidth]{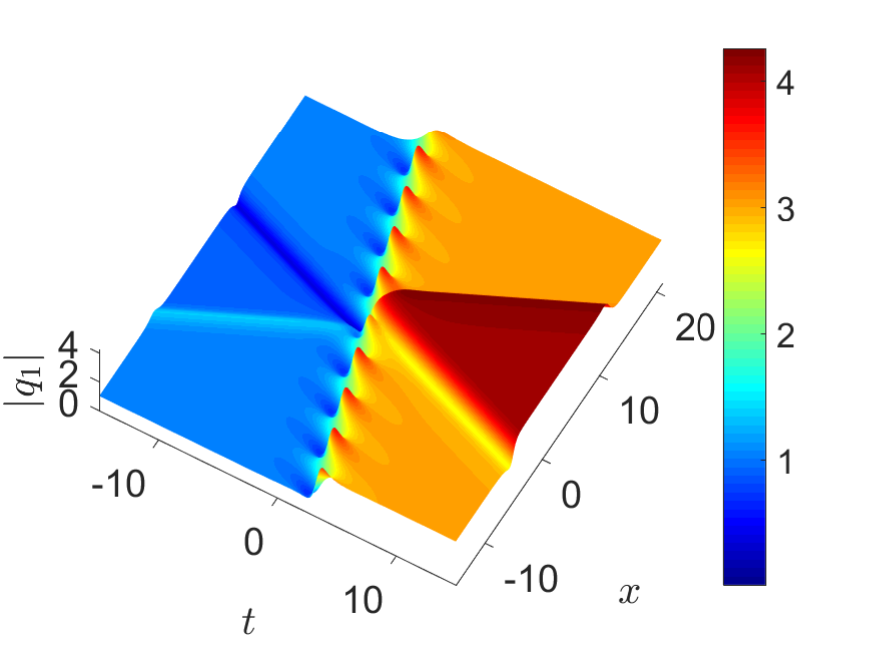} &
\includegraphics[width=0.45\textwidth]{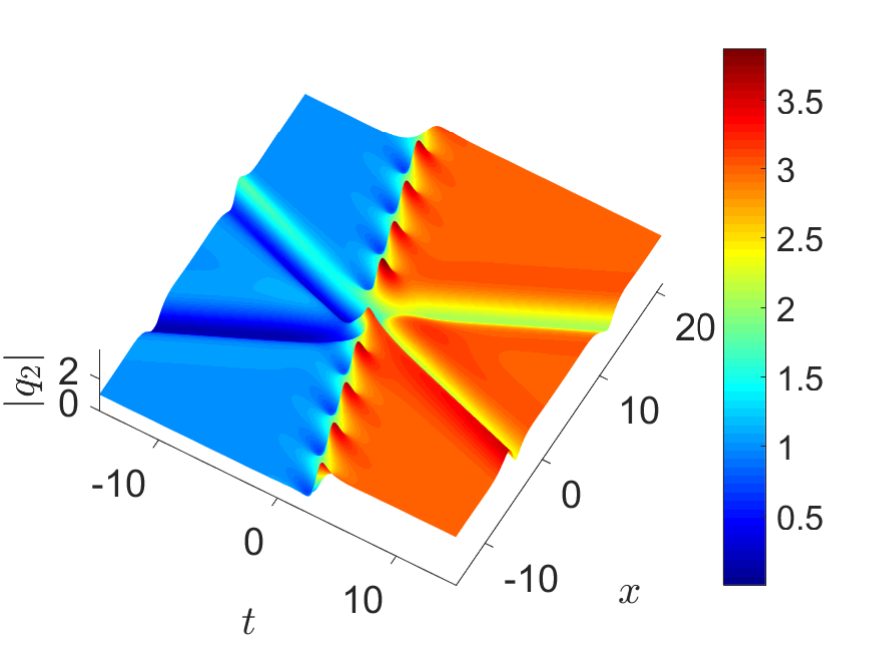} &
	\end{tabular}
\caption{Hybrid solution between one breather, one bright soliton and one dark soliton $q_{1}$ and the hybrid solution between one breather and two dark solitons $q_{2}$ by taking $\mathbf{q}_{+}=(\sqrt{2}/2,-\sqrt{2}/2)^{T},\sigma=1,d_{1}=\exp(-0.1+2.3\mathrm{i}),z_{1}=1.4\exp(2.26\mathrm{i}\pi),g_{1}=\exp(2.3-\mathrm{i}),
\theta_{1}=1.2\exp(\mathrm{i}\pi/4),f_{1}=\exp(1.3-0.25\mathrm{i}),\xi_{1}=\exp(\mathrm{i}\pi/4)$.
}
\label{fig:5}
\end{figure}

\section{Conclusion}
\label{s:Conclusion}

We have developed the IST for the coupled LPD equation~\eqref{1.1} with NZBC and obtained some new results. The analyticity of the Jost eigenfunctions and scattering coefficients is explored, along with specific potential conditions ensuring this analytic behavior. Novel analytical eigenfunctions for the coupled LPD equation satisfy two symmetry relations, which are then employed to rigorously characterize the discrete spectrum. The discrete spectrum produces discrete eigenvalues in three different cases, corresponding to a variety of different types of soliton solutions, and their soliton interaction characteristics are illustrated graphically.

In recent years, there has been a surge of captivating new applications of the IST in various experimental scenarios. The IST is employed to address the issue of obtaining analytical solutions for the nontrivial background of the focusing NLS equation on significant statistical rogue wave events. A novel rogue wave classification approach is introduced, and it is analyzed using numerical IST techniques~\cite{E1}. The IST was used to exploit exact analytical tools for generating dark soliton arrays on-demand~\cite{E2}. Many scholars have conducted more extensive research on the IST~\cite{E3,E4,E5}. However, the study of the rogue wave solutions of the coupling equation by means of the IST has not been studied, so the study of this problem can be considered.

The RH problem has proven to be highly efficient in generating soliton solutions. Recently, it has been expanded to tackle a wide range of initial-boundary value problems for continuous integrable equations on half-lines and finite intervals~\cite{E6}. At the same time, the RH problem of non-parallel boundary conditions at infinity~\cite{E7} can also be considered. It is crucial to establish the link between various methods in order to demonstrate the dynamic properties of soliton solutions. It is another interesting topic for future study to link RH problems to generalized integrable counterparts, including integrable couplings, super hierarchies, super-symmetric integrable equations and fractional spacetime analogous equations.

Practically, we anticipate that the findings in this paper will have utility in understanding recent experiments~\cite{H8} involving plasma physics, fluids, solid-state physics, plasmas, quantum mechanics and Bose-Einstein condensation. Future directions in this area might explore extending the IST to study significant boundary conditions in mathematical and physical problems, such as more complex boundary conditions that depend on the spatial variable. Furthermore, we hold an even more ambitious perspective that the forefront of our exploration resides in extending the scope of the IST to encompass potentials exhibiting gradual decay. These conditions are intricately linked to the evolving framework of breather and rogue wave solutions, marking a significant advancement in our emergent theory. It has witnessed a significant revival, the prosperity of papers and new exciting applications. This trend bodes positively for the times ahead.

\section*{Acknowledgement}
This work was supported by the National Natural Science Foundation of China (Grant Nos. 11371326, 11975145 and 12271488).

\section*{Conflict of interests}
The authors declare that there is no conflict of interests regarding the research effort and the publication of this paper.

\section*{Data availability statements}
All data generated or analyzed during this study are included in this published article.

\section*{ORCID}
{\setlength{\parindent}{0cm}
Peng-Fei Han: https://orcid.org/0000-0003-1164-5819 \\
Yi Zhang: https://orcid.org/0000-0002-8483-4349}

%%%%%%%%%%%%%%%%%%%%%%%%%%%%%%%%%%%%%%%%%%%%%%%%%%%%%%%%%%%%%%%%%%%%%%%%%%%%%%%%%%%%%%%%%%%%%
References

%\input{main.bbl}

%%%%%%%%%%%%%%%%%%%%%%%%%%%%%%%%%%%%%%%%%%%%%%%%%%%%%%%%%%%%%%%%%%%%%%%%%%%%%%%%%%%%%%%%%%%%%

\end{document}